\newcolumntype{C}[1]{>{\centering\arraybackslash}m{#1}}
\newtheorem{theorem}{\textbf{Theorem}}
\newtheorem*{process*}{\textbf{Process}}
\newtheorem{lemma}{\textbf{Lemma}}
\newtheorem{claim}{\textbf{Claim}}
\newtheorem{corollary}{\textbf{Corollary}}
\theoremstyle{definition}
\newtheorem{problem}{\textbf{Problem}}
\newtheorem{definition}{\textbf{Definition}}
\DeclareMathOperator{\T}{\mathcal{T}}
\DeclareMathOperator*{\argmax}{arg\,max}
\DeclareMathOperator*{\argmin}{arg\,min}
\DeclareMathOperator{\dis}{dis}
\begin{document}
%
\title{An Efficient Randomized Algorithm for Rumor Blocking in Online Social Networks}
%
%
%
%

\author{
	Guangmo (Amo)~Tong,~\IEEEmembership{Student Member,~IEEE,}
	Weili~Wu,~\IEEEmembership{Member,~IEEE,}
	Ling~Guo,
	Deying~Li,
	Cong~Liu,~\IEEEmembership{Member,~IEEE,}
	Bin~Liu,
	and~Ding-Zhu~Du
	\IEEEcompsocitemizethanks{
		\IEEEcompsocthanksitem G. Tong, W. Wu, C. Liu and D.-Z. Du are with the Department of Computer Science Erik
		Jonsson School of Engineering and Computer Science The University
		of Texas at Dallas, 800 W. Campbell Road; MS EC31 Richardson, TX
		75080 U.S.A.\protect\\
		E-mail: guangmo.tong@utdallas.edu
		\IEEEcompsocthanksitem L. Guo and D. Li are with the Rennin University.
		\IEEEcompsocthanksitem B. Liu is with the Ocean University of China.}
\thanks{Manuscript received April 19, 2005; revised August 26, 2015.}
}

%
%

\markboth{Journal of \LaTeX\ Class Files,~Vol.~14, No.~8, August~2015}%
{Shell \MakeLowercase{\textit{et al.}}: Bare Demo of IEEEtran.cls for Computer Society Journals}
%



\IEEEtitleabstractindextext{%
\begin{abstract}
Social networks allow rapid spread of ideas and innovations while negative information can also propagate widely. When a user receives two opposing opinions, they tend to believe the one arrives first. Therefore, once misinformation or rumor is detected, one containment method is to introduce a positive cascade competing against the rumor. Given a budget $k$, the rumor blocking problem asks for $k$ seed users to trigger the spread of a positive cascade such that the number of the users who are not influenced by rumor can be maximized. The prior works have shown that the rumor blocking problem can be approximated within a factor of $(1-1/e)$ by a classic greedy algorithm combined with Monte Carlo simulation. Unfortunately, the Monte Carlo simulation based methods are time consuming and the existing algorithms either trade performance guarantees for practical efficiency or vice versa. In this paper, we present a randomized approximation algorithm which is provably superior to the state-of-the art methods with respect to running time. The superiority of the proposed algorithm is demonstrated by experiments done on both the real-world and synthetic social networks. 
\end{abstract}

\begin{IEEEkeywords}
Social networks, rumor blocking, approximation algorithm.
\end{IEEEkeywords}}

\maketitle

\IEEEdisplaynontitleabstractindextext

%
\IEEEpeerreviewmaketitle

\IEEEraisesectionheading{\section{Introduction}\label{sec:introduction}}

\IEEEPARstart{T}he tremendous advance of the Internet of things (loT) is making online social networks be the most common platform for communication. There have been totally 44.5 million users on Twitter and 1.4 million monthly active users on Facebook. Admittedly online social networks are greatly beneficial, they also lead the widespread of negative information.  Such negative influence, namely misinformation and rumor, has been a cause of concern as it renders the network unreliable and may cause further panic in population. For example, the misinformation of swine flu on Twitter threw the people in Texas and Kansas into panic in 2009 \cite{morozov2009swine}, and the endless report of Ebola in 2014 has caused unnecessary worldwide terror.  Therefore, effective strategies for rumor containment are crucial for social networks and it has been a hot topic in the last decades.

In a social network, information and innovations diffuse from user to user via influence cascades where each cascade starts to spread with a set of seed users. When two cascades holding opposing views reach a certain user, the user is likely to trust the one arriving first. As an example, if the international institutions like WHO would have posted clarification for swine flu, the users who have read such posts will not be misled by the misinformation. Therefore, a common method for rumor blocking is to generate a corresponding positive cascade that competes against the rumor. Due to the expense of deploying seed nodes, there is a budget $k$ for the positive cascade, and naturally one selects the $k$ positive seed nodes which are able to limit the spread of rumor in maximum, which is referred as the \textit{least cost rumor blocking problem}. 

The recent study of influence diffusion in social networks can be tracked back to D. Kempe \cite{kempe2003maximizing} where the well-known influence maximization problem is formulated. In that seminal work, two fundamental probabilistic operational models, independent cascade (IC) model and linear threshold (LT) model are developed. Based on such models, many influence related problems are then proposed and studied. The problem of rumor blocking is also considered in such models or in their variants. Most existing approaches utilize the submodularity of the objective function. A set function $f$ over a ground $V$ is set to be submodular if 
\begin{equation}
\label{eq: submodular}
f(V_1 \cup \{v\})-f(V_1) \geq f(V_2 \cup \{v\})-f(V_2)
\end{equation}
for any $V_1 \subset V_2 \subseteq V$ and $v \in V \setminus V_2$. It turns out that the number of non-rumor-activated users is a monotone increasing submodular function and consequently the classic hill-climbing algorithm provides a $(1-1/e)$-approximation \cite{nemhauser1978analysis}. For example, X. He \textit{et al.} \cite{he2012influence} formulate the influence blocking maximization problem and present a $(1-1/e)$-approximation algorithm for the competitive linear threshold model, Budak \textit{et al.} \cite{budak2011limiting} propose several competitive models and show a greedy algorithm with the same approximation ratio under the campaign-oblivious independent cascade model, and, Fan \textit{et al.} \cite{fan2013least} provide a $(1-1/e)$-approximation algorithm for the rumor blocking problem under the opportunistic one-active-one model. 

Assuming that the objective function can be efficiently\footnote{Usually this is referred to the polynomial-time computability.} calculated for any input, the greedy algorithm is simple and effective for most of the submodular maximization problems. Unfortunately, for the influence related optimization problems, the objective functions are often very complicated to compute due to the randomness of the probabilistic diffusion model. Such a scenario is first observed by W. Chen \cite{chen2010scalable} where it is shown that computing the exact value of the expected influence is \#P-hard. In order to circumvent such difficulty, the prior works employ the Monte Carlo simulation to estimate the objective value for any input. However, such a method is computationally expensive. It turns out that the greedy algorithm with Monte Carlo simulation has the $\Omega(k\cdot m\cdot n \cdot \text{poly}(\delta^{-1}))$ time complexity to achieve a $(1-1/e-\delta)$ approximation ratio, and it takes several hours even on very small networks. With the recently analysis of influence diffusion \cite{borgs2014maximizing, tang2014influence, tang2015influence}, the difficulty in solving such problems has shifted from the nodes selection strategy to the calculation of the objective function. \textit{Fundamentally, it asks for a better sampling method to estimate the expected influence.} To the best of our knowledge, there is no rumor blocking algorithm that can meet practical efficiency without sacrificing performance guarantee. 

In this paper, we present an efficient randomized algorithm for rumor blocking, which is termed as the reverse-tuple (R-tuple) based randomized (RBR) algorithm. The RBR algorithm runs in $O(\frac{k \cdot m \cdot \ln n}{\delta^2})$ and returns a $(1-1/e-\delta)$ approximation with a provable probability. The proposed algorithm utilizes the R-tuple based sampling method which is more effective than the Monte Carlo simulation used in the prior works. The reverse sampling technique is first designed by C. Borgs \cite{borgs2014maximizing} for the influence maximization problem. In this paper, we develop a new type of sampling based on the concept of R-tuple, and show how such sampling method can be applied to the rumor blocking problem. Although both the sampling methods give the unbiased estimate, one set of Monte Carlo simulations can only provide an estimation for a specified seed set, while the samples produced by the R-tuple based sampling can be applied to any seed sets. The RBR algorithm can be implemented with tunable parameters and it is flexible for balancing the running time and the error probability. We experimentally evaluate the proposed algorithm on both the real-world social network and synthetic power-law networks. The experimental results show that the RBR algorithm not only produces high quality positive seed set but also takes much less time than the greedy algorithm with the Monte Carlo simulation does. In particular, when $\delta=0.1$ and the error probability is set as less than $1/n$ where $n$ is the number of users, RBR algorithm is at least 1,000 times faster than that of the sate-of-the-art approach. The contributions of this paper are summarized as follows:
\begin{itemize}
\item We develop the reverse-tuple based sampling method which can be used to obtain an unbiased estimate for the objective function of rumor blocking problem.

\item Based on the new sampling technique, we design the RBR approximation algorithm which is effective and efficient for blocking rumors under IC model.

\item We evaluate the proposed algorithm via experiments and show that the RBR algorithm outperforms the existing methods by a significant magnitude in terms of the running time.
\end{itemize}

The rest of the paper is organized as follows. Sec. \ref{sec:related} is devoted to the related work. The preliminaries are provided in Sec. \ref{sec:model}. The RBR algorithm is shown in Sec. \ref{sec:algorithm}. The experiments are presented in Sec. \ref{sec:exp}. In Sec. \ref{sec:con} we discuss the future work and conclude this paper. 

\section{Related Work}
\label{sec:related}
In this section, we survey the prior works regarding rumor controlling. 

C. Budak \textit{et al.} \cite{budak2011limiting} are among the first who study the misinformation containment problem. In particular, they consider the multi-campaign independent cascade model and investigate the problem of identifying a subset of individuals that need to be convinced to adopt the ``good" campaign so as to minimize the number of people that adopt the rumor. X. He \textit{et al.} \cite{he2012influence} and L. Fan  \textit{et la.} \cite{fan2013least} further study this problem under the competitive linear threshold model and the OPOAO model, respectively. S. Li \textit{et al.} \cite{li2013rumor} later formulate the $\gamma-k$ rumor restriction problem and show a $(1-1/e)$-approximation. As mentioned earlier, the existing approaches are time consuming and thus cannot handle large social networks. Recently, several heuristic methods have been proposed by different works, such as  \cite{zhang2015limiting,ping2014sybil}, but they cannot provide performance guarantee. In this paper, we aim to design the rumor blocking algorithm which is provably
 effective and also efficient. 

Rumor source detection is another important problem for rumor controlling. The prior works primarily focus on the susceptible-infected-recovered (SIR) model where the nodes can be infected by rumor and may recover later. Shah \textit{et al.} \cite{shah2011rumors} provide a systematic study and design a rumor source estimator based upon the concept of rumor centrality. Z. Wang \textit{et al.} \cite{wang2014rumor} propose a unified inference framework based on the union rumor centrality. 

Rumor detection aims to distinguish rumor from genuine news. Leskovec \textit{et al.} \cite{leskovec2009meme} develop a framework for tracking the spread of misinformation and observe a set of persistent temporal patterns in the news cycle. Ratkiewicz \textit{et al. }\cite{ratkiewicz2011detecting} build a machine learning framework to  detect the early stages of viral spreading of political misinformation. In \cite{qazvinian2011rumor}, Qazvinian \textit{et al.} address the rumor detection problem by exploring the effectiveness of three categories of features: content-based, network-based, and microblog-specific memes. Takahashi \textit{et al.} \cite{takahashi2012rumor} study the characteristics of rumor and design a system to detect rumor on Twitter. 

\section{System Model}
\label{sec:model}
In what follows we provide the preliminaries to the rest of this paper.  The important notations are listed in Table \ref{table:symbol}.

\subsection{Influence Model}
A social network is represented by a directed graph $G=(V,E)$ where $V$ denotes the user set and user $v$ is a friend of $u$ iff $(u,v)\in E$. Let $n$ and $m$ be the number of nodes and edges, respectively. We denote by $N_v^-$ the set of in-neighbors of $v$. For a network $G$, let $E(G)$ and $V(G)$ be the edge-set and node-set of $G$, respectively. In order to spread an idea or to advertise a new product in a social network, some seed nodes are chosen to be activated to trigger the spread of influence. The diffusion process terminates when there is no user can be further activated. In this paper, we adopt the following model.


\textbf{Independent Cascade (IC) Model.}
Associated with each edge $(u,v)$ there is a propagation probability $p_{(u,v)}$. When node $u$ becomes active at time $t-1$, it attempts to activate each inactive neighbor $v$ at time step $t$ with a success probability of $p_{(u,v)}$. For each pair of nodes $u$ and $v$, $u$ has only one chance to activate $v$.

\begin{table}[t]
\centering
{\begin{tabular}{ |p{2.1cm} || p{5.cm} |}
\hline 
\textbf{Symbol}& \textbf{Definition} \\
\hline 
$G=(V,E)$& instance of IC network.    \\
\hline
$n$ and $m$&number of nodes and edges.  \\  
\hline
$N_v^-$& the set of in-neighbors of $v$\\
\hline
$p_e$& propagation probability of edge $e$.  \\  
\hline
$g$& a realization.  \\  
\hline
$\Pr[g]$& the probability that $g$ can be generated.  \\  
\hline
$S_r$ & the seed set of rumor.  \\  
\hline
$S_p$ & the seed set of positive cascade. \\  
\hline
$f(S_p)$& objective function of rumor blocking.  \\  
\hline
$OPT_i$& $OPT_i=f(S_i)$,\newline where $S_i=\argmax_{|S| \leq i}f(S)$.  \\  
\hline
$k$& the budget of the seed set of positive cascade.  \\  
\hline
$\T_v$& a random R-tuple of $v$.  \\  
\hline
$\Re_v$& the set of all possible random R-tuples of $v$.  \\  
\hline
$T_v$& a concrete R-tuple of $v$ in $\Re_v$.  \\  
\hline
$\Pr[T_v]$& the probability that $T_v$ can be generated by Alg. \ref{alg:r_tuple_v}.  \\  
\hline
$\T$& a random R-tuple.  \\  
\hline
\end{tabular}}
\caption{\textbf{Notations.} }
\label{table:symbol}
\end{table}

\subsection{Rumor and Competing Cascade}
Note that the IC model is originally designed for single cascade diffusion. Suppose there are multiple cascades each of which is generated by its own seed set. In the network, each node is initially inactive and never changes its state once activated by one cascade. Therefore, the cascade arriving first will dominate the node. In order to limit the spread of rumor, we introduce a competing cascade denoted as the positive cascade. At each time step, if a node is successfully activated by two or more neighbors belonging to different cascades, it will select the one with the highest priority. We assume that rumor has the higher priority, because rumor always polishes itself to be convincing. We denote by $S_r$ and $S_p$ the seed sets of rumor and the competing positive cascade, respectively. The diffusion process unfolds in discrete, as follows.
\begin{itemize}
\item Initially all the nodes are inactive.
\item At time 0, nodes in $S_r$ and $S_p$ are activated by rumor and positive cascade, respectively\footnote{Since our goal is to limit the spread of rumor and rumor has the higher priority, we can assume that $S_p \cap S_r=\emptyset$ without loss of generality.}.
\item At time $t>0$, each node $u$ which is activated at $t-1$ attempts to activate each of its inactive neighbors $v$ with a success probability of $p_{(u,v)}$. If node $v$ is successfully activated by the two cascades simultaneously at time $t$, $v$ will be activated by rumor. 
\item The diffusion process terminates when there is no node can be further activated.
\end{itemize}

An example is shown in Fig. \ref{fig: example}, where there are five nodes and the propagation probability of each edge is 1. In this example, $v_4$ and $v_3$ are selected as the seed node of rumor and positive cascade, respectively. At time step 1, $v_3$ and $v_4$ activate $v_5$ simultaneously and $v_5$ is finally activated by rumor as rumor has the higher priority.
\begin{figure}[t]
\begin{center}
\includegraphics[width=0.48\textwidth]{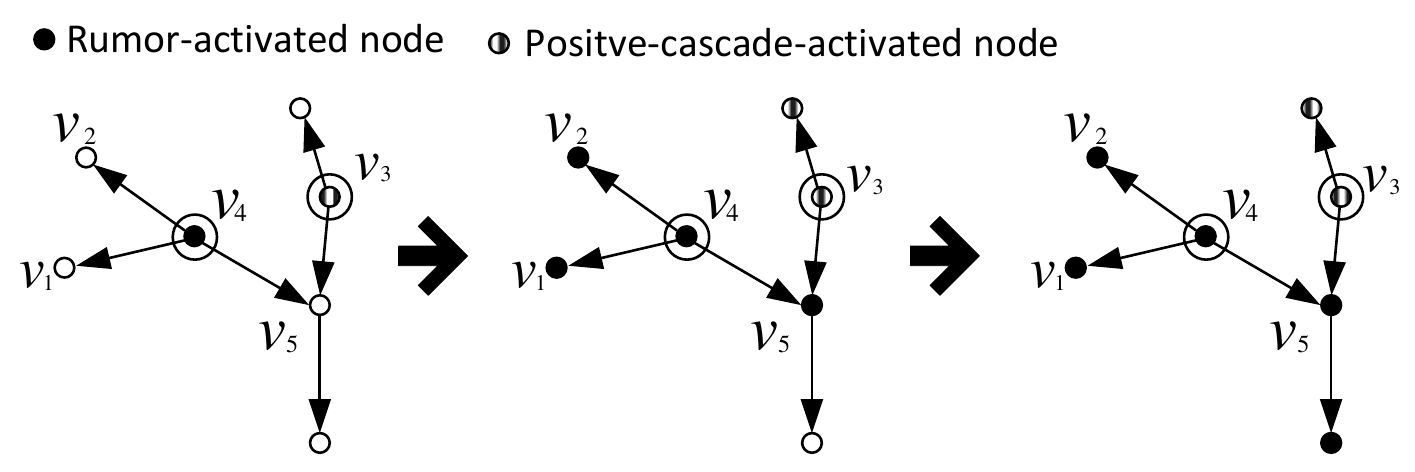} 
\end{center} 
\vspace{-5mm}
\caption{An illustrative example.}
\label{fig: example}
\end{figure}

\subsection{The problem}
Given an IC network $G$ and the seed set $S_r$ of rumor, let $f(S_p)$ be the expected number of nodes that are not activated by rumor when $S_p$ is selected as the seed set of the positive cascade. Given a budget $k \in \mathbb{Z}^+$, the rumor blocking problem considered in this paper is given as follows.
\begin{problem}
\label{problem:1}
Find a seed set $S_p$ with at most $k$ nodes such that $f(S_p)$ is maximized. 
\end{problem}
It is well-known that this problem is NP-hard.
\begin{theorem}\cite{budak2011limiting}
\label{theorem:nphard}
Problem \ref{problem:1} is NP-hard.
\end{theorem}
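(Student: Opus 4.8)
The plan is to prove NP-hardness by a polynomial-time reduction from \textsc{Set Cover}. An instance of \textsc{Set Cover} consists of a universe $U=\{u_1,\dots,u_t\}$, a family $\mathcal{C}=\{C_1,\dots,C_l\}$ of subsets of $U$ (we may assume $\bigcup_j C_j=U$, since otherwise the instance is trivially a no-instance and can be mapped to a fixed no-instance of the rumor blocking problem), and an integer $k$; the question is whether some $k$ members of $\mathcal{C}$ together cover $U$. From such an instance I would build an IC network as follows: introduce a rumor node $r$, a ``set node'' $a_j$ for each $C_j$, and an ``element node'' $b_i$ for each $u_i$; add the edge $(r,a_j)$ for every $j$ and the edge $(a_j,b_i)$ whenever $u_i\in C_j$; set $p_e=1$ on every edge; and take $S_r=\{r\}$ with budget $k$ for the positive cascade. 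By the footnote convention we may assume $r\notin S_p$.

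The core step is to establish the equivalence: $\mathcal{C}$ has a cover of size at most $k$ if and only if $\max_{|S_p|\le k} f(S_p)=k+t$. Because every edge fires with probability $1$, the diffusion is deterministic, so $f$ is just a node count, and the structural facts are easy to list. (i) The rumor reaches every element node $b_i$ in exactly two steps ($r\to a_j\to b_i$), whereas a positive seed placed on a set node $a_j$ reaches each $b_i$ with $u_i\in C_j$ in one step and hence wins every such contest; such a seed also blocks the rumor from passing through $a_j$. (ii) The positive cascade can never activate a set node other than those chosen as seeds, since the only edge into any $a_j$ comes from $r$, which is not a positive seed. Facts (i) and (ii) imply that for any $S_p$ the set of non-rumor nodes is contained in $\{\,a_j\in S_p\,\}\cup\{\,b_1,\dots,b_t\,\}$, so $f(S_p)\le |S_p\cap\{a_1,\dots,a_l\}|+t\le k+t$, with equality forcing all $k$ seeds to be set nodes whose corresponding sets cover $U$. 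For the reverse direction, seeding the set nodes of a cover of size $\le k$ (padded arbitrarily up to $k$ set nodes) makes every $b_i$ positive at time $1$, so the $k$ chosen set nodes and all $t$ element nodes are saved and $f=k+t$.

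The construction uses $1+l+t$ nodes and $l+\sum_j|C_j|$ edges and is clearly built in polynomial time, which completes the reduction and establishes NP-hardness of Problem~\ref{problem:1}. The step I expect to be the main obstacle is the ``only if'' direction --- arguing that the target value $k+t$ is unattainable except by a set-cover-shaped solution: one has to verify carefully that spending any of the budget on an element node (or, in a richer construction, on any non-set node) is useless for reaching the target, since it consumes a seed while saving at most one extra node, so that every optimal solution has the canonical form of $k$ set-node seeds whose sets cover $U$. The remaining ingredients --- determinism of the diffusion under probability-$1$ edges, the one-step timing gap exploited in (i), and the fact that in this network every node is eventually activated by one of the two cascades so that ``not rumor-activated'' coincides with ``positive-activated'' --- are routine once the construction is pinned down. (This is in the same spirit as the reduction of~\cite{budak2011limiting}.)
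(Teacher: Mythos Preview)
Your reduction is correct. One small wrinkle worth making explicit: the target value $k+t$ only makes sense if there are at least $k$ set nodes to pad with, so you should state the harmless assumption $l\ge k$ (when $l<k$ the \textsc{Set Cover} instance is trivial). Everything else---the timing argument, the upper bound $f(S_p)\le |S_p\cap\{a_1,\dots,a_l\}|+t$, and the conclusion that equality forces a cover---goes through exactly as you describe.

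As for comparison with the paper: note that the paper does \emph{not} actually prove Theorem~\ref{theorem:nphard}. It is stated with the citation \cite{budak2011limiting} and no argument is given; the authors simply import the hardness result from Budak \emph{et al.} Your proposal therefore supplies something the paper omits. The construction you use is in the same family as the one in \cite{budak2011limiting} (a bipartite sets/elements gadget with a single rumor source and unit probabilities), so you are not doing anything exotic---but you are doing more than the present paper does. If you want to match the paper literally, a one-line ``see \cite{budak2011limiting}'' would suffice; if you want a self-contained manuscript, your argument is a clean and standard choice.
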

For $1 \leq i \leq k$, let $S_i=\argmax_{|S| \leq i}f(S)$ and $OPT_i=f(S_i)$. Since polynomial exact algorithm does not exist unless NP=P, we aim to design approximation algorithms.  
\subsection{Realization}
In this section, we introduce the concept of \textit{realization} which provides a fundamental understanding of IC model.
\begin{definition}
\label{def:realization}
Given an IC network $G$, a realization $g$ of $G$ is an IC network where  $V(g)=V(G)$ and $E(g)$ is a subset of $E(G)$ where each edge in $E(g)$ has the propagation probability of $1$. The edge set $E(g)$ is constructed in random. For each edge $e$ in $G$, we generate a random number $rand_e$ from 0 to 1 in uniform. Edge $e$ appears in $g$ if and only if $rand_e \leq p_e$. Let $\mathcal{G}$ be the set of all possible realizations of $G$. One can see that there are $2^{|E(G)|}$ realizations in $\mathcal{G}$.
\end{definition}

Let $\mathrm{Pr}[g]$ be the probability that realization $g$ can be generated. By Def. \ref{def:realization}, 
\begin{equation*}
\mathrm{Pr}[g]=\prod_{e \in E(g)}p_e\prod_{e \in E(G)\setminus E(g)}(1-p_e).
\end{equation*}
Intuitively, a realization $g$ is a deterministic IC network. Given the seed sets $S_r$ and $S_p$, the following two diffusion processes are equivalent with respect to $f(S_p)$ \cite{kempe2003maximizing}.
\begin{itemize}
\item Execute the stochastic diffusion process on $G$ with $S_r$ and $S_p$.
\item Randomly generate a realization $g$ of $G$ and execute the deterministic diffusion process on $g$ with $S_r$ and $S_p$.
\end{itemize}
Define that 
$$f_g(S_p,u) =
  \begin{cases}
  0 &  \hspace{0mm} \hspace{-0.5mm} \text{if $u$ is activated by rumor in $g$ under $S_p$} \\
  1 & \hspace{0mm} \hspace{-0.5mm} \text{else } 
  \end{cases}$$
Therefore, $f(S_p)$ can be expressed as 
\begin{equation}
\label{eq:f(S_p)}
f(S_p)=\sum_{g \in \mathcal{G}} \sum_{u \in V} \mathrm{Pr}[g]  f_g(S_p,u).
\end{equation}
In order to maximize $f(S_p)$, one naturally asks that in which realization $g$ that $f_g(S_p,u)$ is equal to 1. For a realization $g$, let $\dis_g(u,v)$ be length of the shortest path from node $u$ to node $v$ in $g$, and, for a node set $V^{'}$, define that $\dis_g(V^{'},u)=\text{min}_{v \in V^{'}}\dis_g(v,u)$. A key lemma is shown as follows.
\begin{lemma}
\label{lemma: condition}
For a realization $g$ and $S_p$, a node $u$ will be activated by rumor in $g$ under $S_p$ (i.e. $f_g(S_p,u)=0$) if and only $\dis_g(S_r,u) \leq \dis_g(S_p,u)$ and $\dis_g(S_r,u) \neq +\infty$. 
\end{lemma}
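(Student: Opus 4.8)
The plan is to reduce the competitive diffusion on the fixed realization $g$ to a statement purely about shortest paths. Because every edge of $g$ has propagation probability $1$, the deterministic process on $g$ is fully determined, so I would introduce for each node $v$ its activation time $a(v)\in\mathbb{Z}_{\ge 0}\cup\{+\infty\}$ and its label $c(v)\in\{\text{rumor},\text{positive},\text{none}\}$ indicating which cascade first activates it. The whole proof then comes down to establishing the invariant
\begin{gather*}
a(v)=\min\bigl(\dis_g(S_r,v),\dis_g(S_p,v)\bigr),\\
c(v)=\text{rumor}\iff \dis_g(S_r,v)\le\dis_g(S_p,v)\ \text{and}\ \dis_g(S_r,v)\neq+\infty,
\end{gather*}
after which Lemma~\ref{lemma: condition} is immediate, since $f_g(S_p,u)=0$ means exactly $c(u)=\text{rumor}$.

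First I would prove the easy ``lower bound'' direction: if $c(v)=\text{rumor}$ then $\dis_g(S_r,v)\le a(v)$, and symmetrically $\dis_g(S_p,v)\le a(v)$ when $c(v)=\text{positive}$. This is a one-line induction on $a(v)$: a node newly activated by rumor at time $t$ is either a rumor seed ($t=0$) or is reached at time $t$ from an in-neighbour that was newly rumor-activated at time $t-1$, which by the inductive hypothesis lies within rumor-distance $t-1$ of $S_r$. Consequently $\min(\dis_g(S_r,v),\dis_g(S_p,v))\le a(v)$ for every $v$; in particular a node at infinite distance from both seed sets is never activated.

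The substantive part is the reverse inequality together with the label identity, which I would prove by strong induction on $t:=\min(\dis_g(S_r,v),\dis_g(S_p,v))$. The base case $t=0$ is the seed sets (using the standing assumption $S_r\cap S_p=\emptyset$, so that a positive seed has rumor-distance at least $1$). For $t\ge 1$ I split on which distance attains the minimum. If $\dis_g(S_r,v)=t\le\dis_g(S_p,v)$, I take an in-neighbour $w$ with $(w,v)\in E(g)$ on a shortest rumor-path, so $\dis_g(S_r,w)=t-1$, and the triangle-type bound $\dis_g(S_p,w)\ge\dis_g(S_p,v)-1\ge t-1$ shows that $w$ has $\min$-distance $t-1$ with the rumor side still winning; the inductive hypothesis then gives $a(w)=t-1$ and $c(w)=\text{rumor}$, so $w$ fires a rumor activation at $v$ at time $t$, and since the lower bound already forces $a(v)\ge t$ we conclude $a(v)=t$ and — because rumor beats any simultaneous positive attempt — $c(v)=\text{rumor}$. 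The symmetric subcase $\dis_g(S_p,v)=t<\dis_g(S_r,v)$ is identical, except that to get $c(v)=\text{positive}$ I invoke the lower bound contrapositively: $c(v)=\text{rumor}$ would force $\dis_g(S_r,v)\le a(v)=t$, contradicting $\dis_g(S_r,v)\ge t+1$.

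I expect the only delicate point to be the bookkeeping around ties and the priority rule. One must check that when a rumor-activated and a positive-activated in-neighbour of $v$ both fire at the very same step $t$, the process indeed assigns $v$ to rumor, and that no in-neighbour of $v$ can fire strictly before step $t$ — the latter being precisely what the lower-bound inequality rules out, which is why that direction has to be in hand before the induction. Once the invariant is proved, reading off the lemma is automatic: $f_g(S_p,u)=0$ iff $\dis_g(S_r,u)\le\dis_g(S_p,u)$ and $\dis_g(S_r,u)\neq+\infty$.
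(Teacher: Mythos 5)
Your proof is correct and follows essentially the same route as the paper's: a forward induction on the deterministic diffusion in $g$ showing that each node's activation time equals its shortest-path distance from $S_r\cup S_p$ and that rumor wins exactly when $\dis_g(S_r,\cdot)\leq\dis_g(S_p,\cdot)\neq+\infty$, with the priority rule resolving ties. Your packaging --- an explicit lower bound $a(v)\geq\min(\dis_g(S_r,v),\dis_g(S_p,v))$ followed by a single strong induction on that minimum --- is a somewhat cleaner and more complete rendering of the paper's two separate inductions along shortest paths (its Claims 1--3), which leave that lower bound implicit.
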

\begin{proof}
See Appendix \ref{appendix: proof_lemma: condition}.
\end{proof}
As a corollary of Lemma \ref{lemma: condition}, our objective function is monotone submodular.

\begin{corollary}
\label{coro: submodular}
$f(S_p)$ is monotone submodular with respect to $S_p$.
\end{corollary}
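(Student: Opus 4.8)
The plan is to derive monotonicity and submodularity of $f(S_p)$ from the pointwise characterization in Lemma~\ref{lemma: condition}, pushing everything through the realization decomposition in Eq.~\eqref{eq:f(S_p)}. Since $f(S_p)=\sum_{g\in\mathcal{G}}\sum_{u\in V}\Pr[g]\,f_g(S_p,u)$ is a nonnegative combination of the indicator functions $S_p\mapsto f_g(S_p,u)$, and since both monotonicity and submodularity are preserved under nonnegative linear combinations, it suffices to fix an arbitrary realization $g$ and an arbitrary node $u$ and prove that $S_p\mapsto f_g(S_p,u)$ is monotone nondecreasing and submodular on subsets of $V$.

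First I would handle monotonicity. Fix $g$ and $u$. If $\dis_g(S_r,u)=+\infty$ then $f_g(S_p,u)=1$ for every $S_p$ by Lemma~\ref{lemma: condition}, so the function is constant; otherwise $f_g(S_p,u)=1$ exactly when $\dis_g(S_p,u)<\dis_g(S_r,u)$. Enlarging $S_p$ can only decrease $\dis_g(S_p,u)=\min_{v\in S_p}\dis_g(v,u)$, so the event $\dis_g(S_p,u)<\dis_g(S_r,u)$ is preserved under supersets; hence $f_g(\cdot,u)$ is monotone nondecreasing, and so is $f(\cdot)$.

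Next, submodularity. With $g$ and $u$ fixed (and the infinite case again trivial), write $d:=\dis_g(S_r,u)<+\infty$ and note $f_g(S_p,u)=\mathbf{1}[\exists\,v\in S_p:\dis_g(v,u)<d]$. Let $A:=\{v\in V:\dis_g(v,u)<d\}$. Then $f_g(S_p,u)=\mathbf{1}[S_p\cap A\neq\emptyset]$, i.e. it is a ``coverage''-type function: it equals $1$ iff $S_p$ meets the fixed set $A$. Such a function is well known to be submodular — concretely, for $V_1\subset V_2$ and $v\notin V_2$, the marginal $f_g(V_1\cup\{v\},u)-f_g(V_1,u)$ is $1$ iff $v\in A$ and $V_1\cap A=\emptyset$, while $f_g(V_2\cup\{v\},u)-f_g(V_2,u)$ is $1$ iff $v\in A$ and $V_2\cap A=\emptyset$; since $V_1\subset V_2$, the latter condition implies the former, so the $V_2$-marginal is at most the $V_1$-marginal, which is exactly the submodular inequality~\eqref{eq: submodular}. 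Summing over $g\in\mathcal{G}$ with weights $\Pr[g]\ge 0$ and over $u\in V$ yields submodularity of $f$.

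The only subtle point — and the part worth stating carefully rather than the arithmetic — is the reduction to a single realization: one must invoke the equivalence between the stochastic diffusion on $G$ and the deterministic diffusion on a random realization $g$ so that Eq.~\eqref{eq:f(S_p)} is valid, and then observe that the set $A=\{v:\dis_g(v,u)<\dis_g(S_r,u)\}$ depends only on $g$, $u$, and $S_r$, and \emph{not} on $S_p$; it is precisely this independence from $S_p$ that turns $f_g(\cdot,u)$ into a coverage function and makes the argument go through. Everything else is the standard fact that nonnegative combinations of monotone submodular functions are monotone submodular.
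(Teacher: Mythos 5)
Your proof is correct and takes essentially the same route as the paper's: both decompose $f$ via Eq.~(\ref{eq:f(S_p)}) into the per-realization, per-node indicators $f_g(\cdot,u)$ and then apply Lemma~\ref{lemma: condition} to characterize when each indicator equals~$1$. The only (cosmetic) difference is that you package the submodularity step as ``$f_g(\cdot,u)$ is the coverage function of the fixed set $A=\{v:\dis_g(v,u)<\dis_g(S_r,u)\}$,'' whereas the paper verifies the marginal inequality by a direct case analysis; these are the same argument.
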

\begin{proof}
According to Eq. (\ref{eq:f(S_p)}), it suffices to prove that $f_g(S_p,u)$ is monotone submodular for each $g$ and $u$. It is clear monotone as $\dis_g(S_p,u)$ does not increase when more nodes are added into $S_p$. Now we prove that $$f_g(V_1 \cup \{v\},u)-f_g(V_1,u) \geq f_g(V_2 \cup \{v\},u)-f_g(V_2,u)$$ holds for any $V_1 \subset V_2 \subseteq V$ and $v \in V \setminus V_2$. Since, $f_g(S_p,u)$ is either 0 or 1, we only need to prove that $$f_g(V_1 \cup \{v\},u)-f_g(V_1,u)=1$$ when $f_g(V_2 \cup \{v\},u)-f_g(V_2,u)$ is equal to 1. When $f_g(V_2 \cup \{v\},u)-f_g(V_2,u)=1$, $f_g(V_2 \cup \{v\},u)=1$ and $f_g(V_2,u)=0$, which means, by Lemma \ref{lemma: condition}, $\dis_g(V_2 \cup \{v\},u) < \dis_g(S_r,u)$, $\dis_g(V_2 ,u) \geq \dis_g(S_r,u)$ and $\dis_g(S_r,u) \neq +\infty$. Therefore,  $\dis_g(v,u) < \dis_g(S_r,u)$ and consequently $f_g(V_1 \cup \{v\},u)=1$. Furthermore, $f_g(V_1,u)=0$ because $V_1$ is a subset of $V_2$ and $f_g(V_2,u)=0$. As a result, $f_g(V_1 \cup \{v\},u)-f_g(V_1,u)$ is also equal to 1.
\end{proof}

According to Corollary \ref{coro: submodular}, it seems that we can use the greedy algorithm to maximize $f(S_p)$ according to Eq. (\ref{eq:f(S_p)}). Unfortunately, there are exponential number of realizations in $\mathcal{G}$ to consider and therefore the greedy algorithm does not run in polynomial. Alternatively, we utilize the reverse sampling technique to obtain an estimate of $f(S_p)$ and then maximize the estimate.

\subsection{A Sampling Method}

\begin{algorithm}[t]
\caption{ \textbf{Random R-tuple of $v$} $(G, S_r, v)$}\label{alg:r_tuple_v}
\begin{algorithmic}[1]
\State \textbf{Input:} $G$, $S_r$ and $v$; 
\State $V^* \leftarrow \emptyset$, $E_t \leftarrow \emptyset$ ,$E_f \leftarrow \emptyset$; 
\State $V_1 \leftarrow \{v\}$; 
\While {true}
	\If {$V_1=\emptyset$}
	\State $B=0$;  
	\State Return $(V^*, E_t, E_f, B)$;
	\EndIf 
	\If {$V_1 \cap S_r \neq \emptyset$}
		\State $B=1$;
		\State Return $(V^*, E_t, E_f, B)$;
	\EndIf 
	\State $V^* \leftarrow V^* \cup V_1$; 
	\State $V_2 \leftarrow V \setminus V^*$;
	\State $V_1 \leftarrow \emptyset$.
	\For {each edge $(u_1,u_2) \in E$ where $u_1 \in V^*$ and $u_2 \in V_2$} 
	\State $rand \leftarrow$ a random number from 0 to 1 generated in uniform; 
	\If {$rand  \leq p_{(u_2,u_1)}$}
	\State $V_1 \leftarrow V_1 \cup \{u_2\}$;
	\State $E_t \leftarrow E_t \cup \{(u_2,u_1)\}$;
	\Else
	\State $E_f \leftarrow E_f \cup \{(u_2,u_1)\}$;
	\EndIf 
	\EndFor
\EndWhile
\end{algorithmic}
\end{algorithm}

Our sampling method is designed based on the following objects.

\begin{definition}{(\textbf{Random R-tuple of $v$})} 
\label{def:reverse_set}
Generated by Algorithm \ref{alg:r_tuple_v}, a random reverse tuple $\mathcal{T}_v$ of node $v$ is a four tuple $(V^*, E_t, E_f, B)$ where $V^*$ is a node-set, $E_t$ and $E_f$ are edge-sets, and $B$ is a boolean variable. As shown in Alg. \ref{alg:r_tuple_v}, we start from $v$ and successively test whether the current \textit{in-neighbor} of the nodes in $V^*$ can be added to $V^*$ in a \textit{breadth first manner until one of the rumor seed is reached or no node can be furthered reached}. $E_t$ and $E_f$ are generated in line 18 and line 20, respectively. $V^*$ is set of nodes that are reachable to $v$. $B$ is set as $1$ if and only if some rumor seeds are encountered. We denote by $\mathcal{T}_v(V^*)$, $\mathcal{T}_v(E_t)$, $\mathcal{T}_v(E_f)$ and $\mathcal{T}_v(B)$ the four attributes of $\mathcal{T}_v$. 
\end{definition}

\begin{algorithm}[t]
\caption{ \textbf{Random R-tuple} $(G, S_r)$}\label{alg:r_tuple}
\begin{algorithmic}[1]
\State \textbf{Input:} $G$ and $S_r$;
\State Randomly select a node $v$ from $V$ in uniform; 
\State \small $(V^*,E_t,E_f,B) \leftarrow $ Algorithm \ref{alg:r_tuple_v}$(G, S_r, v)$;
\State Return $\T=(V^*,E_t,E_f,B)$
\end{algorithmic}
\end{algorithm}

\begin{definition}{(\textbf{Random R-tuple})} 
\label{def: x(S,T)}
Generated by Algorithm \ref{alg:r_tuple}, a random R-tuple $\T=(\T(V^*), \T(E_
t),\T(E_f), \T(B))$ is a random R-tuple $\T_v$ of $v$ generated by Alg. \ref{alg:r_tuple_v} where $v$ is selected from $V$ uniformly in random. For a node set $S \subseteq V$, let $x(S,\T)$ be a random variable over 0 and 1, where

$$x(S,\T) =
  \begin{cases}
  1 &  \hspace{0mm} \hspace{-0.5mm} \text{if $S \cap \T(V^*) \neq \emptyset $ or $\T(B)=0$} \\
  0 & \hspace{0mm} \hspace{-0.5mm} \text{else } 
  \end{cases}$$. 
\end{definition}

The following lemma is critical for the rest of the analysis in this section.
\begin{lemma}
\label{lemma:key}
$E[x(S,\T)]=f(S)/n$ for any $S \subseteq V$.
\end{lemma}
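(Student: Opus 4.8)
The plan is to relate the random variable $x(S,\T)$ back to the deterministic diffusion quantity $f_g(S_p,u)$ via the realization picture, exploiting the fact that Algorithm~\ref{alg:r_tuple_v} is precisely a lazy (on-demand) construction of a realization, explored backwards from $v$. First I would fix the node $v$ chosen in line~2 of Algorithm~\ref{alg:r_tuple} and argue the conditional identity
\begin{equation*}
E\big[x(S,\T)\ \big|\ \text{$v$ is chosen}\big] = f(S,v),
\end{equation*}
where $f(S,v):=\sum_{g\in\mathcal{G}}\mathrm{Pr}[g]\,f_g(S,v)$ is the probability that $v$ is \emph{not} rumor-activated under $S_p=S$. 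Summing this over the $n$ equally likely choices of $v$ and using Eq.~(\ref{eq:f(S_p)}) then gives $E[x(S,\T)] = \frac1n\sum_{v\in V} f(S,v) = f(S)/n$, which is the claim. So the entire content is the conditional identity for a fixed $v$.

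To establish that identity I would set up a coupling between the execution of Algorithm~\ref{alg:r_tuple_v}$(G,S_r,v)$ and the random realization $g$. The key observation is that the algorithm only ever inspects an edge $(u_1,u_2)$ (with $u_1\in V^*$, $u_2\notin V^*$) once, drawing a fresh uniform coin with threshold $p_{(u_2,u_1)}$ — exactly the coin that decides whether the reverse edge $(u_2,u_1)$ is present in $g$ — and it never inspects an edge twice. Hence the coins it draws are an i.i.d.\ prefix of the coins defining $g$, and the BFS it performs is the backward BFS from $v$ in $g$ restricted to the portion of $g$ it has uncovered. I would then show, by induction on the BFS layers, that: (i) $\T(V^*)$ is exactly the set of nodes $w$ with $\dis_g(w,v)<+\infty$ whose distance is at most the layer at which the search halts; (ii) the search halts with $\T(B)=1$ exactly when it first reaches a layer containing a rumor seed, i.e.\ when $\dis_g(S_r,v)<+\infty$, and in that case $\dis_g(S_r,v)$ equals that halting layer and $\T(V^*)=\{w:\dis_g(w,v)\le \dis_g(S_r,v)\}$; and (iii) the search halts with $\T(B)=0$ exactly when $\dis_g(S_r,v)=+\infty$, in which case $\T(V^*)$ is the whole set of ancestors of $v$ in $g$. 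The point of stopping at the first rumor layer (rather than exploring further) is that it does not change whether $S\cap\T(V^*)\neq\emptyset$ in the relevant case, because any $S$-node that could certify $f_g(S,v)=1$ via $\dis_g(S,v)\le \dis_g(S_r,v)$ already lies in the truncated $\T(V^*)$ — this is where Lemma~\ref{lemma: condition} enters.

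Combining these facts: by Definition~\ref{def: x(S,T)}, $x(S,\T)=1$ iff $\T(B)=0$ or $S\cap\T(V^*)\neq\emptyset$. By (iii), $\T(B)=0$ corresponds to $\dis_g(S_r,v)=+\infty$, which by Lemma~\ref{lemma: condition} is exactly one of the two ways to have $f_g(S,v)=1$. By (ii), when $\T(B)=1$ we have $S\cap\T(V^*)\neq\emptyset$ iff some $s\in S$ has $\dis_g(s,v)\le\dis_g(S_r,v)<+\infty$, i.e.\ $\dis_g(S,v)\le\dis_g(S_r,v)$, which by Lemma~\ref{lemma: condition} is precisely the complementary way to have $f_g(S,v)=1$. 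Hence $x(S,\T)=f_g(S,v)$ pointwise on the coupled probability space, so $E[x(S,\T)\mid v\text{ chosen}] = \sum_g \mathrm{Pr}[g] f_g(S,v) = f(S,v)$, completing the argument. The main obstacle I anticipate is item (ii)/(iii): making rigorous that truncating the backward search at the first rumor layer still records enough of $\T(V^*)$ to decide $x(S,\T)$ correctly in every case — equivalently, that nodes of $g$ that the algorithm never explores are irrelevant to both $f_g(S,v)$ and the event $\{S\cap\T(V^*)\neq\emptyset\}$ — which is exactly the place the shortest-path characterization of Lemma~\ref{lemma: condition} must be invoked carefully, together with the fact that unexplored coins are independent of everything the algorithm has seen.
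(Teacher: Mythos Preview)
Your argument is correct and lands on the same key identity as the paper --- that, conditional on the chosen node $v$, one has $x(S,\T_v)=f_g(S,v)$ for the realization $g$ sharing the algorithm's edge coins --- but the bookkeeping differs. The paper formalizes this by enumerating concrete R-tuples $T_v$, proving that the sets of realizations \emph{compatible} with each $(T_v(E_t),T_v(E_f))$ partition $\mathcal G$ (its Lemma~\ref{lemma:partition}), and then showing $x(S,T_v)=f_g(S,v)$ for every compatible $g$ (its Lemma~\ref{lemma: T_v_g}); the expectation is then computed as a double sum over $T_v\in\Re_v$ and over compatible $g$. Your direct coupling bypasses the partition lemma and the compatibility machinery entirely: you feed the realization's coins into Algorithm~\ref{alg:r_tuple_v} and read off the pointwise identity. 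Both routes invoke Lemma~\ref{lemma: condition} at exactly the same step and for the same reason. Your version is shorter and closer to the ``principle of deferred decisions''; the paper's makes the measure decomposition explicit, which is arguably easier to audit.

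One correction you will need when you write it out: in your item~(ii) the inequality must be strict, $\T(V^*)=\{w:\dis_g(w,v)<\dis_g(S_r,v)\}$, because Algorithm~\ref{alg:r_tuple_v} returns \emph{before} merging the rumor-containing layer $V_1$ into $V^*$. With $\le$ the claimed equivalence breaks at ties: if $\dis_g(S,v)=\dis_g(S_r,v)<\infty$ then by Lemma~\ref{lemma: condition} rumor wins and $f_g(S,v)=0$, yet your $\le$-version would place an $S$-node inside $\T(V^*)$ and force $x(S,\T)=1$. With the strict inequality your argument goes through as written.
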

\begin{proof}
See Appendix \ref{appendix: proof_lemma:key}.
\end{proof}

\begin{algorithm}[t]
\caption{ \textbf{Node-Selection} $(V,R_l,k)$}\label{alg:set_cover}
\begin{algorithmic}[1]
\State \textbf{Input:} $V$, $R_l=\{\T_1,...,\T_l\}$ and $k$ 
\State Set $\T_i(V^*)$ as $V$ if $\T_i(B)=0$; 
\State $S^{'} \leftarrow\emptyset$;
\For {j = 1 : k} 
	\State Let $v$ be the node that covers the most number of sets in $\T_i(V^*)$;
	\State $S^{'} \leftarrow S^{'} \cup \{v\}$;
	\State Remove $v$ from each $\T_i(V^*)$; 
\EndFor
\State Return $S^{'}$;
\end{algorithmic}
\end{algorithm}

Suppose there is a set $R_l=\{\T_1,...,\T_l\}$ of $l$ random R-tuples each of which is obtained by Alg. \ref{alg:r_tuple} . For a set $S \subseteq V$ and $R_l$, let  $F(S,R_l)=\sum_{i=1}^{l} x(S,\T_i)$. Now let us consider the following problem 
\begin{problem}
\label{problem: set_cover}
Finding a node set $S$ with at most $k$ nodes such that $F(S,R_l)$ is maximized.
\end{problem}
Because $x(S,\T_i)$ is always 1 when $\T_i(B)=0$, we can take $\T(V^*)$ as the ground set $V$ such that $S \cap \T_i(V^*)$ is equal to 1 for any non-empty set $S$. Now Problem \ref{problem: set_cover} becomes the classic set cover problem and therefore the greedy algorithm shown in Algorithm \ref{alg:set_cover} produces a $(1-1/e)$-approximation \cite{nemhauser1978analysis}. For a given $R_l$, let $S^{'}$ be the set produced by Alg. \ref{alg:set_cover}. Then
\begin{equation}
\label{eq:greedy}
F(S^{'},R_l) \geq (1-1/e)\cdot F(S,R_l),
\end{equation}
for any $S \subseteq V$.

\subsection{Chernoff Bound}
In this paper, we use the Chernoff bound to analyze the error of estimating. Let $X_i$ be $l$ i.i.d random variables where $E(X_i)=\mu$. The Chernoff bound \cite{motwani2010randomized} states that 
\begin{equation}
\label{eq:chernoff_1}
\mathrm{Pr}\Big[\sum X_i - l \cdot \mu \geq \delta \cdot l \cdot  \mu  \Big]\leq \exp(-\frac{l \cdot \mu \cdot \delta^2}{2+\delta}),
\end{equation}
and
\begin{equation}
\label{eq:chernoff_2}
\mathrm{Pr}\Big[\sum X_i - l \cdot \mu \leq -\delta \cdot l \cdot \mu \Big]\leq \exp(-\frac{l \cdot \mu \cdot \delta^2}{2}),
\end{equation}  
for $0 < \delta < 1$.

\section{The algorithm}
In this section, we first discuss how to estimate the optimal value $OPT_k$ and then present the algorithm together with its analysis.
\label{sec:algorithm}
\subsection{Estimating $OPT_k$}
Estimating the optimal value of $f(S)$ is an important part of our algorithm. Suppose we have a set $R_l$ of random R-tuples. Intuitively, $\frac{n \cdot F(S^{'},R_l)}{l}$ should be a good choice because, by Eq. (\ref{eq:greedy}), it is close to $\frac{n \cdot F(S_k,R_l)}{l}$ with a guaranteed factor, and, according to Lemma \ref{lemma:key}, $\frac{n \cdot F(S_k,R_l)}{l}$ is an unbiased estimate of $OPT_k$. Because $OPT_k \in [1, n]$\footnote{$OPT_k$ is always no less than 1 because $k \geq 1$ and $S_p \cap S_r \neq \emptyset$.}, we design a statistic test which compares $\frac{n \cdot F(S^{'},R_l)}{l}$ with $n/2^i$ and terminates when they are sufficiently close to each other. The estimation process is shown in Algorithm \ref{alg:OPT_k} with tunable parameters $\delta>0$ and $N >0$. Let $OPT_k^*$ be the estimation produced by Alg. \ref{alg:OPT_k}.
\begin{algorithm}[t]
\caption{$OPT_k$-\textbf{Estimation} $(k,\delta, N)$}\label{alg:OPT_k}
\begin{algorithmic}[1]
\State \textbf{Input:} $(k,\delta, N)$
\State $R \leftarrow\emptyset$;
\State $\lambda_3 = \frac{n \cdot (2+\delta) \cdot \ln(N \cdot \binom{n}{k} \cdot \log n)}{\delta^2}$;
\For {i = 1 : $\log(n-1)$} 
	\State $x_i \leftarrow \frac{n}{2^i}$,  $l_i \leftarrow \frac{\lambda_3}{x_i}$;
	\While {$|R|\leq l_i$}
	\State Generate a random R-tuple and inset it into $R$;
	\EndWhile
	\State $S^{'} \leftarrow \text{Node-Selection}(V,R,k)$;
	\If {$\frac{n \cdot  F(S^{'},R)}{l_i} \geq (1+\delta) \cdot x_i$}; 
	\State $OPT_k^* =\frac{n \cdot F(S^{'},R)}{l_i \cdot (1+\delta)}$; 
	\State Return $OPT_k^*$;
	\EndIf
\EndFor
\end{algorithmic}
\end{algorithm}

First, we need to guarantee that $OPT_k^*$ is smaller than $OPT_k$. The following result shows that the terminate condition (i.e., line 9) leads that $OPT_k^*$ is smaller than $OPT_k$ with a high probability.

\begin{lemma}
\label{lemma: opt_k_1}
With probability at least $1-2/N$, Algorithm \ref{alg:OPT_k} produces an $OPT_k^*$ that is less $OPT_k$. 
\end{lemma}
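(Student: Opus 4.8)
The plan is to bound the probability of the ``overshoot'' event $\mathcal{E}:=\{\text{Algorithm~\ref{alg:OPT_k} returns some }OPT_k^{*}\ge OPT_k\}$ by $2/N$; the claim then follows by complementation. For iteration $i$ put $x_i=n/2^i$ and $l_i=\lambda_3/x_i$, let $R_i$ be the R-tuple set used in the termination test at iteration $i$ (so $|R_i|\ge l_i$; the small overshoot of $|R_i|$ past $l_i$ changes nothing essential, so treat $|R_i|=l_i$), set $S_i'=\text{Node-Selection}(V,R_i,k)$, and recall $F(S,R_i)=\sum_{\T\in R_i}x(S,\T)$. If the algorithm halts at iteration $i$ then $OPT_k^{*}=\frac{n\,F(S_i',R_i)}{l_i(1+\delta)}$, so $OPT_k^{*}\ge OPT_k$ is equivalent to $F(S_i',R_i)\ge(1+\delta)\,l_i\,OPT_k/n$, while the termination test itself is $F(S_i',R_i)\ge(1+\delta)\,l_i\,x_i/n$. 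I would split $\mathcal{E}$ by the value of $x_i$ at the halting iteration: \emph{Type A}, the algorithm halts at some $i$ with $x_i\ge OPT_k$ (which automatically gives $OPT_k^{*}\ge x_i\ge OPT_k$), and \emph{Type B}, it halts at some $i$ with $x_i<OPT_k$ but $OPT_k^{*}\ge OPT_k$. Every outcome in $\mathcal{E}$ is of Type A or Type B, so it suffices to show $\Pr[\text{Type A}]\le 1/N$ and $\Pr[\text{Type B}]\le 1/N$.

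The engine is a tail bound for a \emph{fixed} set $S$ with $|S|\le k$ and a \emph{fixed} iteration $i$: for any threshold $\theta$ with $f(S)/n\le\theta$,
\begin{equation*}
\Pr\!\big[F(S,R_i)\ge(1+\delta)\,l_i\,\theta\big]\le\exp\!\Big(-\tfrac{l_i\,\delta^2\,\theta}{2+\delta}\Big).
\end{equation*}
Indeed, writing $\mu=f(S)/n$, by Lemma~\ref{lemma:key} the i.i.d.\ indicators $x(S,\T)$ over $\T\in R_i$ have mean $\mu\le\theta$, so $(1+\delta)l_i\theta=(1+\delta')l_i\mu$ with $\delta'=\tfrac{(1+\delta)\theta}{\mu}-1\ge\delta>0$; the Chernoff bound $(\ref{eq:chernoff_1})$ (which holds for every $\delta'>0$) gives $\exp\!\big(-\tfrac{l_i\mu(\delta')^2}{2+\delta'}\big)$, and then $\mu\delta'=(1+\delta)\theta-\mu\ge\delta\theta$ together with $\mu(2+\delta')=\mu+(1+\delta)\theta\le(2+\delta)\theta$ yields $\tfrac{l_i\mu(\delta')^2}{2+\delta'}=\tfrac{l_i(\mu\delta')^2}{\mu(2+\delta')}\ge\tfrac{l_i\delta^2\theta}{2+\delta}$.

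Next I would feed in $l_i=\lambda_3/x_i$ with $\lambda_3=\tfrac{n(2+\delta)\ln(N\binom nk\log n)}{\delta^2}$, so that $\tfrac{l_i\delta^2}{2+\delta}=\tfrac{n\ln(N\binom nk\log n)}{x_i}$. Since $|S_i'|\le k$ (pad to size exactly $k$ with redundant nodes, which does not decrease $F$), Type A is contained in the union over the at most $\log(n-1)$ iterations with $x_i\ge OPT_k$ and the $\binom nk$ size-$k$ sets $S$ of the events $\{F(S,R_i)\ge(1+\delta)l_i x_i/n\}$; applying the tail bound with $\theta=x_i/n\ge f(S)/n$ makes each exponent exactly $\ln(N\binom nk\log n)$, so each such probability is $\le\tfrac1{N\binom nk\log n}$ and the union bound gives $\Pr[\text{Type A}]\le\binom nk\cdot\log(n-1)\cdot\tfrac1{N\binom nk\log n}<\tfrac1N$. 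Likewise Type B is contained in the union over iterations with $x_i<OPT_k$ and size-$k$ sets $S$ of $\{F(S,R_i)\ge(1+\delta)l_i OPT_k/n\}$; applying the tail bound with $\theta=OPT_k/n\ge f(S)/n$ makes each exponent $\tfrac{OPT_k}{x_i}\ln(N\binom nk\log n)>\ln(N\binom nk\log n)$ because $x_i<OPT_k$, so again each probability is $\le\tfrac1{N\binom nk\log n}$ and $\Pr[\text{Type B}]<\tfrac1N$. Hence $\Pr[\mathcal{E}]<2/N$, i.e.\ $OPT_k^{*}<OPT_k$ with probability at least $1-2/N$.

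The main difficulty — and the reason $\lambda_3$ must carry the factor $\ln\binom nk$ — is that $S_i'$ is chosen \emph{after} observing $R_i$, so one cannot apply concentration to $F(S_i',R_i)$ directly and must union bound over all $\binom nk$ candidate size-$k$ sets. A secondary point is that the relevant true mean $f(S)/n$ can lie far below the threshold $\theta$, which is why the nominal multiplicative deviation $\delta$ must be re-expressed as a possibly large $\delta'$ before invoking Chernoff. Finally, the single design choice $l_i\,x_i=\lambda_3$ is what simultaneously controls the overshoot in both regimes: when $x_i\ge OPT_k$ the larger $x_i$ is compensated by the $x_i$ in the numerator of the exponent, and when $x_i<OPT_k$ the larger $l_i$ compensates.
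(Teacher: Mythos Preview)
Your proof is correct and matches the paper's approach: your Type~A/Type~B split is exactly the paper's Claim~\ref{claim: opt_k_1}/Claim~\ref{claim: opt_k_2} decomposition (early halt at some $x_i\ge OPT_k$ versus overshoot when $x_{i^*}<OPT_k$), and the Chernoff-plus-union-bound computation is the same. Your write-up is arguably cleaner---you extract the tail bound once as a lemma and union over iterations explicitly in Type~B, whereas the paper fixes the random stopping index $i^*$ somewhat informally---but there is no substantive difference.
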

\begin{proof}
See Appendix \ref{appendix: proof_lemma: opt_k_1}
\end{proof}

Second, it can be shown that $OPT_k^*$ is not too much less than $OPT_k$.

\begin{lemma}
\label{lemma: opt_k_2}
With a probability at least $1-1/N$, Algorithm \ref{alg:OPT_k} produces an $OPT_k^*$ such that $OPT_k^* \geq \frac{(1-1/e) \cdot OPT_k}{2 \cdot (1+\delta)^2}$. 
\end{lemma}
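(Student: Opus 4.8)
The plan is to trace through the iterations of Algorithm \ref{alg:OPT_k} and argue that the loop must terminate at some iteration $i^*$ that is not too late, and that when it does terminate the returned value is large enough. First I would set up the event that makes everything work: by Lemma \ref{lemma: opt_k_1}, with probability at least $1-2/N$ the algorithm returns some $OPT_k^* < OPT_k$, but here I want a lower bound, so I will instead focus on a ``good estimation'' event that is the other side of the Chernoff bound. Concretely, for each $i$, conditioned on $|R| = l_i$, the quantity $F(S_k, R)$ is a sum of $l_i$ i.i.d.\ copies of $x(S_k, \T)$ with mean $OPT_k/n$, so by Lemma \ref{lemma:key} and the Chernoff bound \eqref{eq:chernoff_2}, $\frac{n \cdot F(S_k,R)}{l_i}$ is at least $(1-\delta) \cdot OPT_k$ with failure probability at most $\exp(-l_i \cdot (OPT_k/n) \cdot \delta^2 / 2)$. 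Combining with \eqref{eq:greedy}, $\frac{n \cdot F(S',R)}{l_i} \geq (1-1/e)(1-\delta) \cdot OPT_k$ on this event.

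Next I would identify the critical iteration. Let $i^*$ be the smallest index with $x_{i^*} = n/2^{i^*} \leq \frac{(1-1/e)(1-\delta)}{1+\delta} \cdot OPT_k$; such an index exists within the range $1 \le i \le \log(n-1)$ because $OPT_k \geq 1$ (so $n/2^{i}$ eventually drops below any positive constant times $OPT_k$, using that the loop runs down to $x_i$ of order $1$). I claim that on the good-estimation event the loop terminates at iteration $i^*$ or earlier: at iteration $i^*$ we have $\frac{n \cdot F(S',R)}{l_{i^*}} \geq (1-1/e)(1-\delta) \cdot OPT_k \geq (1+\delta) \cdot x_{i^*}$, which is exactly the termination condition on line 9. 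If the algorithm stopped at some earlier iteration $i \le i^*$, then by definition of the returned value, $OPT_k^* = \frac{n \cdot F(S',R)}{l_i (1+\delta)} \geq x_i \geq x_{i^*}$; if it stops exactly at $i^*$, the same bound $OPT_k^* \geq x_{i^*}$ holds. So in all cases $OPT_k^* \geq x_{i^*}$.

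Finally I would convert $x_{i^*} \geq \cdots$ into the claimed bound. By minimality of $i^*$, the previous threshold satisfies $x_{i^*-1} = 2 x_{i^*} > \frac{(1-1/e)(1-\delta)}{1+\delta} \cdot OPT_k$, hence $x_{i^*} > \frac{(1-1/e)(1-\delta)}{2(1+\delta)} \cdot OPT_k$. Since $1 - \delta \geq \frac{1}{1+\delta}$ is false in general, I would instead note $(1-\delta)(1+\delta) = 1-\delta^2 \le 1$, so $1-\delta \le \frac{1}{1+\delta}$ — that goes the wrong way, so the cleaner route is to just bound $1-\delta \geq (1+\delta)^{-1}$ is wrong; the intended simplification is $\frac{1-\delta}{1+\delta} \geq \frac{1}{(1+\delta)^2}$, which does hold since $1-\delta^2 \leq 1$ gives $\frac{1-\delta}{1+\delta} = \frac{1-\delta^2}{(1+\delta)^2} \leq \frac{1}{(1+\delta)^2}$ — again backwards. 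I would therefore use the weaker but correct bound $1 - \delta \geq 1 - \delta$ and simply track the constant, or replace $x_{i^*-1}>\frac{(1-1/e)(1-\delta)}{1+\delta}OPT_k$ with the looser $>\frac{(1-1/e)}{(1+\delta)^2}OPT_k$ after observing $1-\delta \ge \frac{1}{1+\delta}$ must be checked: since we only need a lower bound and $\frac{1}{1+\delta} \le 1-\delta+\delta^2$, a short case analysis or the substitution $1-\delta \ge (1+\delta)^{-1} \iff 1-\delta^2 \ge 1$ shows one should instead carry $\frac{1-\delta}{1+\delta}$ and note it exceeds $\frac{1}{(1+\delta)^2}$ only when $1-\delta \ge \frac1{1+\delta}$, i.e.\ never; the correct inequality is $\frac{1-\delta}{1+\delta} \le \frac{1}{(1+\delta)^2}$, so to reach the stated $\frac{(1-1/e)OPT_k}{2(1+\delta)^2}$ the authors must be using $l_i$ slightly differently — the safe plan is to just write $OPT_k^* \ge x_{i^*} > \frac{(1-1/e)(1-\delta)}{2(1+\delta)}OPT_k$ and then absorb $(1-\delta) \ge \frac{1}{1+\delta}$... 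I expect the genuine obstacle here to be exactly this bookkeeping: getting the Chernoff failure probability to sum to at most $1/N$ over all $\log n$ iterations (using that $l_i x_i/n = \lambda_3/n$ is large enough that each term is at most $\frac{1}{N \log n}$ up to the $\binom nk$ factor, which is needed to union-bound over the choice of $S'$), and matching the resulting constant to $\frac{1-1/e}{2(1+\delta)^2}$; the rest is a routine ``smallest-index'' argument as above.
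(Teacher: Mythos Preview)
Your overall plan---apply the lower-tail Chernoff bound to $F(S_k,R)$, push it through \eqref{eq:greedy} to lower-bound $F(S',R)$, locate the first iteration at which the termination test on line~9 must fire, and read off $OPT_k^*\ge x_{i^*}$---is exactly the paper's argument. The place where you get stuck is real but has a one-line fix, and the two ``obstacles'' you flag at the end are both phantom.

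\textbf{The constant.} Your calculation gives $OPT_k^* \ge \frac{(1-1/e)(1-\delta)}{2(1+\delta)}\,OPT_k$, and you correctly observe that $\frac{1-\delta}{1+\delta}=\frac{1-\delta^2}{(1+\delta)^2}<\frac{1}{(1+\delta)^2}$, so this does not imply the stated bound. The fix is to apply \eqref{eq:chernoff_2} with deviation parameter $\delta'=\frac{\delta}{1+\delta}$ rather than $\delta$. Then the good event becomes $\frac{n}{l_i}F(S_k,R)\ge (1-\delta')\,OPT_k=\frac{OPT_k}{1+\delta}$, and after \eqref{eq:greedy} the critical threshold is exactly $x_i\le \frac{1-1/e}{(1+\delta)^2}\,OPT_k$, giving the constant in the lemma on the nose. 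The failure probability at that single iteration is $\exp\bigl(-\tfrac{l_i}{2}\cdot\tfrac{OPT_k}{n}\cdot(\delta')^2\bigr)$; substituting $l_i x_i=\lambda_3$ and $OPT_k\ge \frac{(1+\delta)^2}{1-1/e}\,x_i$ bounds this by $(N\binom{n}{k}\log n)^{-c}$ for some $c>1$, hence by $1/N$.

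\textbf{No union bounds are needed.} You need not union-bound over the $\binom{n}{k}$ possible outputs $S'$: the greedy output satisfies $F(S',R)\ge (1-1/e)\,F(S_k,R)$ deterministically for every realization of $R$, so it suffices to control $F(S_k,R)$ for the single fixed set $S_k$. Nor do you need to union-bound over all $\log n$ iterations: the Chernoff event is only required at the one critical iteration. If the algorithm stops at some earlier $j<i^*$, the termination condition alone already gives $OPT_k^*\ge x_j\ge x_{i^*}$ with no probabilistic input; the only bad event is ``the algorithm reaches iteration $i^*$ and the line-9 test still fails there'', and that is bounded by the single Chernoff application above. This is precisely how the paper's proof is organized.
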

\begin{proof}
See Appendix \ref{appendix: proof_lemma: opt_k_2}
\end{proof}

The above results are summarized as follows.

\begin{theorem}
\label{theorem:main}
With a probability at least $1-3/N$, Algorithm \ref{alg:OPT_k} returns an $OPT_k^*$, such that 
\begin{equation}
\label{eq: opt_k^*}
OPT_k \geq OPT_k^* \geq \frac{(1-1/e) \cdot OPT_k}{2(1+\delta)^2}
\end{equation}
\end{theorem}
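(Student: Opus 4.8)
The plan is to obtain Theorem~\ref{theorem:main} as an immediate consequence of Lemma~\ref{lemma: opt_k_1} and Lemma~\ref{lemma: opt_k_2} combined with a union bound. Those two lemmas already supply, respectively, the upper inequality $OPT_k^*\leq OPT_k$ and the lower inequality $OPT_k^*\geq \frac{(1-1/e)\cdot OPT_k}{2(1+\delta)^2}$ of Eq.~(\ref{eq: opt_k^*}), each holding with its own failure probability, so the only thing left to argue is that both hold \emph{simultaneously} with probability at least $1-3/N$.

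Concretely, I would let $\mathcal{A}$ be the event that the value $OPT_k^*=\frac{n\cdot F(S^{'},R)}{l_i\cdot(1+\delta)}$ produced in line~10 of Algorithm~\ref{alg:OPT_k} satisfies $OPT_k^*\leq OPT_k$, and let $\mathcal{B}$ be the event that Algorithm~\ref{alg:OPT_k} does return a value and that this value satisfies $OPT_k^*\geq \frac{(1-1/e)\cdot OPT_k}{2(1+\delta)^2}$. Lemma~\ref{lemma: opt_k_1} gives $\mathrm{Pr}[\mathcal{A}]\geq 1-2/N$ and Lemma~\ref{lemma: opt_k_2} gives $\mathrm{Pr}[\mathcal{B}]\geq 1-1/N$; note that $\mathcal{B}$ also certifies that the for-loop does not run to completion without returning, so $OPT_k^*$ is well defined whenever $\mathcal{B}$ occurs. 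Taking the union bound over the complements, $\mathrm{Pr}[\overline{\mathcal{A}}\cup\overline{\mathcal{B}}]\leq \mathrm{Pr}[\overline{\mathcal{A}}]+\mathrm{Pr}[\overline{\mathcal{B}}]\leq 2/N+1/N=3/N$, hence $\mathrm{Pr}[\mathcal{A}\cap\mathcal{B}]\geq 1-3/N$, and on the event $\mathcal{A}\cap\mathcal{B}$ both inequalities of Eq.~(\ref{eq: opt_k^*}) hold, which is exactly the theorem.

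The combination itself is routine; the substance lives in the two lemmas, whose appendix proofs are where the Chernoff bounds Eqs.~(\ref{eq:chernoff_1})--(\ref{eq:chernoff_2}) and the greedy guarantee Eq.~(\ref{eq:greedy}) do the work. The one bookkeeping point I would be careful about is that $\mathcal{A}$ must be phrased so as to control $OPT_k^*$ \emph{regardless of which iteration} $i$ triggers the return, so that intersecting with $\mathcal{B}$ genuinely pins the output into the stated interval; this is exactly why Lemma~\ref{lemma: opt_k_1} is stated for the returned value rather than for a fixed $i$. If the upper-bound guarantee had instead been per-iteration, the argument would need an extra union bound over the $O(\log n)$ iterations and the $1/N$-type constants would have to be adjusted accordingly. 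As stated, no such extra step is needed, and the $3/N$ bound is exactly what this union-bound argument yields.
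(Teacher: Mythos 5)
Your proposal is correct and matches the paper's intent exactly: the paper offers no separate proof, introducing the theorem with ``The above results are summarized as follows,'' i.e., it is precisely the union bound over the failure events of Lemma~\ref{lemma: opt_k_1} (probability at most $2/N$) and Lemma~\ref{lemma: opt_k_2} (probability at most $1/N$). Your extra remark about the upper-bound event being phrased for the returned value rather than per-iteration is a reasonable and accurate piece of bookkeeping, but it does not change the argument.
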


\subsection{The Algorithm}

\begin{algorithm}[t]
\caption{ \textbf{R-tuple Based Randomized Rumor Blocking} \label{alg:algortihm}}
\begin{algorithmic}[1]
\State \textbf{Input:} $G$, $S_r$, $k$, $N$, $\delta_1$,  $\delta_2$ and $\delta_3$;
\State $OPT_k^* \leftarrow$ Algorithm \ref{alg:OPT_k} $(k,\delta_3,N)$.
\State $l_1=\frac{2n \ln N_1}{\delta_1^2 \cdot OPT_k^*}$;
\State $l_2=\frac{(2+\delta_2-(1-1/e)\delta_1)n\ln(N_2  \binom {n} {k})}{(\delta_2-(1-1/e)\delta_1)^2 \cdot OPT_k^*}$;
\State $l^*= \max(l_1,l_2)$;
\State Generate $l^*$ random R-tuples $R_{l^*}=\{T^1,...,T^{l^*}\}$;
\State Run Algorithm \ref{alg:set_cover} with input $R_{l^*}$ and $k$ to obtain a node set $S^{'}$;
\State Return $S^{'}$;
\end{algorithmic}
\end{algorithm}

Now we are ready to show the algorithm of rumor blocking. Let $R_l$ be a set of random R-tuples. According to Lemma \ref{lemma:key}, $n\cdot s(S,\T)$ is an unbiased estimate of $f(S)$ and therefore the $S$ that can maximize $\frac{n\cdot F(S,R_l)}{l}$ should be able to maximize $f(S)$ as long as $l$ is sufficiently large. The whole algorithm is given in Alg. \ref{alg:algortihm}. Let $N>0$, $\delta_1>0$, $\delta_2> (1-1/e)\cdot\delta_1$ and $\delta_3>0$ be some adjustable parameters. We first obtain an estimate $OPT_k^*$ of $OPT_k$ by Alg. \ref{alg:OPT_k} with input $(k, \delta_3, N)$ and set that

\begin{equation}
\label{eq:l_1}
l_1=\frac{2 \cdot n \cdot  \ln N_1}{\delta_1^2 \cdot OPT_k^*},
\end{equation}

\begin{equation}
\label{eq:l_2}
l_2=\frac{(2+\delta_2-(1-1/e) \cdot \delta_1) \cdot n \cdot \ln(N_2 \cdot   \binom {n} {k})}{(\delta_2-(1-1/e) \cdot \delta_1)^2 \cdot OPT_k^*},
\end{equation}
and
\begin{equation}
l^*=\max(l_1,l_2).
\end{equation}
Next, we generate $l^*$ random R-tuples $R_{l^*}=\{\T_1,...,\T_{l^*}\}$ by Alg. \ref{alg:r_tuple}. Finally, Alg. \ref{alg:algortihm} returns the set obtained by running Alg. \ref{alg:set_cover} with input $(V,R_{l^*},k)$. Let $S^{*}$ be the node set produced by Alg. \ref{alg:algortihm}. As mentioned early, $S^{*}$ should be a $(1-1/e)$-approximation to Problem \ref{problem:1} if the estimate is sufficiently accurate. In particular, we require the following accuracy of $F(S_k,R_{l^*})$ and $F(S^{*},R_{l^*})$. 
\begin{equation}
\label{eq: accuracy_S_k}
\frac{n}{l^*}\cdot F(S_k,R_{l^*}) \geq (1-\delta_1) \cdot OPT_k
\end{equation}
and
\begin{equation}
\label{eq: accuracy_S^'}
F(S^{*},R_{l^*})-\frac{l^*}{n} \cdot f(S^{*}) \leq (\delta_2-(1-1/e) \cdot \delta_1) \cdot \frac{l^*}{n} \cdot OPT_k
\end{equation}
The following lemma shows $S^{*}$ is a  $(1-1/e-\delta_2)$-approximation if Eqs. (\ref{eq: accuracy_S_k}) and (\ref{eq: accuracy_S^'}) hold simultaneously.

\begin{lemma}
\label{lemma: ratio}
With Eqs. (\ref{eq: accuracy_S_k}) and (\ref{eq: accuracy_S^'}), $f(S^{*})$ is no less than $(1-1/e-\delta_2) \cdot OPT_k$ 
\end{lemma}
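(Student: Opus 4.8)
The plan is to chain together three facts: the greedy guarantee of Eq.~(\ref{eq:greedy}) applied to the optimal competitor $S_k$, the lower bound on $F(S_k,R_{l^*})$ supplied by Eq.~(\ref{eq: accuracy_S_k}), and the upper bound on the estimation gap for $S^{*}$ supplied by Eq.~(\ref{eq: accuracy_S^'}). No new probabilistic argument is needed in this lemma: Eqs.~(\ref{eq: accuracy_S_k}) and~(\ref{eq: accuracy_S^'}) are taken as hypotheses, and Eq.~(\ref{eq:greedy}) is a deterministic property of Alg.~\ref{alg:set_cover} that holds for \emph{every} $S\subseteq V$ with $|S|\le k$, in particular for $S=S_k$.

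First I would start from Eq.~(\ref{eq: accuracy_S^'}), rearranged to isolate $\frac{l^*}{n}f(S^{*})$:
\[
\frac{l^*}{n}\, f(S^{*}) \;\ge\; F(S^{*},R_{l^*}) - \bigl(\delta_2-(1-1/e)\delta_1\bigr)\frac{l^*}{n}\,OPT_k .
\]
Next I would lower-bound $F(S^{*},R_{l^*})$ by $(1-1/e)\,F(S_k,R_{l^*})$ via Eq.~(\ref{eq:greedy}) with $S=S_k$, and then lower-bound $F(S_k,R_{l^*})$ via Eq.~(\ref{eq: accuracy_S_k}), i.e.\ $F(S_k,R_{l^*})\ge \frac{l^*}{n}(1-\delta_1)OPT_k$. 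Combining these,
\[
\frac{l^*}{n}\, f(S^{*}) \;\ge\; (1-1/e)(1-\delta_1)\frac{l^*}{n}\,OPT_k - \bigl(\delta_2-(1-1/e)\delta_1\bigr)\frac{l^*}{n}\,OPT_k .
\]
Dividing through by $l^*/n>0$ and simplifying the coefficient — the two $(1-1/e)\delta_1$ terms cancel — gives $f(S^{*})\ge \bigl(1-1/e-\delta_2\bigr)OPT_k$, which is exactly the claim.

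The only point that warrants care, and the closest thing to an obstacle, is the invocation of Eq.~(\ref{eq:greedy}): one must observe that $S_k=\argmax_{|S|\le k} f(S)$ is itself a feasible solution to Problem~\ref{problem: set_cover} (it uses at most $k$ nodes), so the set-cover greedy bound $F(S^{*},R_{l^*})\ge (1-1/e)\,F(S_k,R_{l^*})$ applies even though $S_k$ was defined through $f$ rather than through $F(\cdot,R_{l^*})$. Everything after that is elementary algebra, so I expect the proof to be short: three inequalities followed by a one-line cancellation of the $(1-1/e)\delta_1$ terms.
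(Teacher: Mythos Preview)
Your proposal is correct and mirrors the paper's own proof almost exactly: the paper rearranges Eq.~(\ref{eq: accuracy_S^'}) to get $f(S^{*})\ge \frac{n}{l^*}F(S^{*},R_{l^*})-\delta_*\,OPT_k$, applies Eq.~(\ref{eq:greedy}) with $S=S_k$, then Eq.~(\ref{eq: accuracy_S_k}), and cancels the $(1-1/e)\delta_1$ terms. Your explicit remark that $S_k$ is feasible for Problem~\ref{problem: set_cover} (so Eq.~(\ref{eq:greedy}) applies) is a nice clarification the paper leaves implicit.
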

\begin{proof}
Let $\delta_*=(\delta_2-(1-1/e) \cdot \delta_1)$. By Eq. (\ref{eq: accuracy_S^'}),
\begin{eqnarray*}
f(S^{*})&\geq &\frac{n}{l^{*}} \cdot F(S^{*},T)-\delta_* \cdot OPT_k\\
&&\{\text{By Eq. (\ref{eq:greedy})}\}\\
&\geq& \frac{n}{l^{*}} \cdot (1-1/e) \cdot F(S_k,T)-\delta_* \cdot OPT_k \\
&&\{\text{By Eq. (\ref{eq: accuracy_S_k})}\}\\
&\geq& (1-\delta_1) \cdot (1-1/e) \cdot OPT_k-\delta_* \cdot  OPT_k\\
&=& (1-1/e-\delta_2) \cdot OPT_k
\end{eqnarray*}
\end{proof}

Setting $l^*$ as $\max(l_1,l_2)$ is able to guarantee that Eqs. (\ref{eq: accuracy_S_k}) and (\ref{eq: accuracy_S^'}) hold with a provable probability provided that Eq. (\ref{eq: opt_k^*}) holds, which is shown in the following two lemmas.

\begin{lemma}
\label{lemma:1}
Eq. (\ref{eq: accuracy_S_k}) holds with probability at least $1-1/N$ if $l \geq l_1$ and Eq. (\ref{eq: opt_k^*}) holds.
\end{lemma}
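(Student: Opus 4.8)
The plan is to apply the Chernoff lower-tail bound (Eq.~(\ref{eq:chernoff_2})) to the random variable $F(S_k, R_{l^*}) = \sum_{i=1}^{l^*} x(S_k, \T_i)$, exploiting the fact that the $\T_i$ are i.i.d.\ and, by Lemma~\ref{lemma:key}, $E[x(S_k,\T_i)] = f(S_k)/n = OPT_k/n =: \mu$. Thus $E[F(S_k,R_{l^*})] = l^* \cdot OPT_k / n$, and Eq.~(\ref{eq: accuracy_S_k}) is exactly the statement that $F(S_k,R_{l^*})$ does not fall below $(1-\delta_1)$ times its mean. Note that $S_k$ is a \emph{fixed} set (it is $\argmax_{|S|\le k} f(S)$, not something chosen as a function of the samples), so no union bound over a family of seed sets is needed here — this is the easy direction.

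First I would write the desired failure event $\frac{n}{l^*} F(S_k,R_{l^*}) < (1-\delta_1) OPT_k$ as $F(S_k,R_{l^*}) - l^* \mu < -\delta_1 \cdot l^* \mu$ with $\mu = OPT_k/n$. Then Eq.~(\ref{eq:chernoff_2}) gives that this probability is at most $\exp\!\big(-\frac{l^* \cdot \mu \cdot \delta_1^2}{2}\big) = \exp\!\big(-\frac{l^* \cdot OPT_k \cdot \delta_1^2}{2n}\big)$. Next I would substitute the hypothesis $l^* \ge l_1 = \frac{2 n \ln N_1}{\delta_1^2 \cdot OPT_k^*}$, which makes the exponent at most $-\frac{OPT_k \ln N_1}{OPT_k^*}$. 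Finally I would invoke Eq.~(\ref{eq: opt_k^*}) (which holds by assumption), giving $OPT_k^* \le OPT_k$, hence $OPT_k/OPT_k^* \ge 1$, so the bound is at most $\exp(-\ln N_1) = 1/N_1$. (One should check whether the paper intends $N_1 = N$ or $N_1$ to be a separately tunable parameter; from context $N_1$ appears to be set so that $1/N_1 = 1/N$, and I would state the conclusion with the $1/N$ appearing in the lemma, noting $N_1$ is chosen accordingly.)

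The only genuinely delicate point is making sure the Chernoff bound as quoted applies: Eq.~(\ref{eq:chernoff_2}) is stated for i.i.d.\ variables with common mean $\mu$ and $0<\delta<1$, and here $x(S_k,\T_i) \in \{0,1\}$ with the required independence coming from the fact that each $\T_i$ is generated by an independent call to Algorithm~\ref{alg:r_tuple}; I would remark that $\delta_1 \in (0,1)$ is among the standing assumptions on the parameters. A secondary subtlety is that $l^*$ is itself a random quantity (it depends on $OPT_k^*$, which came from Algorithm~\ref{alg:OPT_k}), so strictly one should condition on the value of $OPT_k^*$ — but since we are \emph{given} that Eq.~(\ref{eq: opt_k^*}) holds and the sampling in Algorithm~\ref{alg:algortihm} is done afresh and independently of that phase, conditioning on $OPT_k^*$ leaves the $\T_i$ i.i.d.\ with the stated mean, and the bound goes through unchanged. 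I do not expect any real obstacle here; the substance of the argument is the single application of the lower-tail Chernoff inequality, and the bookkeeping with $l_1$ and the two-sided estimate on $OPT_k^*$ is routine.
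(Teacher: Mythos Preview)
Your proposal is correct and follows essentially the same approach as the paper: a single application of the lower-tail Chernoff bound (Eq.~(\ref{eq:chernoff_2})) to $F(S_k,R_{l^*})$ with mean $l^*\cdot OPT_k/n$, followed by the substitution $l^*\ge l_1$ and the use of $OPT_k^*\le OPT_k$ from Eq.~(\ref{eq: opt_k^*}) to push the bound down to $1/N$. Your added remarks about $S_k$ being fixed (no union bound needed), the i.i.d.\ structure, and conditioning on $OPT_k^*$ are all valid observations that the paper's terse proof leaves implicit; your note that $N_1$ must be taken equal to $N$ for the stated conclusion is also correct and reflects a minor notational looseness in the paper itself.
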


\begin{proof}
See Appendix \ref{appendix: proof_lemma:1}.
\end{proof}

\begin{lemma}
\label{lemma:2}
Eq. (\ref{eq: accuracy_S^'}) holds with probability at least $1-1/N$ if $l \geq l_2$ and Eq. (\ref{eq: opt_k^*}) holds.

\end{lemma}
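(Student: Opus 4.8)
The plan is a standard union-bound-plus-Chernoff argument, and the one genuine difficulty is that $S^{*}$ is itself a function of the sample $R_{l^*}=\{\T_1,\dots,\T_{l^*}\}$: the indicators $x(S^{*},\T_i)$ appearing in $F(S^{*},R_{l^*})$ are therefore not independent, so the Chernoff bound cannot be applied to them directly. I would sidestep this by observing that Alg. \ref{alg:set_cover} always outputs a set of exactly $k$ nodes, and then proving that Eq. (\ref{eq: accuracy_S^'}) in fact holds \emph{simultaneously for every} $S\subseteq V$ with $|S|=k$, with the claimed failure probability.

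Fix such an $S$. By Lemma \ref{lemma:key}, $x(S,\T_1),\dots,x(S,\T_{l^*})$ are i.i.d.\ with mean $f(S)/n$, so $F(S,R_{l^*})$ has expectation $\frac{l^*}{n}f(S)$; write $\delta_*=\delta_2-(1-1/e)\delta_1>0$. Applying the upper-tail Chernoff bound (\ref{eq:chernoff_1}) to the absolute deviation $\frac{l^*}{n}\delta_* OPT_k$ — the inequality (\ref{eq:chernoff_1}) is valid for the upper tail at every positive relative deviation, not only those below $1$ — and using $f(S)\le OPT_k$ for $|S|\le k$ to bound the resulting denominator uniformly in $S$, one obtains
\[
\Pr\Big[F(S,R_{l^*})-\tfrac{l^*}{n}f(S)\ \ge\ \delta_*\tfrac{l^*}{n}OPT_k\Big]\ \le\ \exp\!\Big(-\frac{\delta_*^{2}\,l^*\,OPT_k}{(2+\delta_*)\,n}\Big).
\]
Now I would invoke the hypotheses: Eq. (\ref{eq: opt_k^*}) gives $OPT_k^{*}\le OPT_k$, hence $l_2\ge\frac{(2+\delta_*)\,n\,\ln(N_2\binom{n}{k})}{\delta_*^{2}\,OPT_k}$, so $l^{*}\ge l_2$ makes the exponent above at most $-\ln(N_2\binom{n}{k})$, i.e.\ the probability is at most $1/(N_2\binom{n}{k})$.

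Finally, a union bound over the $\binom{n}{k}$ node sets of size $k$ shows that with probability at least $1-1/N_2$ Eq. (\ref{eq: accuracy_S^'}) holds for all of them at once, in particular for the random output $S^{*}$; reading $N_2=N$ (consistently with Eqs. (\ref{eq:l_1})--(\ref{eq:l_2})) yields probability at least $1-1/N$. The steps that need care are only the conversion of (\ref{eq:chernoff_1}) into the absolute-deviation form above and the use of $f(S)\le OPT_k$ to make the denominator independent of $S$; the rest is the routine union bound, whose $\ln\binom{n}{k}$ overhead is exactly what has already been built into $l_2$. One should also check the harmless point that the union may be restricted to sets of size exactly $k$, which holds because Alg. \ref{alg:set_cover} inserts exactly $k$ nodes.
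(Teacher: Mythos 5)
Your proposal is correct and follows essentially the same route as the paper: fix an arbitrary size-$k$ set $S$, apply the upper-tail Chernoff bound with relative deviation $\delta_*\,OPT_k/f(S)$, use $f(S)\le OPT_k$ to reduce the exponent to $-\frac{l\,\delta_*^2\,OPT_k}{n(2+\delta_*)}$, invoke $OPT_k^*\le OPT_k$ from Eq.~(\ref{eq: opt_k^*}) so that $l\ge l_2$ forces this below $-\ln(N\binom{n}{k})$, and union-bound over all $\binom{n}{k}$ candidate sets. Your explicit remark that the union bound is what handles the dependence of $S^{*}$ on the sample is a point the paper leaves implicit, but the argument is the same.
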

\begin{proof}
See Appendix \ref{appendix: proof_lemma:2}.
\end{proof}

The above analysis is summarized as follows.
\begin{theorem}
\label{theorem:approximation}
With probability at least $1-5/N$, $f(S^{*}) \geq (1-1/e-\delta_2) \cdot OPT_k$.
\end{theorem}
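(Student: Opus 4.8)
The plan is to assemble Theorem~\ref{theorem:approximation} from the pieces already established, treating it as a union-bound over the failure events of the several randomized sub-procedures. The skeleton is: (i) condition on the event that the estimate $OPT_k^*$ from Algorithm~\ref{alg:OPT_k} satisfies the two-sided bound~(\ref{eq: opt_k^*}); (ii) on that event, invoke Lemma~\ref{lemma:1} and Lemma~\ref{lemma:2} to get that the two accuracy requirements~(\ref{eq: accuracy_S_k}) and~(\ref{eq: accuracy_S^'}) hold for $R_{l^*}$; (iii) once both accuracy requirements hold, apply Lemma~\ref{lemma: ratio} to conclude $f(S^{*}) \geq (1-1/e-\delta_2)\cdot OPT_k$ deterministically; (iv) track the probabilities through the union bound to arrive at the claimed $1-5/N$.

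In more detail, I would first let $\mathcal{E}_0$ denote the event that~(\ref{eq: opt_k^*}) holds; by Theorem~\ref{theorem:main} (equivalently Lemmas~\ref{lemma: opt_k_1} and~\ref{lemma: opt_k_2}), $\Pr[\mathcal{E}_0] \geq 1-3/N$. Next, note that $l^* = \max(l_1, l_2) \geq l_1$ and $l^* \geq l_2$, where $l_1, l_2$ are computed from $OPT_k^*$ as in~(\ref{eq:l_1}) and~(\ref{eq:l_2}). Let $\mathcal{E}_1$ be the event that~(\ref{eq: accuracy_S_k}) holds and $\mathcal{E}_2$ the event that~(\ref{eq: accuracy_S^'}) holds. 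Lemma~\ref{lemma:1} gives $\Pr[\mathcal{E}_1 \mid \mathcal{E}_0] \geq 1-1/N$ (the lemma's hypothesis ``$l \geq l_1$ and~(\ref{eq: opt_k^*}) holds'' is met), and similarly Lemma~\ref{lemma:2} gives $\Pr[\mathcal{E}_2 \mid \mathcal{E}_0] \geq 1-1/N$. A small subtlety here is that Lemmas~\ref{lemma:1} and~\ref{lemma:2} presumably analyze a fresh batch $R_{l^*}$ of R-tuples; I would point out that in Algorithm~\ref{alg:algortihm} the set $R_{l^*}$ is generated in line~6 \emph{after} $OPT_k^*$ is fixed, so conditioned on $OPT_k^*$ the randomness of $R_{l^*}$ is independent of everything used in Algorithm~\ref{alg:OPT_k}, which is exactly what those lemmas require. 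Then by a union bound, $\Pr[\mathcal{E}_0 \cap \mathcal{E}_1 \cap \mathcal{E}_2] \geq 1 - 3/N - 1/N - 1/N = 1 - 5/N$.

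Finally, on the event $\mathcal{E}_0 \cap \mathcal{E}_1 \cap \mathcal{E}_2$, both~(\ref{eq: accuracy_S_k}) and~(\ref{eq: accuracy_S^'}) hold, so Lemma~\ref{lemma: ratio} applies verbatim and yields $f(S^{*}) \geq (1-1/e-\delta_2)\cdot OPT_k$. This establishes the theorem with the stated probability $1-5/N$.

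I expect the only genuine obstacle to be bookkeeping of the conditioning and independence structure: making sure that the ``$1-1/N$'' guarantees of Lemmas~\ref{lemma:1} and~\ref{lemma:2} may legitimately be added to the ``$1-3/N$'' of Theorem~\ref{theorem:main} rather than having to be re-derived jointly. This works because the R-tuples used inside Algorithm~\ref{alg:OPT_k} and the R-tuples $R_{l^*}$ drawn in Algorithm~\ref{alg:algortihm} are generated by independent calls to Algorithm~\ref{alg:r_tuple}, and the only quantity passed between the two stages is the number $OPT_k^*$, on which we condition. Everything else---the Chernoff estimates, the $(1-1/e)$ greedy guarantee~(\ref{eq:greedy}), and the chain of inequalities in Lemma~\ref{lemma: ratio}---is already in hand, so the proof is essentially a one-paragraph union-bound argument citing the earlier results.
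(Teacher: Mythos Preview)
Your proposal is correct and follows essentially the same approach as the paper: combine Theorem~\ref{theorem:main} (cost $3/N$) with Lemmas~\ref{lemma:1} and~\ref{lemma:2} (cost $1/N$ each) via a union bound, then apply Lemma~\ref{lemma: ratio}. Your treatment is in fact more careful than the paper's one-line proof, since you make explicit the conditioning on $OPT_k^*$ and the independence of the fresh batch $R_{l^*}$ from the R-tuples used inside Algorithm~\ref{alg:OPT_k}.
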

\begin{proof}
According to Theorem \ref{theorem:main}, Lemmas \ref{lemma:1} and \ref{lemma:2}, by the union bound, Eqs. (\ref{eq: accuracy_S_k}) and (\ref{eq: accuracy_S^'}) holds with probability at least $1-5/N$, and this theorem follows immediately from Lemma \ref{lemma: ratio}.
\end{proof}

\subsection{Running Time}
Now let us consider the running time of Alg. \ref{alg:algortihm}. Let $TIME$ be the expected running time of Alg. \ref{alg:r_tuple}. The following lemma shows $TIME$ can be bounded by the objective value of the optimal solution.
\begin{lemma}
\label{lemma:TIME}
$TIME\leq\frac{m}{n} \cdot OPT_1$.
\end{lemma}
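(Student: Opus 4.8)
The plan is to connect the expected running time of Algorithm~\ref{alg:r_tuple} to a sum over realizations, and then recognize that sum as (a scaled version of) $OPT_1$. First I would observe that Algorithm~\ref{alg:r_tuple} first picks a node $v$ uniformly at random and then runs Algorithm~\ref{alg:r_tuple_v} starting from $v$. A single invocation of Algorithm~\ref{alg:r_tuple_v} performs a reverse BFS from $v$ that, for each node $w$ it adds to $V^*$, inspects the in-edges of $w$ exactly once (testing each such edge by generating one random number), and stops as soon as a rumor seed is hit or no new node can be reached. Hence the cost of one run is, up to a constant factor, $\sum_{w \in \T(V^*)} |N_w^-|$, i.e.\ bounded by the total number of edges touched during the reverse search, which is at most $\sum_{w\in \T(V^*)} |N_w^-| \le \sum_{(u,w)\in E,\, w \in \T(V^*)} 1$.

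Next I would pass to expectations. Using the realization formulation from Definition~\ref{def:realization}, the reverse search from $v$ explores exactly the set of nodes $w$ such that $w$ can reach $v$ in the random realization $g$ restricted to the portion discovered before a rumor seed is encountered; more importantly, $w$ contributes its in-edges to the cost only if $w$ is reached, and $w$ being reached (in the reverse, rumor-priority-respecting sense used by Alg.~\ref{alg:r_tuple_v}) corresponds precisely to the event that, in $g$, the node $w$ would \emph{not} be activated by rumor if $\{v\}$ were the positive seed — that is, to $f_g(\{v\},w)=1$, by Lemma~\ref{lemma: condition}. (A node whose path to $v$ is blocked by a rumor seed, or which cannot reach $v$ at all, is simply not expanded.) Therefore
\begin{equation*}
TIME \;\le\; \frac{1}{n}\sum_{v\in V}\sum_{g\in \mathcal{G}} \Pr[g] \sum_{w\in V} |N_w^-| \cdot \mathbb{1}[w \text{ is expanded from } v \text{ in } g]
\;\le\; \frac{1}{n}\sum_{v\in V}\sum_{g\in \mathcal{G}} \Pr[g] \sum_{w\in V} |N_w^-|\, f_g(\{v\},w).
\end{equation*}
Since $\sum_{w\in V}|N_w^-| f_g(\{v\},w) \le \frac{m}{n}\cdot\bigl(\text{something}\bigr)$ is too crude, I would instead bound $|N_w^-|$ inside the double sum by noting that each node is counted with its in-degree and swapping the order of summation: $\sum_v \sum_{g}\Pr[g]\sum_w |N_w^-| f_g(\{v\},w) \le \frac{m}{n}\sum_{g}\Pr[g]\sum_w \sum_v f_g(\{v\},w)$ is again not quite right, so the cleaner route is to bound the per-run cost by $m$ times the indicator that \emph{the run does useful work}, or more sharply to write the cost as $\sum_{w} |N_w^-|\,[\![w\text{ reached}]\!]$ and use $\sum_w |N_w^-| = m$ together with the fact that the expected number of distinct $v$ for which a fixed $w$ is reached, averaged appropriately, is $f(\{w\})$ by Lemma~\ref{lemma:key} applied to $S=\{w\}$. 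Concretely, by reversing roles, $\frac{1}{n}\sum_{v}\sum_g \Pr[g] f_g(\{v\},w) = \frac{1}{n} f(\{w\}) \cdot \frac{1}{n}$... — the exact bookkeeping is what needs care — and summing $|N_w^-|$ against $\frac{1}{n}f(\{w\}) \le \frac{1}{n}OPT_1$ gives $TIME \le \frac{OPT_1}{n}\sum_w |N_w^-| = \frac{m}{n}OPT_1$.

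The main obstacle, and the step I would spend the most care on, is the exact accounting in the middle display: matching "an edge $(u,w)$ is processed during the run from $v$ in realization $g$" with an indicator that is symmetric enough to apply Lemma~\ref{lemma:key}. The subtlety is that Alg.~\ref{alg:r_tuple_v} processes the in-edges of $w$ precisely when $w$ enters $V^*$, and $w$ enters $V^*$ iff the reverse search reaches it before hitting a rumor seed — which, via Lemma~\ref{lemma: condition}, is equivalent to $\dis_g(v,w) \le \dis_g(S_r,w)$ or $\dis_g(S_r,w)=+\infty$, i.e.\ $f_g(\{v\},w)=1$ (here using that $w$ reaching $v$ in $g$ corresponds to $v$ reaching $w$ in the reverse graph). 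Once that equivalence is pinned down, charging $|N_w^-|$ to node $w$, summing over $w$, and invoking $E[x(\{w\},\T)] = f(\{w\})/n \le OPT_1/n$ from Lemma~\ref{lemma:key} yields $TIME \le \sum_{w\in V} |N_w^-| \cdot \frac{OPT_1}{n} = \frac{m}{n}\cdot OPT_1$, as claimed. I would present this as: bound per-run cost $\le \sum_{w} |N_w^-| x(\{w\},\T)$ deterministically, then take expectations and use linearity plus Lemma~\ref{lemma:key}.
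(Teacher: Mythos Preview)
Your final paragraph lands on exactly the paper's argument: bound the per-run work deterministically by $\sum_{w} |N_w^-|\, x(\{w\},\T)$, take expectations, and apply Lemma~\ref{lemma:key} to get $\sum_w |N_w^-|\cdot f(\{w\})/n \le \tfrac{m}{n}\,OPT_1$. The paper writes the same thing edge-by-edge, as $TIME(T_v)\le \sum_{(u^*,v^*)\in E} x(\{u^*\},T_v)$, then swaps sums and invokes Lemma~\ref{lemma:key}; so the approaches coincide.

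The only comment is that the detour through realizations and Lemma~\ref{lemma: condition} in your middle paragraphs is unnecessary and is where your bookkeeping gets tangled (e.g.\ the $f_g(\{v\},w)$ identification has the roles of $v$ and $w$ crossed, and several of the displayed attempts do not type-check). You do not need to unpack $x(\{w\},\T)$ at all: the deterministic inequality ``edge $(u,w)$ is tested $\Rightarrow w\in \T(V^*) \Rightarrow x(\{w\},\T)=1$'' is immediate from Algorithm~\ref{alg:r_tuple_v}, and after that Lemma~\ref{lemma:key} does all the work. Dropping the realization layer and going straight to the last paragraph is both cleaner and matches the paper.
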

\begin{proof}
See Appendix \ref{appendix: proof_lemma:TIME}.
\end{proof}

\begin{theorem}
\label{theorem:time}
Alg. \ref{alg:algortihm} runs in $O(\frac{km\ln n}{\delta_2^2})$.
\end{theorem}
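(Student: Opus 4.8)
The plan is to bound separately the three phases of Alg.~\ref{alg:algortihm} --- the call to Alg.~\ref{alg:OPT_k} in line~2, the generation of the $l^*$ random R-tuples in line~6, and the single run of Alg.~\ref{alg:set_cover} in line~7 --- and to show each of them costs $O\!\left(\frac{km\ln n}{\delta_2^2}\right)$ in expectation (I read the statement as a bound on \emph{expected} running time, since the per-sample cost in Lemma~\ref{lemma:TIME} is only controlled in expectation). Throughout I would adopt the natural parameter conventions under which the theorem is meant to hold: $\delta_1,\delta_3=\Theta(\delta_2)$ with $\delta_2-(1-1/e)\delta_1=\Theta(\delta_2)$ (e.g.\ $\delta_1=\delta_2/2$), and $N,N_1,N_2=n^{O(1)}$, so that $\ln N_1,\ln N_2=O(\ln n)$ and $\ln\!\big(N_2\binom{n}{k}\big),\ \ln\!\big(N\binom{n}{k}\log n\big)=O(k\ln n)$. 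Two facts will be used repeatedly: by Lemma~\ref{lemma:TIME} one random R-tuple costs expected time $TIME\le\frac{m}{n}OPT_1\le\frac{m}{n}OPT_k$, and since $f(\cdot)\le n$ we have $OPT_k\le n$.

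First I would handle line~6. By Theorem~\ref{theorem:main}, with probability at least $1-3/N$ we have $OPT_k^*\ge\frac{(1-1/e)OPT_k}{2(1+\delta_3)^2}=\Omega(OPT_k)$, and in every case $OPT_k^*\ge x_{i^*}\ge x_{\log(n-1)}>1$ because line~9 of Alg.~\ref{alg:OPT_k} forces $OPT_k^*=\frac{nF(S',R)}{l_{i^*}(1+\delta_3)}\ge x_{i^*}$. Substituting $OPT_k^*=\Omega(OPT_k)$ into Eqs.~(\ref{eq:l_1}) and (\ref{eq:l_2}) gives $l^*=\max(l_1,l_2)=O\!\left(\frac{nk\ln n}{\delta_2^2\,OPT_k}\right)$ on the good event and $l^*=O\!\left(\frac{nk\ln n}{\delta_2^2}\right)$ unconditionally. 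Hence line~6 costs in expectation at most $l^*\cdot\frac{m}{n}OPT_k=O\!\left(\frac{mk\ln n}{\delta_2^2}\right)$ on the good event; on the complementary event of probability $O(1/n)$ the same quantity is at most $O\!\left(\frac{mk\ln n}{\delta_2^2}OPT_k\right)\le O\!\left(\frac{mkn\ln n}{\delta_2^2}\right)$, which contributes only $O\!\left(\frac{mk\ln n}{\delta_2^2}\right)$ to the expectation. Line~7 is then bounded the same way: the R-tuples with $\T_i(B)=0$ add the same constant to every node's coverage count, so they cost $O(l^*)$ total and can be dropped from the greedy selection, while the rest are processed in time linear in $\sum_i|\T_i(V^*)|$ plus $O(k)$ with a bucketed heap; since Alg.~\ref{alg:r_tuple_v} visits every node of $\T_i(V^*)$, we get $E\big[\sum_i|\T_i(V^*)|\big]\le l^*\cdot TIME\le l^*\cdot\frac{m}{n}OPT_k$, which is $O\!\left(\frac{mk\ln n}{\delta_2^2}\right)$ in expectation by exactly the estimate just made.

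The delicate phase is line~2, the call to Alg.~\ref{alg:OPT_k}. The outer loop runs at most $\log(n-1)$ times, and because $l_i=\lambda_3/x_i$ increases in $i$ and $R$ accumulates, the number of R-tuples generated is $l_{i^*}=\lambda_3/x_{i^*}$ at the terminating iteration, with $\lambda_3=O\!\left(\frac{nk\ln n}{\delta_2^2}\right)$. The key claim --- which is precisely what the proof of Lemma~\ref{lemma: opt_k_2} already establishes --- is that, for a suitably small constant $c$, letting $i_1$ be the first iteration with $x_{i_1}\le c\,OPT_k$, the test in line~9 passes at or before $i_1$ with high probability. Indeed $x_{i_1}=\Theta(OPT_k)$, so $l_{i_1}\cdot\frac{OPT_k}{n}=\Theta(\lambda_3/n)=\Omega\!\big(\ln n/\delta_2^2\big)$, and the lower Chernoff bound (\ref{eq:chernoff_2}) applied to $x(S_k,\T_i)$ (whose mean is $OPT_k/n$, by Lemma~\ref{lemma:key}) together with the greedy guarantee (\ref{eq:greedy}) gives $\frac{nF(S',R)}{l_{i_1}}\ge(1-1/e)\big(1-O(\delta_2)\big)OPT_k$ w.h.p., which for $c$ small enough exceeds $(1+\delta_3)x_{i_1}$; hence $l_{i^*}\le l_{i_1}=O\!\left(\frac{nk\ln n}{\delta_2^2\,OPT_k}\right)$ on this event. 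Its R-tuples are then generated in expected time $l_{i^*}\cdot\frac{m}{n}OPT_1\le l_{i^*}\cdot\frac{m}{n}OPT_k=O\!\left(\frac{mk\ln n}{\delta_2^2}\right)$, and the at most $\log(n-1)$ internal calls to Alg.~\ref{alg:set_cover} are each bounded as above and telescope to the same order if the coverage structure is maintained incrementally as $R$ grows. On the complementary low-probability event, $l_{i^*}\le l_{\log(n-1)}<2\lambda_3=O\!\left(\frac{nk\ln n}{\delta_2^2}\right)$, and multiplying by $\frac{m}{n}OPT_1\le m$ and by the event probability $O(1/n)$ absorbs it.

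Summing the three phases yields the claimed $O\!\left(\frac{km\ln n}{\delta_2^2}\right)$. The main obstacle is the early-termination estimate for Alg.~\ref{alg:OPT_k} in the last step; fortunately it is essentially contained in the proof of Lemma~\ref{lemma: opt_k_2}, so for the running-time argument it suffices to quote it in the form ``$l_{i^*}=O(\lambda_3/OPT_k)$ with high probability,'' after which everything else is linearity of expectation, Lemma~\ref{lemma:TIME}, and the $OPT_1\le OPT_k\le n$ inequalities used to cancel the $OPT_k$ against the $l^*=\Theta(nk\ln n/(\delta_2^2\,OPT_k))$ sample size.
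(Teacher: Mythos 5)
Your argument is correct and follows the same template as the paper's own proof: bound the number of R-tuples generated in each phase, charge each tuple its expected generation cost $\frac{m}{n}\cdot OPT_1$ via Lemma \ref{lemma:TIME}, cancel the resulting $OPT$ factor against the $\Theta\bigl(\frac{nk\ln n}{\delta_2^2\, OPT_k}\bigr)$ sample sizes, and use the linear-time implementation of the greedy cover. Where you diverge is in the details, and on two of them your version is the more accurate one. First, the paper attributes the bound on the number of R-tuples produced by Alg.~\ref{alg:OPT_k} to Lemma \ref{lemma: opt_k_1}, whereas the early-termination guarantee $x_{i^*}=\Omega(OPT_k)$ (hence $l_{i^*}=O(\lambda_3/OPT_k)$) really rests on the Claim \ref{claim: 3} / Lemma \ref{lemma: opt_k_2} machinery, which is exactly what you invoke. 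Second, the paper asserts that lines 6--7 of Alg.~\ref{alg:algortihm} are dominated by line 2; with $\delta_3$ treated as a constant and $\delta_2$ the small parameter carrying the $1/\delta_2^2$ in the stated bound, the domination actually goes the other way, and you correctly locate the main cost in the generation of the $l^*$ tuples. Your explicit accounting for the low-probability events (where $OPT_k^*$ can be as small as $1$) is also absent from the paper. The only residual informality in your sketch is that multiplying an expected tuple count by an expected per-tuple cost should formally pass through Wald's identity, since the number of tuples drawn by Alg.~\ref{alg:OPT_k} is a stopping time and the count produced for line 6 depends on the outcome of line 2; this is routine to repair, and the paper's proof does not address it at all.
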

\begin{proof}
Alg. \ref{alg:set_cover} can be implemented to run in time linear to the total size of its input \cite{vazirani2013approximation}. Alg. \ref{alg:OPT_k} invokes Alg. \ref{alg:set_cover} with input size from $O(\ln \ln n)$ to $\lambda/x_i$ where $i$ is the index of the last iteration and the input size is doubled in each iteration. By Lemma \ref{lemma: opt_k_1}, the total number R-tuples generated by Alg. \ref{alg:OPT_k} is $O(\frac{n\ln n}{OPT_k})$ and, by Lemma \ref{lemma:TIME}, line 2 of Alg. \ref{alg:algortihm} runs in $O(m\ln n)$. The running time of lines 6 and 7 is dominated by that of line 2. Therefore, the running time of Alg. \ref{alg:algortihm} is $O(\frac{km\ln n}{\delta_2^2})$ by taking $N$, $\delta_1$, $\delta_3$ as constants and assuming $m \geq n$.
\end{proof}

\subsection{Parameters}
As shown in Theorem. \ref{theorem:approximation}, $N$ and $\delta_2$ controls the success probability and the approximation ratio, respectively. When $\delta_2$ is fixed, we select the $\delta_1$ such that $l^*$ can be minimized to reduce the running time. As shown in Alg. \ref{alg:OPT_k}, $\delta_3$ decides the value of $OPT_k^*$. When $\delta_3$ is getting larger, Alg. \ref{alg:OPT_k} takes less time while Alg. \ref{alg:algortihm} takes more time because, by Theorem \ref{theorem:main},  $OPT_k^*$ becomes smaller and consequently $l^*$ becomes larger. In experiments, we simply set that $\delta_2=\delta_3=0.1$ and $N=n$.


\begin{table}[t]
\renewcommand{\arraystretch}{1.2}
\centering
{\begin{tabular}{ |C{1.8cm} | C{1.0cm}| C{1.0cm} |C{2.8cm}|}
\hline
\textbf{Dataset} 	& \textbf{Node\#} 		& \textbf{Edge\#} 		& \textbf{Average out-Degree} \\   
\hline
Power2500 	& 2.5K		& 26K 		& 20.8 \\
\hline
Wiki 	& 7K		& 30K 		& 12.0 \\
\hline
Epinions 	& 75K		& 508K 		& 13.4 \\
\hline
Youtube 	& 1.1M		& 6.0M 		& 5.4 \\
\hline
\end{tabular}}
\caption{\textbf{Datasets}}
\label{table:datasets}
\vspace{-1mm}
\end{table}

\begin{figure*}[t]
\centering
\subfloat[Power2500 under CP model]{\label{fig:pl_01}\includegraphics[width=0.23\textwidth]{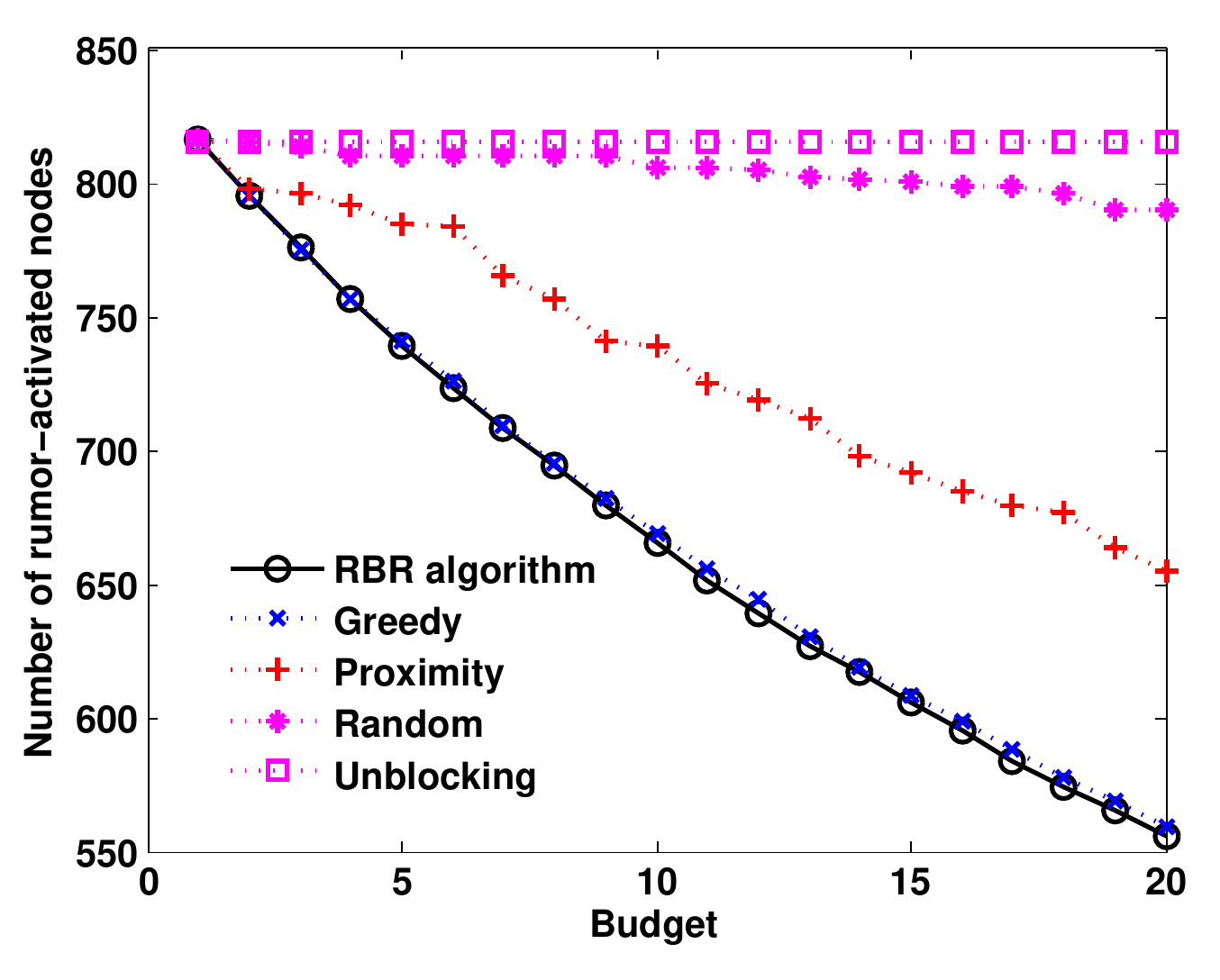}} \hspace{0mm} \ 
\subfloat[Wiki under CP model]{\label{fig:wiki_01}\includegraphics[width=0.23\textwidth]{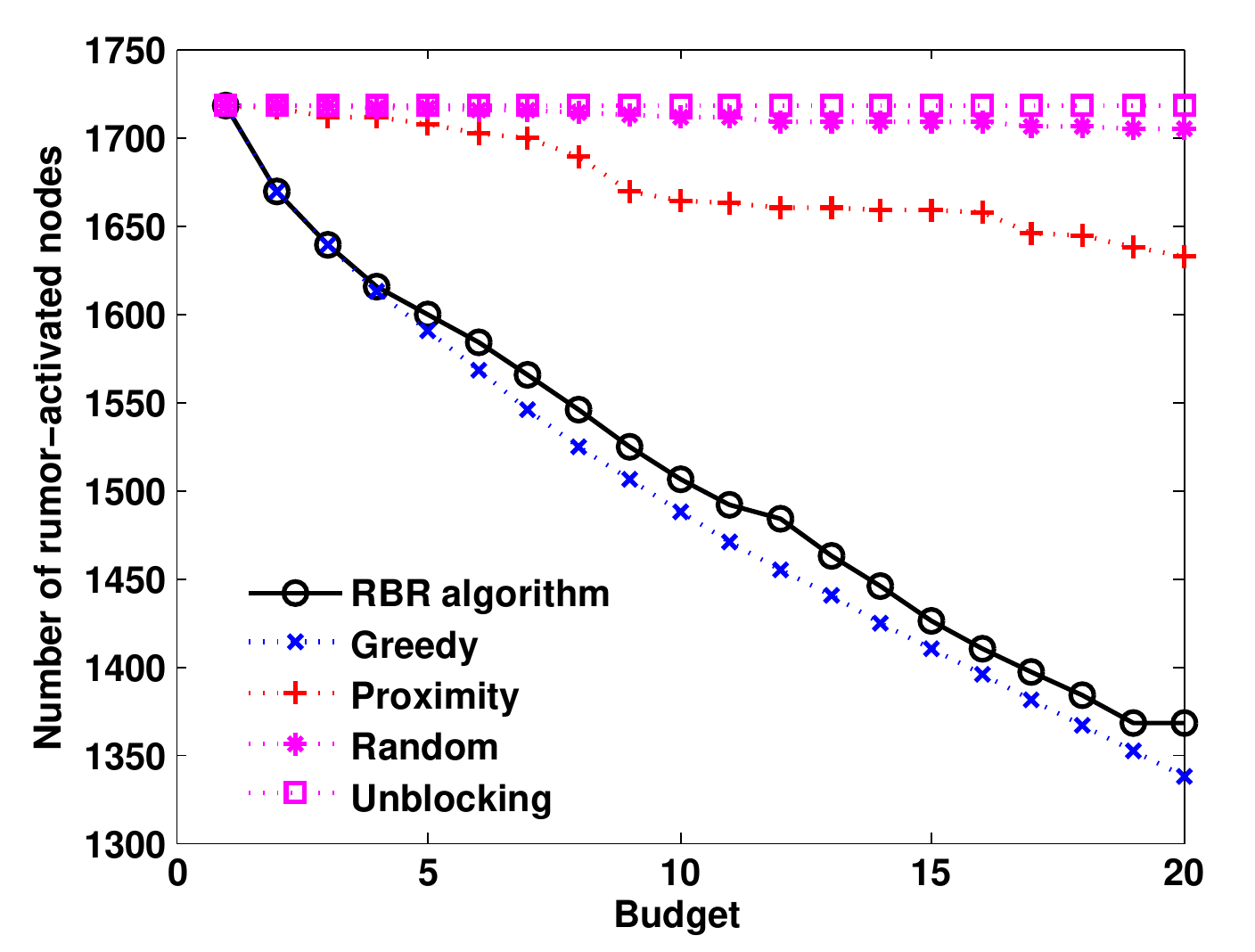}} \hspace{0mm} 
\subfloat[Epinions under CP model]{\label{fig:epin_01}\includegraphics[width=0.23\textwidth]{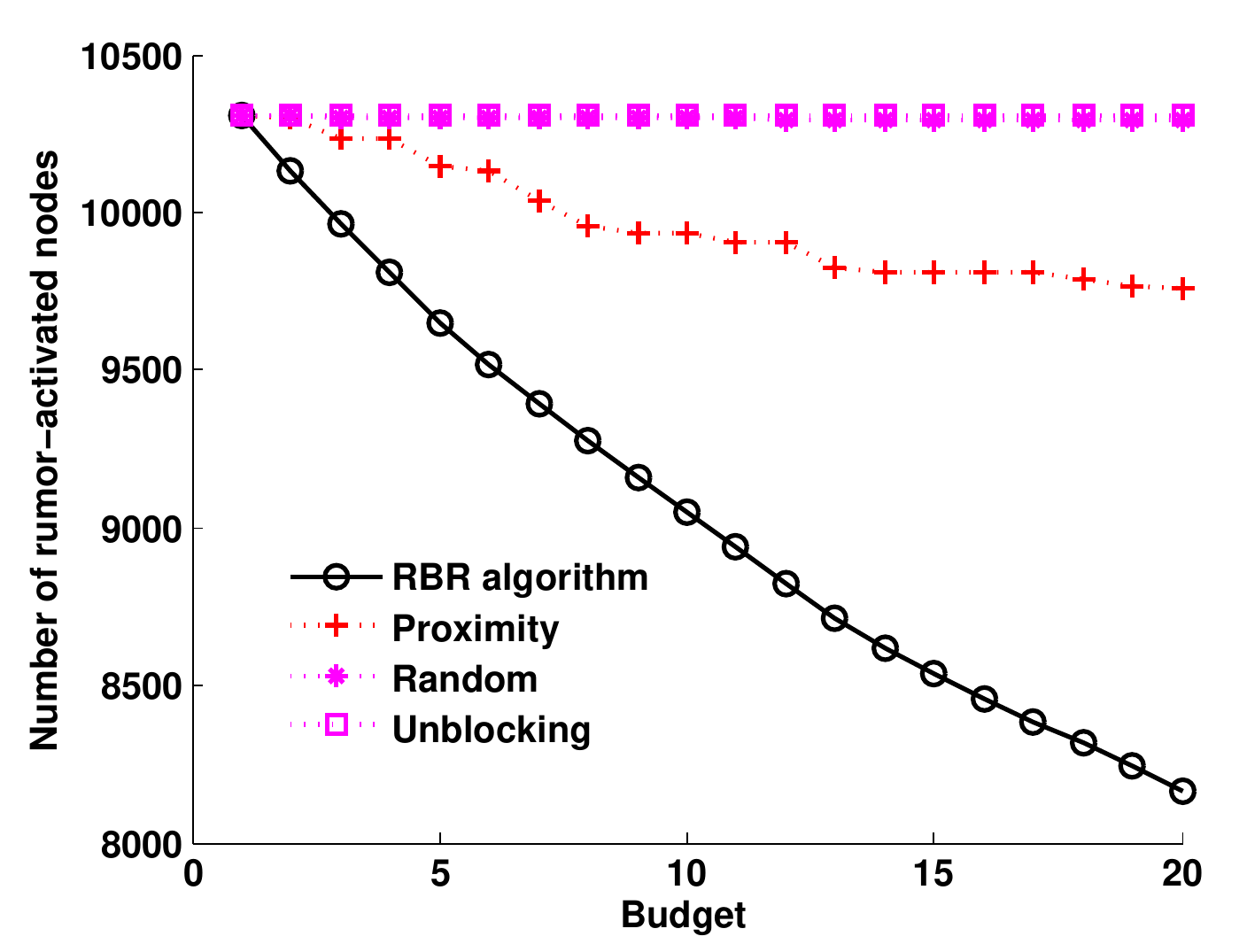}} \hspace{0mm} 
\subfloat[Youtube under CP model]{\label{fig:youtube_01}\includegraphics[width=0.23\textwidth]{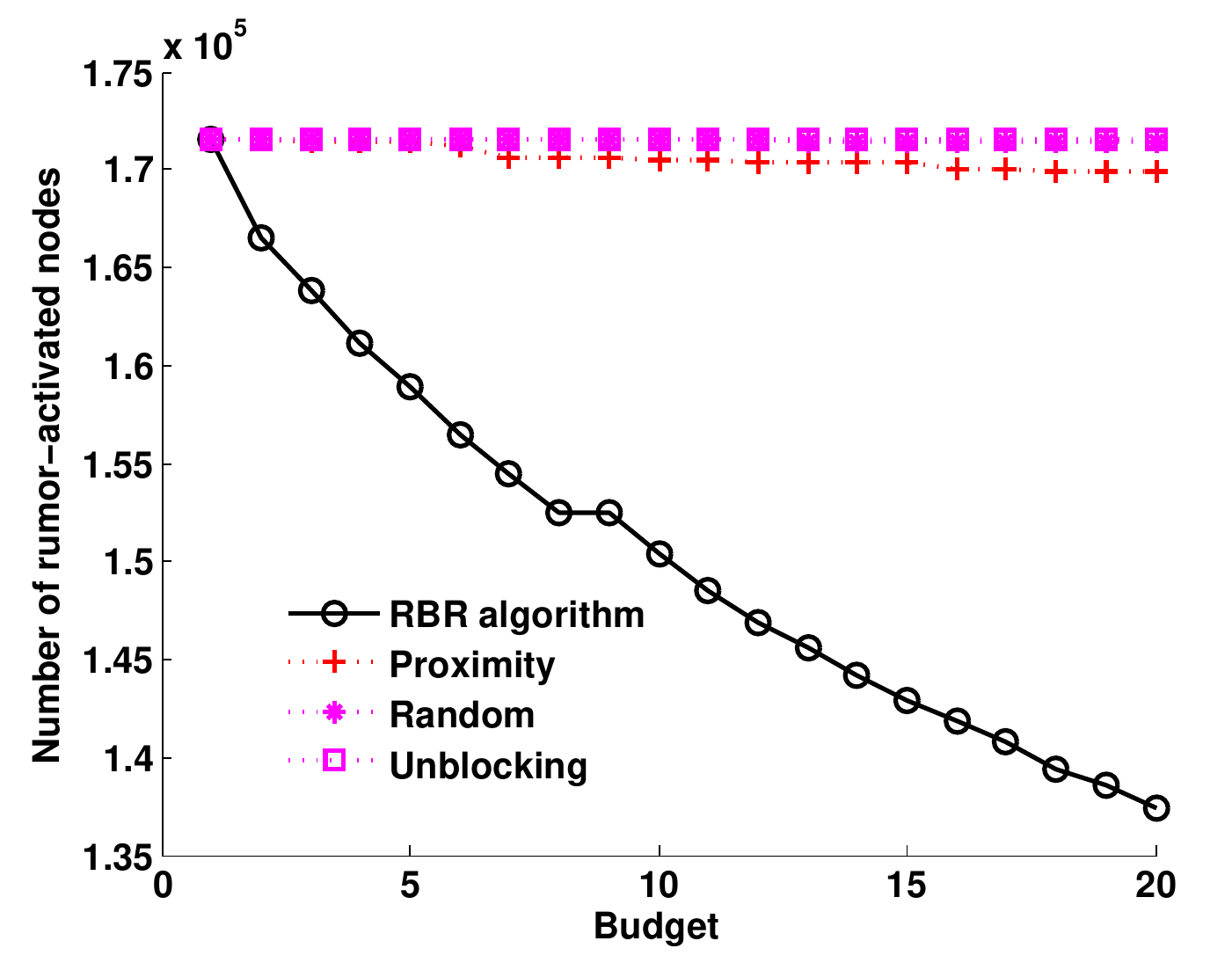}} \hspace{0mm} 

\subfloat[Power2500 under WC model]{\label{fig:pl_wc}\includegraphics[width=0.23\textwidth]{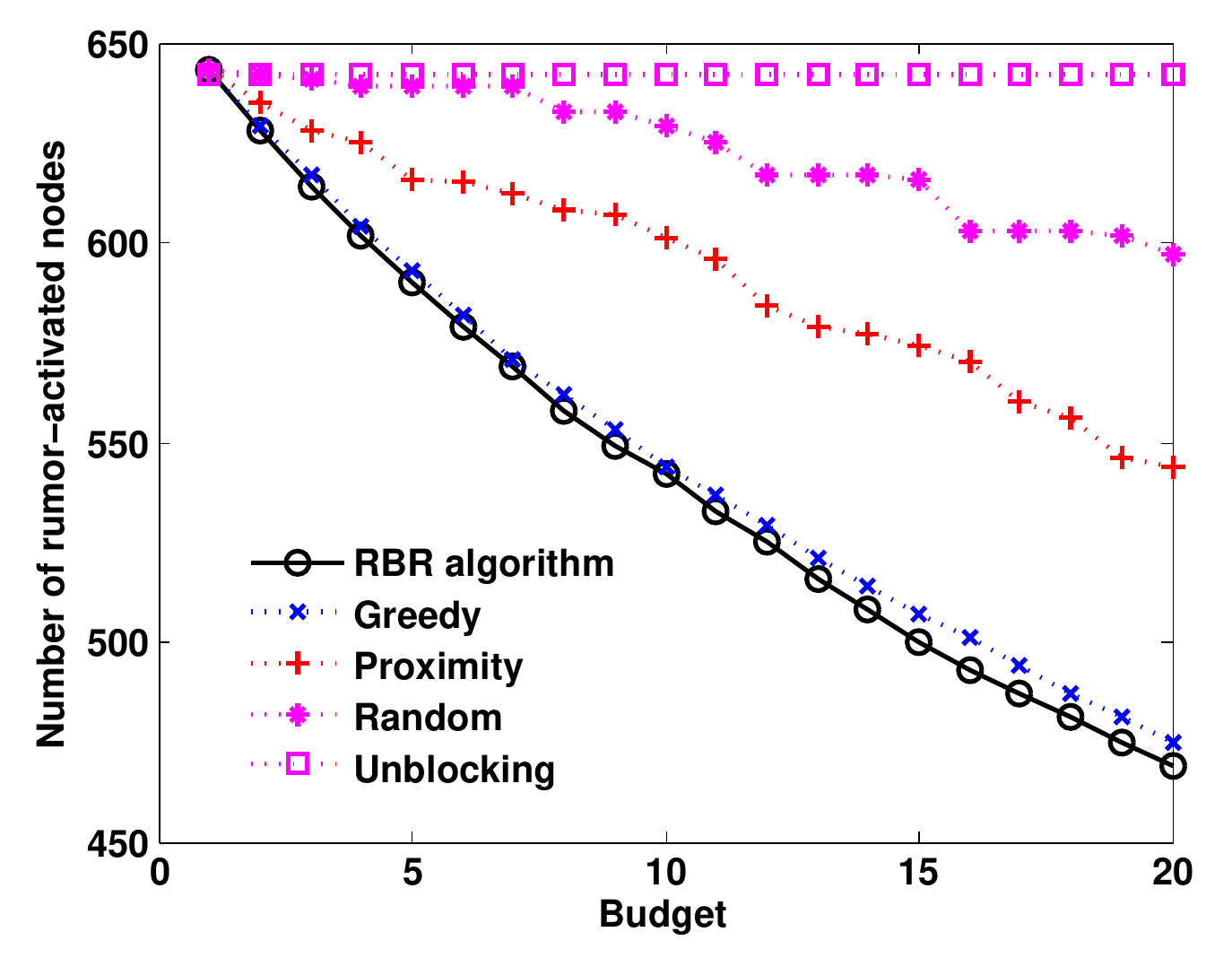}} \hspace{0mm} \ 
\subfloat[Wiki under WC model]{\label{fig:wiki_wc}\includegraphics[width=0.23\textwidth]{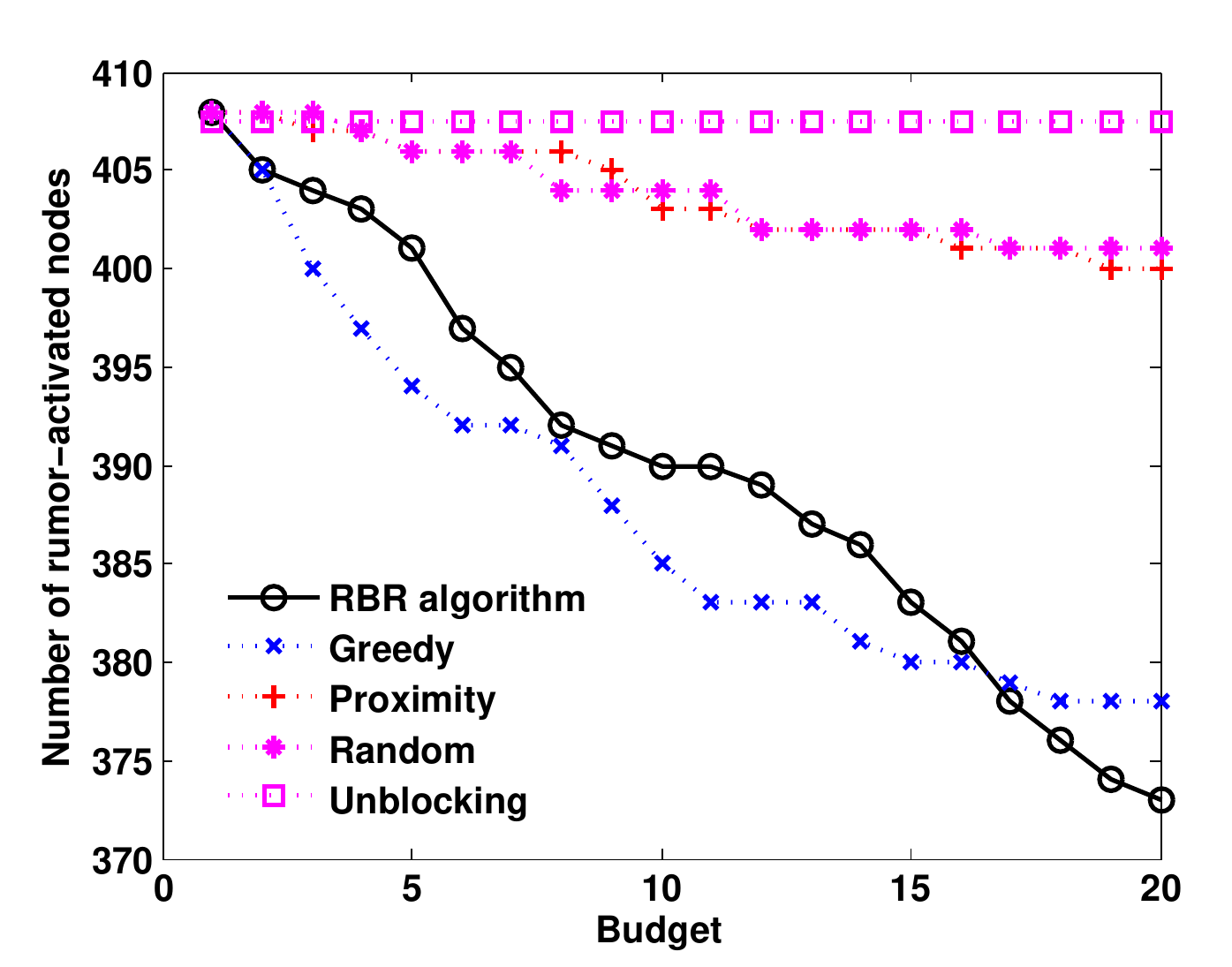}} \hspace{0mm} 
\subfloat[Epinions under WC model]{\label{fig:epin_wc}\includegraphics[width=0.23\textwidth]{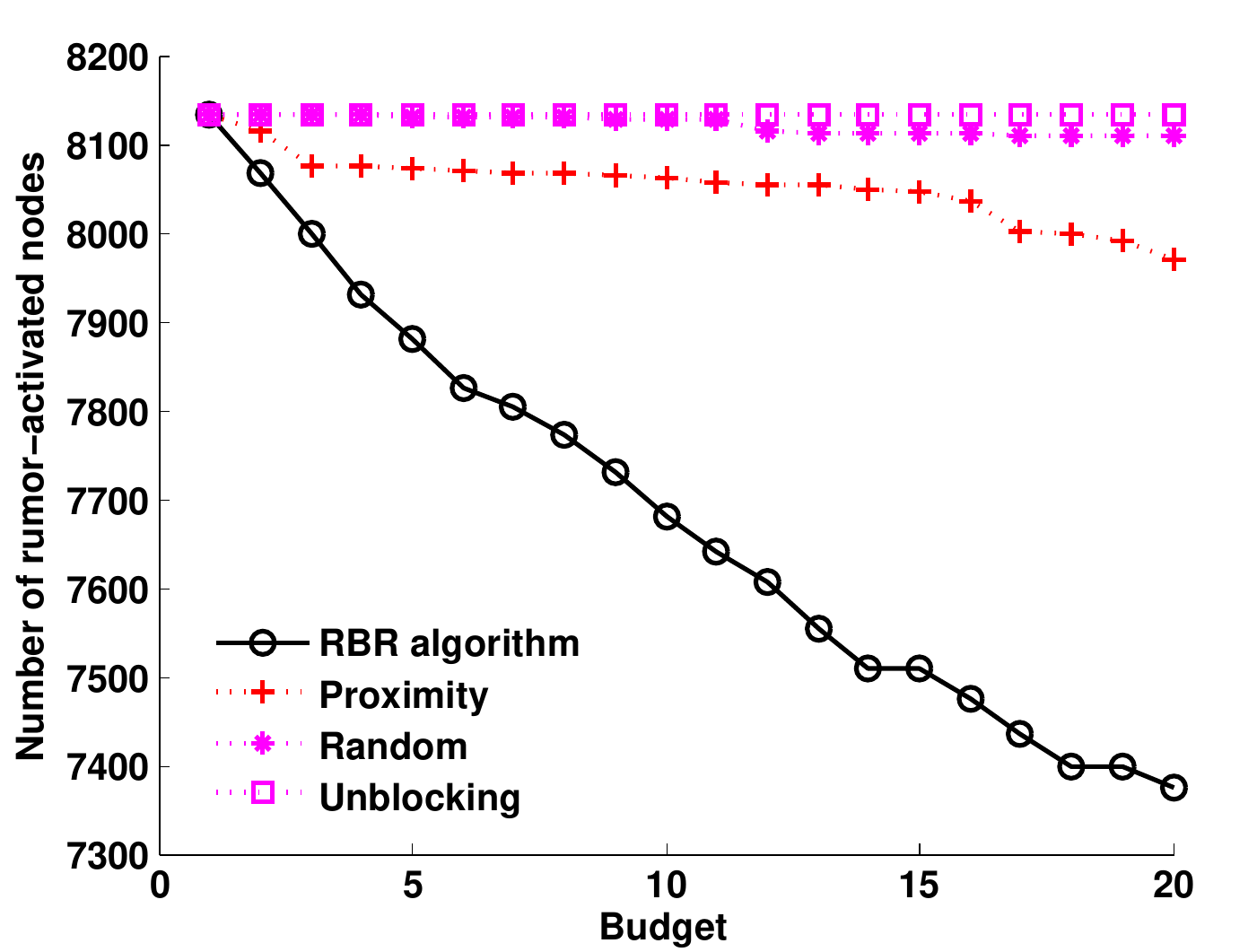}} \hspace{0mm} 
\subfloat[Youtube under WC model]{\label{fig:youtube_wc}\includegraphics[width=0.23\textwidth]{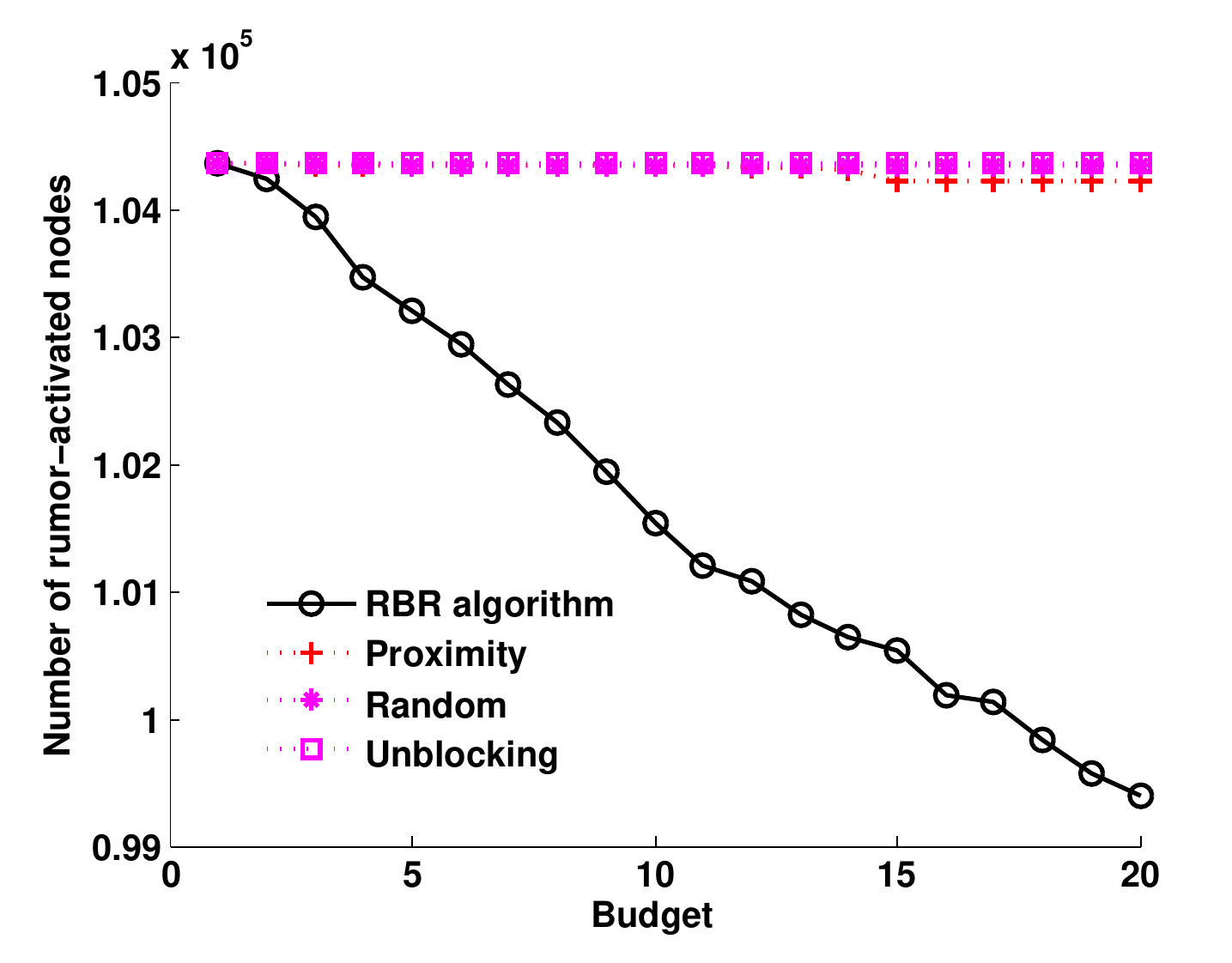}} \hspace{0mm} 
\caption{Experimental results} 
\label{fig: exp1}
\end{figure*}

\section{Experiment}
\label{sec:exp}
In this section, we evaluate the performance of RBR algorithm with respect to the state-of-the-art method and other heuristics. Besides, we also discuss the running time of the considered algorithms.

In our experiments, we employ four datasets, \textbf{Power2500}, \textbf{Wiki}, \textbf{Epinion} and \textbf{Youtube}, scaling from small to large. Power2500 is a synthetic power-law network generated by DIGG \cite{cowendigg}. It has been shown that the power-law distribution is one of the most important characteristics of social networks \cite{clauset2009power}. Wiki is a who-votes-on-whom network extracted from the vote history data of Wikipedia\footnote{https://www.wikipedia.org/}. Epinions is a who-trust-whom online social network extracted from the consumer review site Epinions.com. Youtube is a social network of a video-sharing website. Wiki, Epinions and Youtube are provided by the SNAP\cite{snapnets}. The basic statistics of the above datasets are shown in Table \ref{table:datasets}. The probability on the edges is either uniformly set as 0.1 or $p_{(u,v)}$ is set as $1/|N^-_v|$. These two settings are denoted as constant probability (CP) model and weighted cascade (WC) model. The above datasets together with the probability settings are widely used in the prior works.


\begin{table*}[t]
\centering
\renewcommand{\arraystretch}{1.2}
\begin{tabular}{|c|c c|c c|c c|c c|p{0.45cm} p{0.45cm}|p{0.45cm} p{0.45cm}|p{0.45cm} p{0.45cm}|p{0.45cm} p{0.45cm}|}
\hline
{\textbf{Dataset}}&\multicolumn{4}{c|}{\textbf{Power2500}} & \multicolumn{4}{c|}{\textbf{Wiki}} & \multicolumn{4}{c|}{\textbf{Epinions}} & \multicolumn{4}{c|}{\textbf{Youtube}}\\
\hline
{\textbf{Model}}& \multicolumn{2}{c|}{\textbf{CP}} & \multicolumn{2}{c|}{\textbf{WC}} & \multicolumn{2}{c|}{\textbf{CP}} & \multicolumn{2}{c|}{\textbf{WC}} & \multicolumn{2}{c|}{\textbf{CP}} & \multicolumn{2}{c|}{\textbf{WC}}& \multicolumn{2}{c|}{\textbf{CP}} & \multicolumn{2}{c|}{\textbf{WC}} \\
\hline
&time& \# R & time& \# R & time& \# R & time& \# R & time& \# R & time& \# R & time& \# R & time& \# R\\ 
\cline{1-17}
RBR&0.31s& 220K &  0.36s  & 211K & 0.33s & 236K & 0.20s & 208K & 1.7s & 248K & 0.68s &  274K& 42s & 348K & 9.4s & 336K\\ \hline
Greedy&99min& n/a &  48min  & n/a & 832min & n/a & 318min & n/a & n/a & n/a & n/a &  n/a& n/a & n/a & n/a & n/a\\ \hline
\end{tabular}
\caption{Running time and the number of R-tuples when $k=20$.}
\label{table: exp1}
\end{table*}


We consider four rumor blocking algorithms shown as follows:
\begin{itemize}
\item \textbf{RBR algorithm.} This is the algorithm proposed in this paper. We set $\delta_2=\delta_3=0.1$ and $N=n$ by default. 
\item \textbf{Greedy.} This is the state-of-the-art rumor blocking algorithm using the Monte Carlo simulation. 2,000 simulations are used for each estimation. Greedy is only tested on small graphs, Power2500 and Wiki.
\item \textbf{Proximity.} This is a popular heuristic algorithm which selects the out-neighbors of the rumor seed nodes as the positive seed nodes. In particular, we give an index to each node and select the neighbors with the highest index.
\item \textbf{Random.} This is a baseline method where the positive seed nodes are randomly selected.
\item \textbf{Unblocking.} This is the base case when there is no positive cascade. 
\end{itemize}

In our experiments, the rumor seed nodes are selected from the nodes with the highest degree. The number of rumor seed nodes is set as 20 and the budget of positive seed set is selected from $1$ to $20$. The function value $f(S)$ of the seed set $S$ produced by each algorithm is finally evaluated by $n\cdot F(S,R_l)/l$ with $l=1,000,000$ where the R-tuples are separately generated.  

We conduct two series of experiments. In the first experiment, we evaluate the performance of RBR algorithm. In the second experiment, we investigate how many R-tuples that RBR algorithm needs to produce a high quality seed set.    



\subsection{Results}
The analysis of the experimental results of the two series of experiments are shown in the following two subsections, respectively.
\subsubsection{Experiment \Romannum{1}}
In the first set of experiments, we compare the RBR algorithm with other existing methods. The experimental results are shown in Fig. \ref{fig: exp1} and Table \ref{table: exp1}.

The results on graph Power2500 are shown in Figs. \ref{fig:pl_01} and \ref{fig:pl_wc}. One can see that RBR algorithm and Greedy have the same performance with respect to $f()$. Nevertheless, RBR algorithm is more efficient than Greedy with respect to running time, as shown in Table \ref{table: exp1}. For example, under the CP model on Power2500 with $k=20$, RBR takes 0.31 second while Greedy takes about 1.3 hour. 

The results on the Wiki dataset are shown in Figs. \ref{fig:wiki_01} and \ref{fig:wiki_wc}. Under the CP model, RBR algorithm is able to reach at least 97.98\% blocking effect of the Greedy algorithm with respect to $f()$. Under the WC model, Greedy performs better than RBR does until $k$ is larger than 16. Recall that 2,000 simulations are used by Greedy for each estimation. Such a phenomenon suggests that, when $k$ is larger than 16, more simulations are required to maintain the accuracy of the estimates so that Greedy is able to achieve the (1-1/e)-approximation. However, as shown in Table \ref{table: exp1}, Greedy has already been very time consuming on Wiki with 2,000 simulations, and therefore using more simulations is not a good choice even though it may increase the quality of the produced seed set.  Despite that Wiki is larger than Power2500, comparing Figs. \ref{fig:pl_wc} and \ref{fig:wiki_wc}, one can see that when there is no positive cascade, 20 rumor seed nodes result 410 and 650 rumor-activated nodes on Wiki and Power2500, respectively, which indicates that the dense of the network has more impact on the influence diffusion than the network scale does.

The results on the Epinions dataset are shown in Figs. \ref{fig:epin_01} and \ref{fig:epin_wc}. One can see that on the large network RBR algorithm is superior to other heuristics by a significant margin. Under the CP model  when $k=20$, the RBR algorithm can protect about 2,000 users while Proximity protects 500 nodes. On Youtube, as shown in Figs. \ref{fig:youtube_01} and \ref{fig:youtube_wc}, RBR is still effective but other heuristics can hardly protect any node.  

\begin{figure}[pt]
\centering
\captionsetup{justification=centering}
\subfloat[Power2500 under CP model.]{\label{fig:pl_01_r}\includegraphics[width=0.23\textwidth]{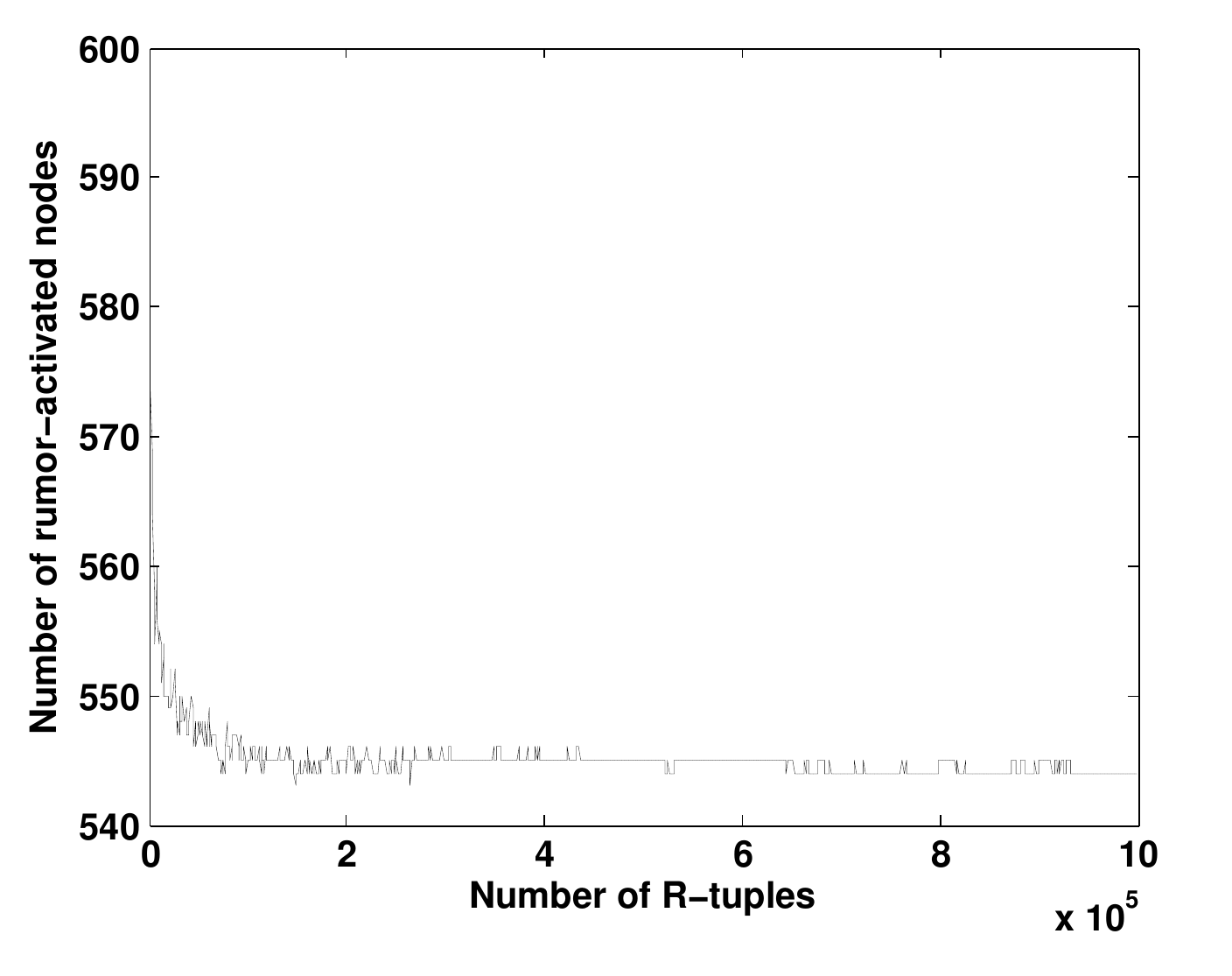}} \hspace{0mm}
\subfloat[Power2500 under WC model.]{\label{fig:pl_wc_r}\includegraphics[width=0.23\textwidth]{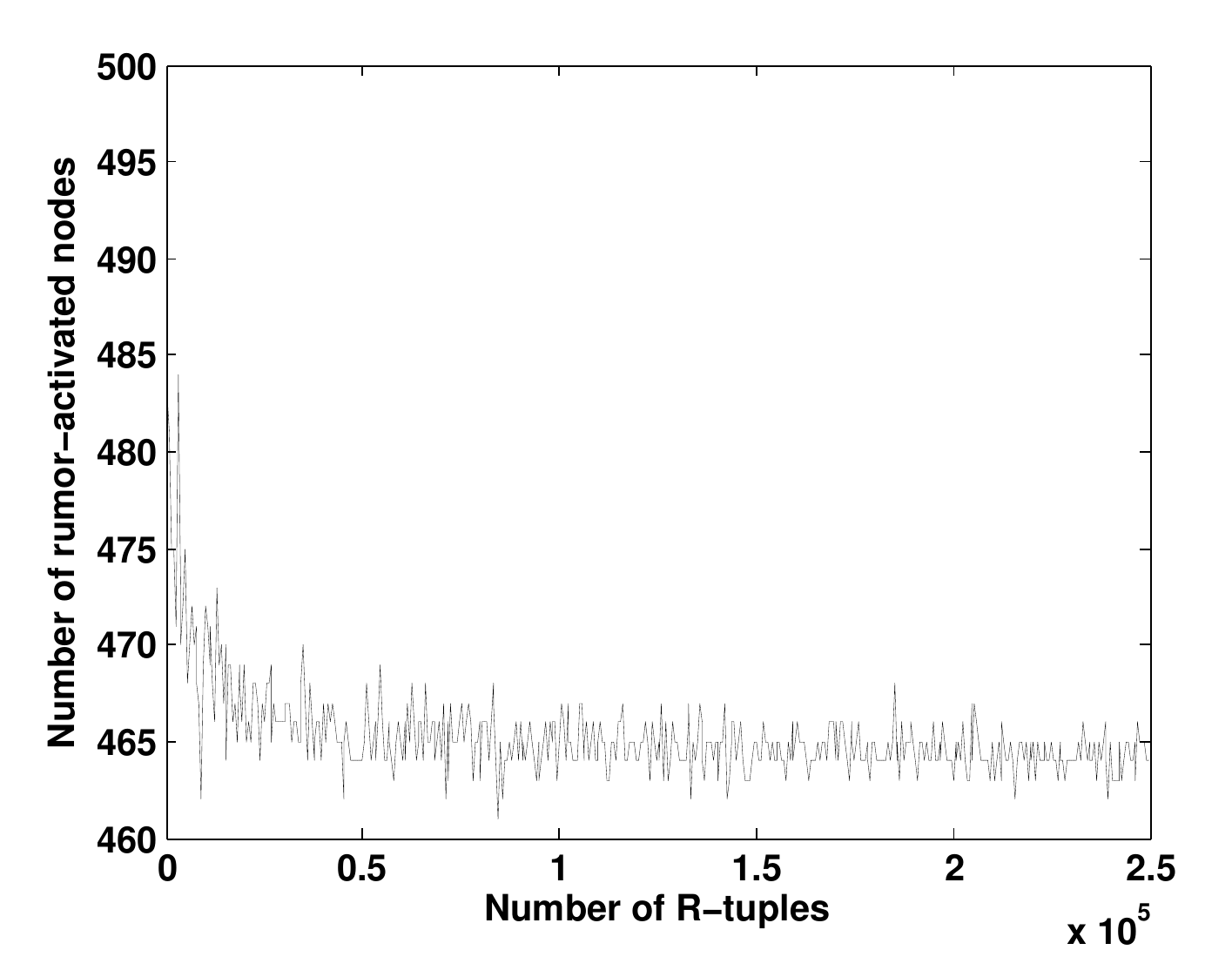}} \hspace{0mm}

\subfloat[Wiki under CP model.]{\label{fig:wiki_01_r}\includegraphics[width=0.23\textwidth]{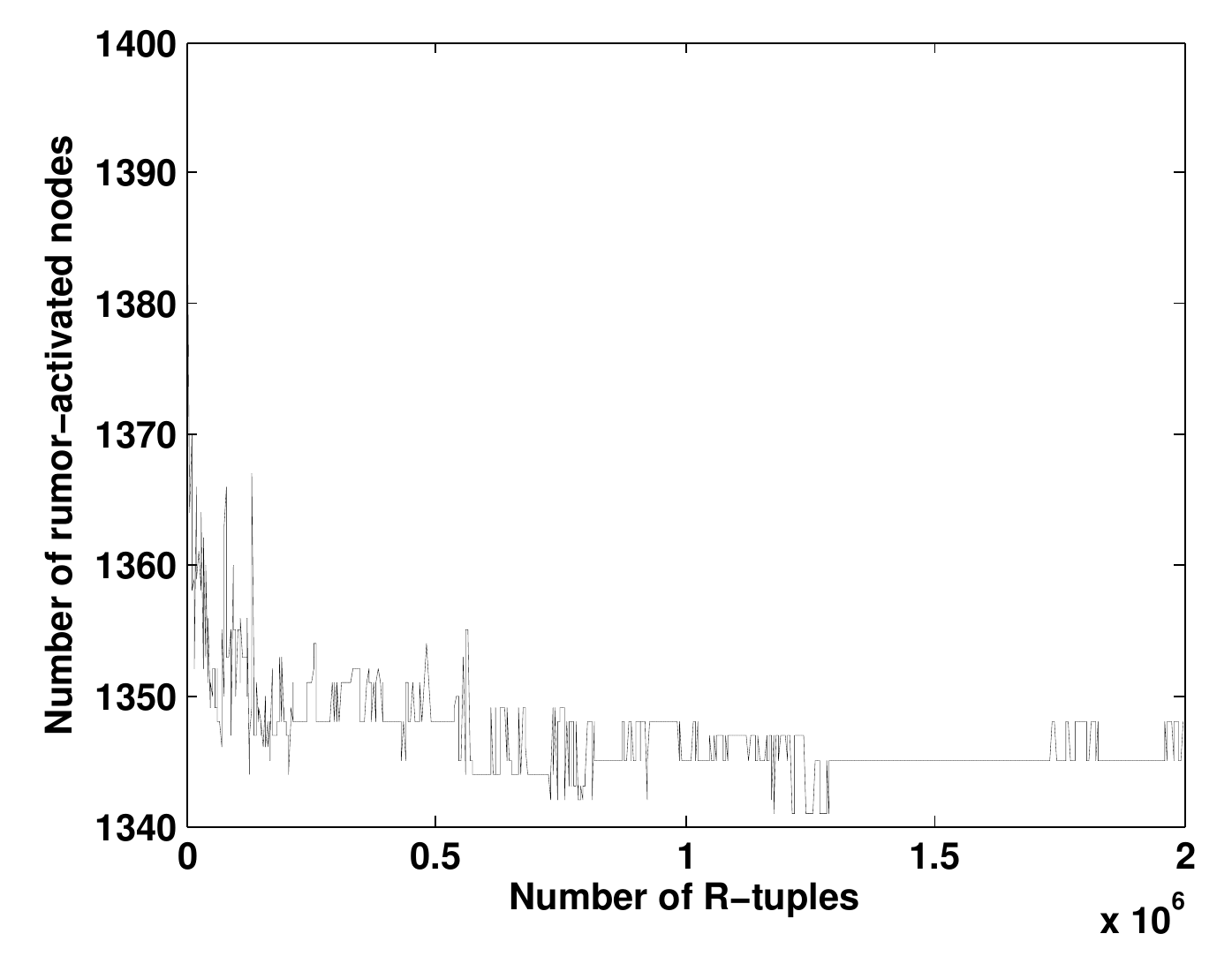}} \hspace{0mm}
\subfloat[Wiki under WC model.]{\label{fig:wiki_wc_r}\includegraphics[width=0.23\textwidth]{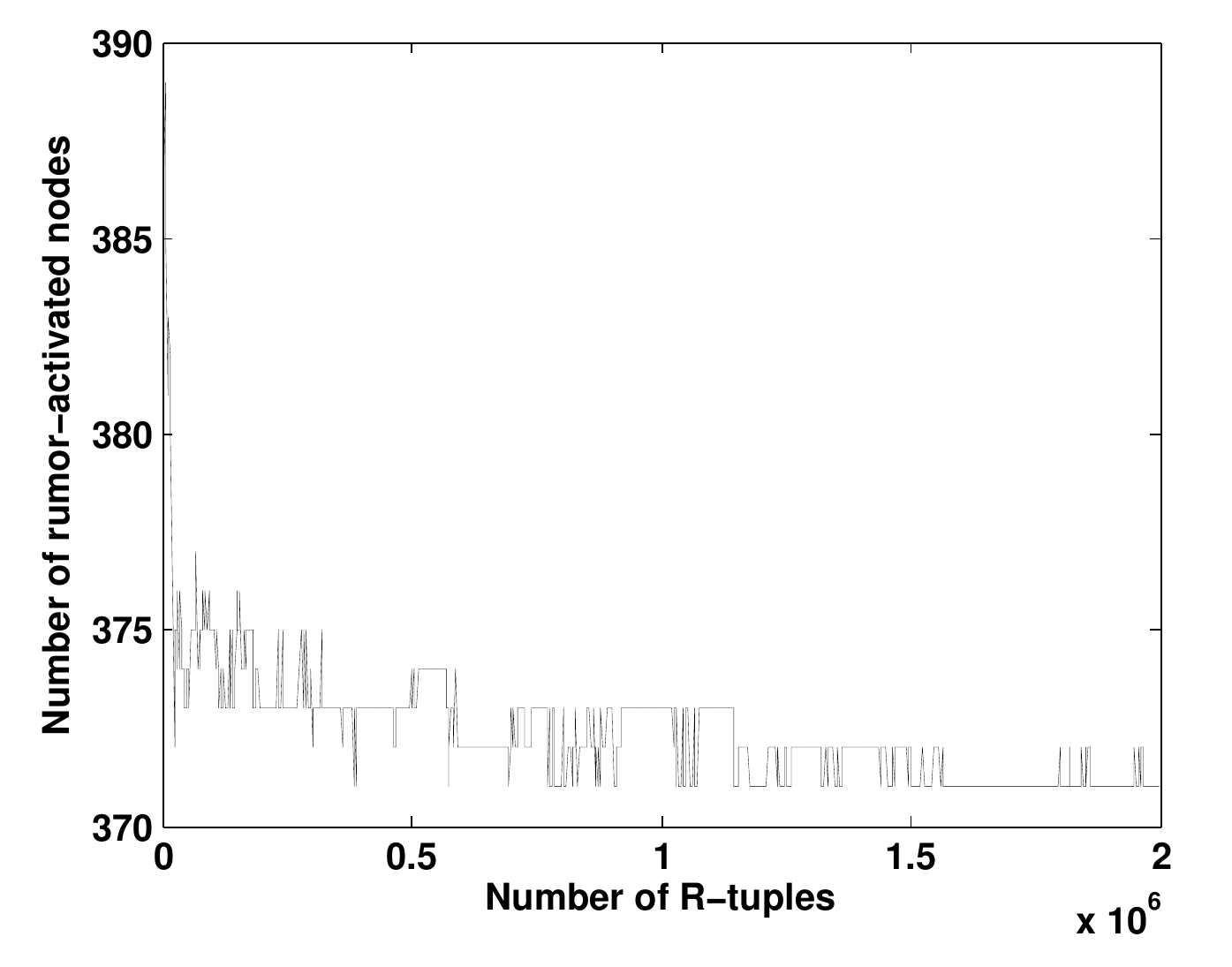}} \hspace{0mm}

\subfloat[Epinions under CP model.]{\label{fig:epin_01_r}\includegraphics[width=0.23\textwidth]{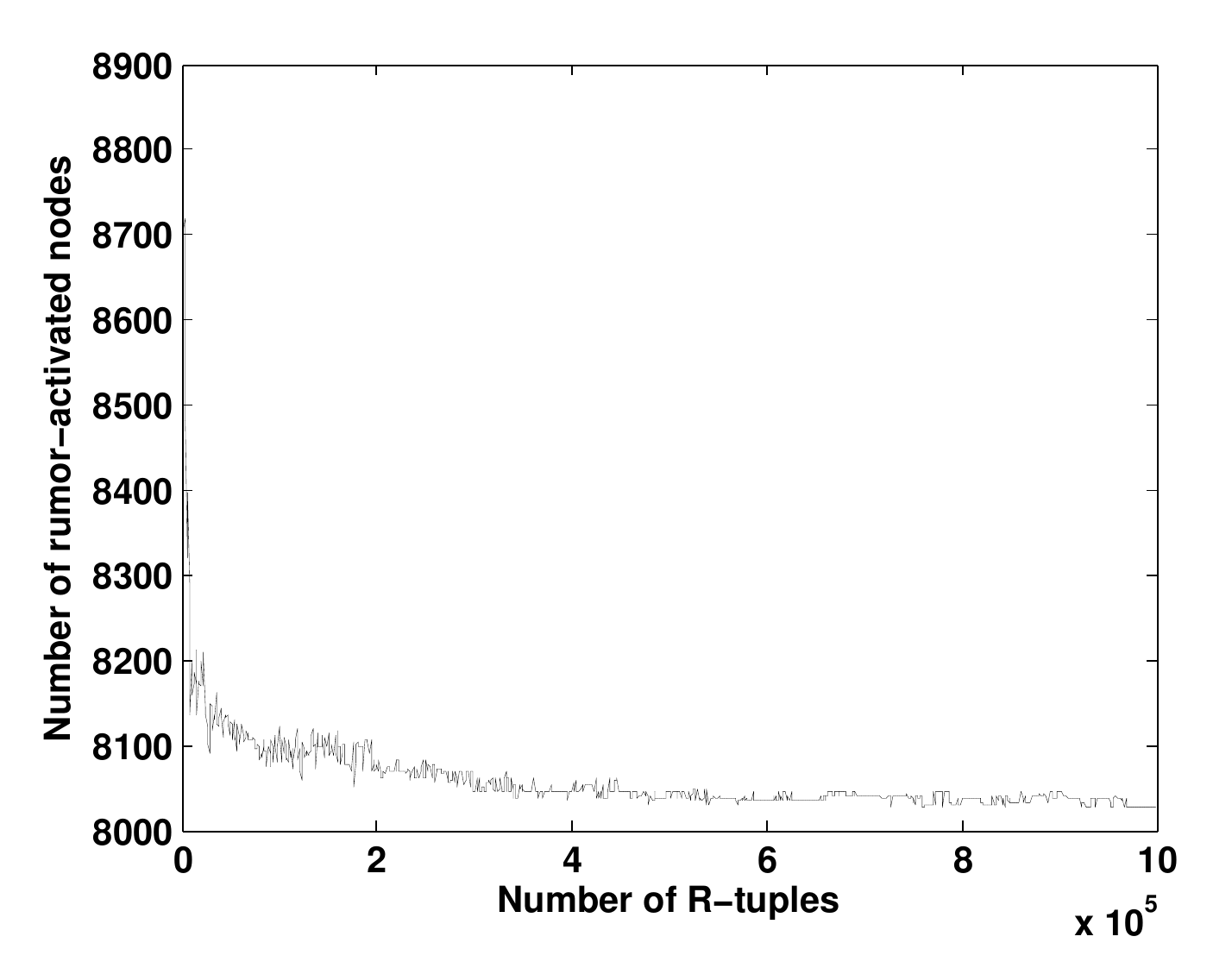}} \hspace{0mm}
\subfloat[Epinions under WC model.]{\label{fig:epin_wc_r}\includegraphics[width=0.23\textwidth]{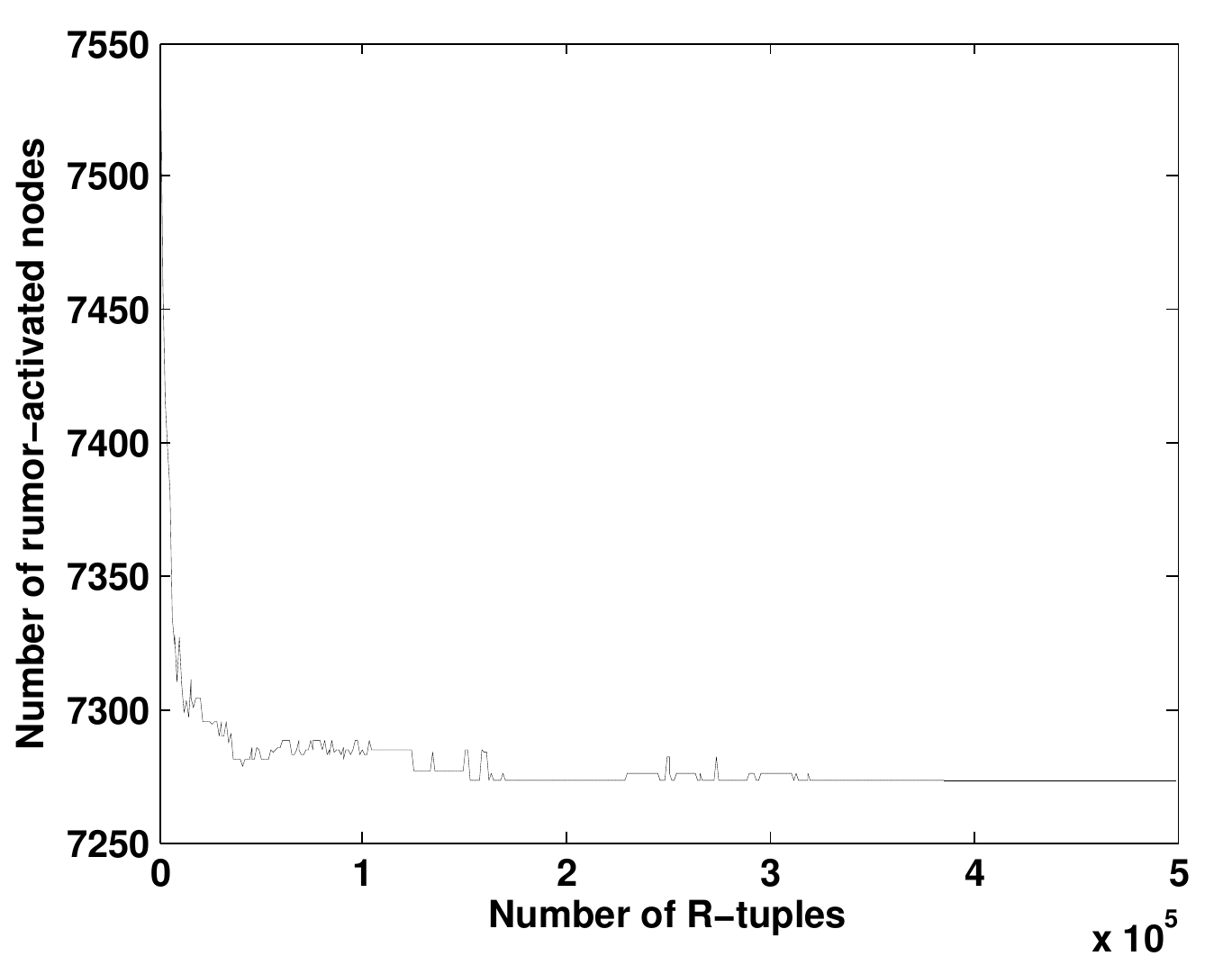}} \hspace{0mm}

\subfloat[Youtube under CP model.]{\label{fig:youtube_01_r}\includegraphics[width=0.23\textwidth]{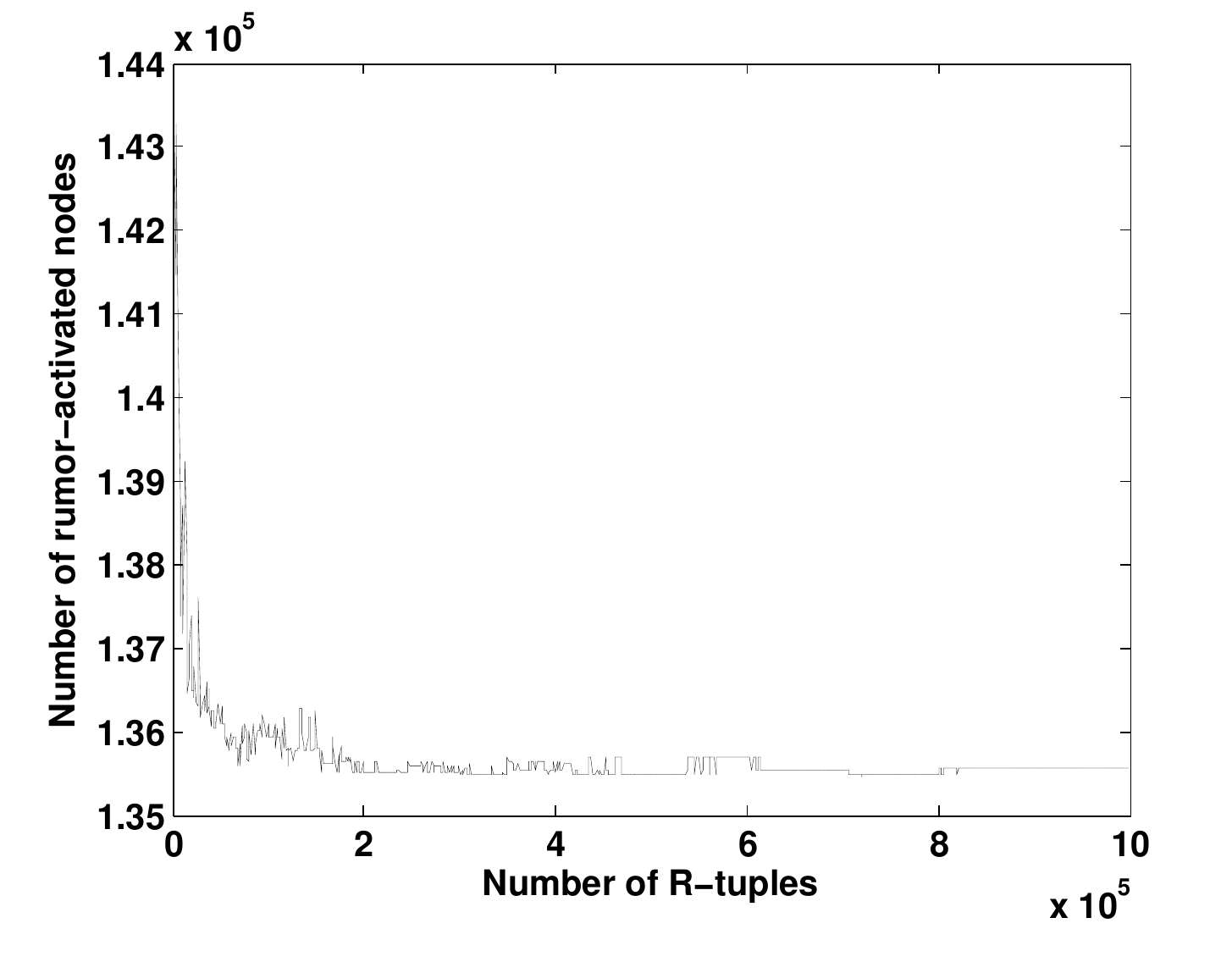}} \hspace{0mm}
\subfloat[Youtube under WC model.]{\label{fig:youtube_wc_r}\includegraphics[width=0.23\textwidth]{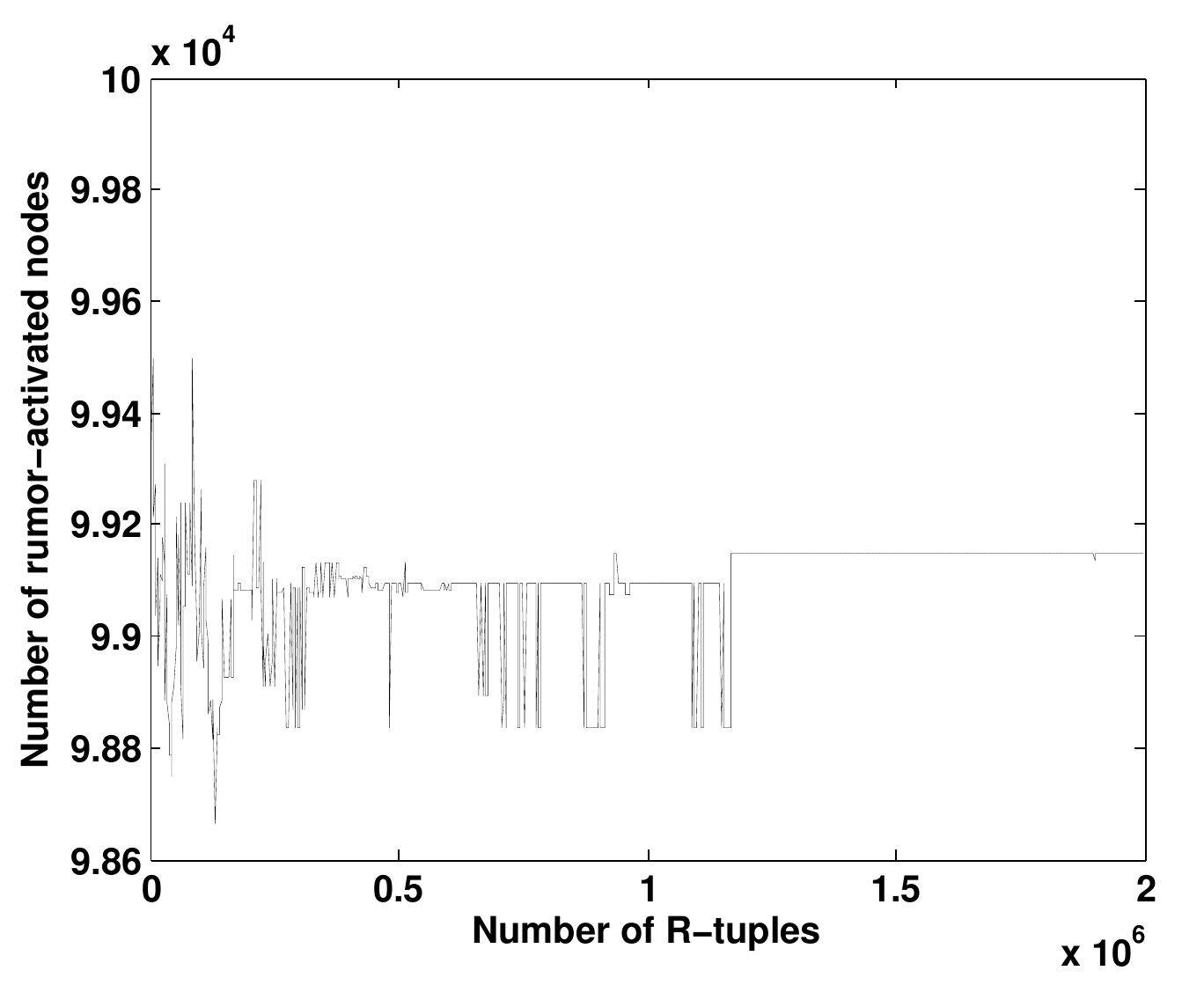}} \hspace{0mm}
\vspace{-2mm} \normalsize
\caption{\small \textbf{Results of experiment \Romannum{2}}} 
\label{fig:r}
\end{figure}

\subsubsection{Experiment \Romannum{2}}
As shown in Sec. \ref{sec:algorithm}, the main part of the analysis focuses on determining an threshold of the number of R-tuples used in Alg. \ref{alg:algortihm}. In this section, we experimentally test how the quality of the seed sets varies with the increase of used R-tuples. In particular, we are interested in that whether or not the number of R-tuples used by RBR algorithm is proper. To this end, instead of calculating $l^*$ as shown from line 2 to line 5 in Alg. \ref{alg:algortihm}, we explicitly set $l^*$ and then run the rest of the Alg. \ref{alg:algortihm} from line 6 to 8. For each dataset, we increase $l^*$ until the quality of the produced seed set tends to converge. The results are given by Fig. \ref{fig:r}. 

According to Fig. \ref{fig:r} and Table \ref{table: exp1}, RBR generates sufficient number of R-tuples in practice for most of the considered datasets. For example, on graph Power2500 under CP model, RBR totally generates 220K R-tuples as shown in the second column of Table \ref{table: exp1}, and, as shown in Fig. \ref{fig:pl_01_r}, the quality of the seed set does not markedly increases  when more than 200K R-tuples have been used. For this case, 220K R-tuples are sufficient as spending more R-tuples cannot help improve the quality. One has the same conclusion on the other three datasets. The only exception occurs on Youtube graph under WC model. For this case, RBR utilizes totally 336K R-tuples, while as shown in Fig. \ref{fig:youtube_wc_r} the standard deviation of $f()$ is about 400 when X axis is equal to 336K, and, it completely converges after 1,000K R-tuples are used. Such a case may suggest that, on large graphs, $\delta_2$ and $\delta_3$ should be set as smaller than 0.1 to raise the number of R-tuples used by Alg. \ref{alg:algortihm} such that the quality of the produced seed set can be more stable. In fact, learning the best perimeter setting is an interesting problem and we leave this part as future work.

\section{Conclusion and Future Work}
\label{sec:con}
In this paper, we have studied the rumor blocking problem for online social networks. We first design the R-tuple based sampling method and then present a randomized rumor blocking algorithm. The proposed RBR algorithm theoretically dominates the existing rumor blocking algorithms, and as shown in the experiments it is very efficient without sacrificing the blocking effect. 

One promising future work is to investigate the rumor blocking problem under other models, namely LT model. It is worthy to note that the rumor blocking problem under LT model is significantly different from that under the IC model. Another direction of future work, as mentioned in Sec. \ref{sec:exp}, is to study the parameter setting of the RBR algorithm. Finally, exact algorithm designed based on R-tuple sampling method is possibly obtainable for special graph structures like trees and regular graphs.
\appendices
\section{Proofs}
\subsection{Proof of Lemma \ref{lemma: condition}}
\label{appendix: proof_lemma: condition}
We first prove a useful property.
\begin{claim}
\label{claim: 1}
Suppose that $\dis_g(S_r,u) \neq \infty$. Let $a_u=\argmin_{v \in S_r}\dis_g(v,u)$ and $$(v_1=a_u, v_2,..., v_{l-1},v_l=u)$$ be a shortest path from $a_u$ to $u$ in $g$. If $\dis_g(S_r,u) \leq \dis_g(S_p,u)$, then all the nodes in $(v_1=a_r, v_2,...,v_l=u)$ will be activated by rumor in $g$ under $S_p$ and in particular $v_i$ will be activated at time step $i-1$.
\end{claim}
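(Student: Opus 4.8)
\textbf{Proof proposal for Claim \ref{claim: 1}.}

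The plan is to argue by induction on the index $i$ along the shortest path $(v_1 = a_u, v_2, \ldots, v_l = u)$, showing that $v_i$ is activated by rumor at time step $i-1$. The base case is immediate: $v_1 = a_u \in S_r$, so $v_1$ is activated by rumor at time $0 = i - 1$. For the inductive step, I would assume that $v_{i-1}$ is activated by rumor exactly at time step $i-2$, and I want to conclude that $v_i$ is activated by rumor at time step $i-1$. Since $(v_{i-1}, v_i)$ is an edge of $g$ (recall that in a realization every present edge has propagation probability $1$), $v_{i-1}$ will deterministically activate $v_i$ at time step $i-1$, unless $v_i$ has already been activated at an earlier time step $t < i-1$.

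So the heart of the argument is to rule out that $v_i$ becomes active before time $i-1$, and to rule out that $v_i$ is activated by the positive cascade at time $i-1$ rather than by rumor. For the first point: the sub-path $(v_1, \ldots, v_i)$ has length $i-1$, and it is a shortest path, so $\dis_g(a_u, v_i) = i-1$, hence $\dis_g(S_r, v_i) = i - 1$ as well (if there were a shorter rumor path to $v_i$ it could be extended to a shorter rumor path to $u$, contradicting that $a_u$ realizes $\dis_g(S_r, u)$ through this path; here I use that $(v_1,\dots,v_l)$ is a shortest $a_u$–$u$ path and $\dis_g(S_r,u)=\dis_g(a_u,u)$). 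In a deterministic IC realization a node cannot be activated before time $\dis_g(S_r \cup S_p, \cdot)$, and since we are given $\dis_g(S_r,u) \le \dis_g(S_p,u)$, a symmetric shortest-path argument gives $\dis_g(S_p, v_i) \ge \dis_g(S_r, v_i) = i-1$ — otherwise a short positive path to $v_i$ extends to a positive path to $u$ of length less than $\dis_g(S_r,u)$, contradicting the hypothesis $\dis_g(S_r,u)\le\dis_g(S_p,u)$. Therefore no cascade can reach $v_i$ before time $i-1$, so $v_i$ is still inactive at the start of time step $i-1$. At time step $i-1$ it is activated, and it is activated by rumor: the rumor cascade reaches it (via $v_{i-1}$), and even if the positive cascade also reaches it simultaneously at time $i-1$, the tie-breaking rule awards $v_i$ to rumor because rumor has higher priority. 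This closes the induction, and taking $i = l$ gives that $u = v_l$ is activated by rumor at time $l-1$.

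The main obstacle I anticipate is the careful bookkeeping of the shortest-path inequalities: one must be sure that $\dis_g(S_r, v_i) = i-1$ (not just $\le i-1$) and that $\dis_g(S_p, v_i) \ge i-1$, both of which follow from the optimality of the chosen path and the standing hypothesis $\dis_g(S_r,u) \le \dis_g(S_p,u)$, but which require the observation that any shorter route to an intermediate vertex $v_i$ could be spliced with the tail $(v_i, \ldots, v_l)$ to beat $\dis_g(S_r,u)$ or to violate $\dis_g(S_r,u)\le\dis_g(S_p,u)$. Once those two facts are in hand, the activation dynamics and the priority rule make the rest routine.
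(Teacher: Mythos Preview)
Your proposal is correct and follows essentially the same inductive strategy as the paper's proof: both argue along the shortest path $(v_1,\dots,v_l)$, use a path-splicing argument to show that no seed (rumor or positive) can reach an intermediate vertex $v_i$ in fewer than $i-1$ steps, and then invoke the rumor-priority tie-breaking rule to conclude that $v_i$ is rumor-activated at time $i-1$. Your presentation is in fact slightly more explicit than the paper's about the splicing step that yields $\dis_g(S_r,v_i)=i-1$ and $\dis_g(S_p,v_i)\ge i-1$, but the underlying argument is the same.
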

\begin{proof}
We prove this claim by induction from $v_1$ to $v_l$ along the path. First, $v_1=a_r$ is obviously activated by rumor at time step $0$ as it is a seed node of rumor. Now we prove that if $v_i$ is activated by rumor at time step $i-1$ then node $v_{i+1}$ will be activated by rumor at time step $i$. There are two cases to consider. 

\textit{Case 1.} If $v_{i+1}$ is activated by $v_i$ then clearly $v_{i+1}$ is activated by rumor at time step $i$.

\textit{Case 2.} Otherwise,  $v_{i+1}$ is activated by a neighbor $v^*$ other than $v_i$, which implies that $v^*$ is activated no later than $i-1$.  However, because $\dis_g(a_u,u)=\dis_g(S_r,u) \leq \dis_g(S_p,u)$ and  $(v_1=a_r, v_2,...,v_l=u)$ is the shortest path from $a_u$ to $u$, for any seed node $s \in S_p\cup S_r$, $\dis_g(a_u,v_i) \leq \dis_g(s, v_i)$, which means $v^*$ is activated no early than $\dis_g(a_u,v_i)=i-1$. Therefore, $v^*$ is activated at time step $i-1$ and $v^*$ and $v_i$ will attempt to activate $v_{i+1}$ simultaneously at time step $i+1$. Since rumor has the higher priority, $v_{i+1}$ will be activated by rumor regardless of whether or not $v^*$ belongs to rumor.
\end{proof}

Lemma \ref{lemma: condition} follows from the following two claims.

\begin{claim}
If $\dis_g(S_r,u) \leq \dis_g(S_p,u)$ and $\dis_g(S_r,u)\neq \infty$, then $u$ will be activated by rumor.
\end{claim}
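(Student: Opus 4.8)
The plan is to obtain this claim as an immediate corollary of Claim~\ref{claim: 1}, which has just been established. First I would unpack the two hypotheses. Since $\dis_g(S_r,u)\neq\infty$, the set $\argmin_{v\in S_r}\dis_g(v,u)$ is nonempty; fix such a node $a_u$ and a shortest path $(v_1=a_u,v_2,\dots,v_l=u)$ from $a_u$ to $u$ in $g$, whose length is $l-1=\dis_g(S_r,u)$. The remaining hypothesis $\dis_g(S_r,u)\le\dis_g(S_p,u)$ is exactly the precondition appearing in Claim~\ref{claim: 1}.

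Next I would simply apply Claim~\ref{claim: 1} to this path: it asserts that every node $v_i$ on the path is activated by rumor in $g$ under $S_p$, with $v_i$ activated at time step $i-1$. In particular the endpoint $v_l=u$ is activated by rumor (at time step $l-1=\dis_g(S_r,u)$), which is precisely the conclusion of the claim. No fresh induction is needed.

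I do not expect any real obstacle here, because the delicate reasoning is already packaged inside Claim~\ref{claim: 1} --- namely the argument that along a shortest path from $S_r$, no node of $S_p$ (nor any competing rumor route) can reach an intermediate vertex $v_i$ strictly before time $i-1$, so that $v_{i+1}$ is either activated directly by $v_i$ or activated simultaneously by both cascades and then assigned to rumor by the higher-priority rule. The only things to check at this level are bookkeeping: that the chosen path has length $\dis_g(S_r,u)$ and that the phrase ``$u$ will be activated by rumor'' in the statement is literally the $i=l$ instance of Claim~\ref{claim: 1}. Hence I would write the proof as a two- or three-sentence consequence of Claim~\ref{claim: 1} rather than repeating the inductive argument.
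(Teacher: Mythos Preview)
Your proposal is correct and matches the paper's own proof, which simply states that the claim follows directly from Claim~\ref{claim: 1}. You have spelled out the bookkeeping (choosing $a_u$, the shortest path, and reading off the conclusion at $v_l=u$) that the paper leaves implicit, but the approach is identical.
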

\begin{proof}
This claim follows directly from Claim \ref{claim: 1}.
\end{proof}

\begin{claim}
If $\dis_g(S_r,u) > \dis_g(S_p,u)$ or $\dis_g(S_r,u) \neq \infty$, then $u$ will not be activated by rumor.
\end{claim}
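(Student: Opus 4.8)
The plan is to prove the contrapositive. Suppose, for contradiction, that $\dis_g(S_p,u) < \dis_g(S_r,u)$ or $\dis_g(S_r,u)=\infty$, yet $u$ \emph{is} activated by rumor in $g$ under $S_p$. First I would dispose of the case $\dis_g(S_r,u)=\infty$: if $u$ is activated by rumor, there must be a path in $g$ from some rumor seed in $S_r$ to $u$ along which the activation propagates, so $\dis_g(S_r,u)<\infty$, a contradiction. Hence we may assume $\dis_g(S_p,u) < \dis_g(S_r,u) < \infty$, and in particular $S_p$ reaches $u$ in $g$.

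Next I would argue by taking a shortest path $(w_1,\dots,w_{t+1}=u)$ in $g$ from a closest positive seed $w_1 \in S_p$ to $u$, so $t = \dis_g(S_p,u)$. The key sub-claim, proved by induction along this path exactly as in Claim \ref{claim: 1}, is that each $w_i$ is activated (by \emph{some} cascade) no later than time step $i-1$. Indeed $w_1$ is activated by the positive cascade at time $0$; and if $w_i$ is active by time $i-1$, then $w_i$ attempts to activate $w_{i+1}$ at time $i$, so $w_{i+1}$ is active by time $i$ (it may already have been activated earlier by another neighbor, but that only makes it earlier). Applying this at $i=t+1$ shows $u$ is activated by some cascade no later than time step $t = \dis_g(S_p,u)$.

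Finally I would derive the contradiction. Since $u$ is activated by rumor, the chain of activations responsible for $u$'s state traces back to a rumor seed $s \in S_r$ via a path in $g$, and a node at graph-distance $d$ from $s$ can be activated by rumor no earlier than time step $d$; hence $u$ is activated by rumor at some time step $\tau \geq \dis_g(S_r,u)$. But from the previous paragraph $u$ is activated (and its state is fixed, since a node never changes cascade once activated) by time step $\dis_g(S_p,u) < \dis_g(S_r,u) \leq \tau$, so $u$ would already be activated by a non-rumor cascade strictly before any rumor activation could reach it — contradicting that $u$ is rumor-activated. The one subtlety to handle carefully is the tie-breaking convention: the argument must use strict inequality $\dis_g(S_p,u) < \dis_g(S_r,u)$ so that the positive cascade provably \emph{beats} the rumor to $u$ rather than arriving simultaneously (in which case rumor's higher priority would win); this is exactly why the complementary case in Lemma \ref{lemma: condition} uses ``$\leq$''. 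I expect this tie-breaking bookkeeping, together with making precise the statement ``$u$ rumor-activated implies a rumor activation reaches $u$ no earlier than $\dis_g(S_r,u)$,'' to be the main obstacle; both follow from a clean induction on activation times analogous to Claim \ref{claim: 1}, applied to whichever cascade actually activates each node.
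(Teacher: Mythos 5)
Your proposal is correct and follows essentially the same route as the paper: both handle the $\dis_g(S_r,u)=\infty$ case trivially and otherwise run an induction along a shortest path in $g$ from the nearest positive seed $b_u=\argmin_{v\in S_p}\dis_g(v,u)$ to $u$, mirroring Claim \ref{claim: 1} (and both correctly read the claim's ``$\dis_g(S_r,u)\neq\infty$'' as the intended ``$=\infty$''). The only difference is bookkeeping: the paper's sketch maintains the invariant that each path node is activated \emph{by the positive cascade} at step $i-1$, while you maintain only ``activated by some cascade by step $i-1$'' and then invoke the lower bound $\tau\geq\dis_g(S_r,u)$ on rumor arrival at $u$ itself --- a harmless and arguably cleaner variant, since it avoids re-running the tie-breaking argument at every intermediate node.
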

\begin{proof}
If $\dis_g(S_r,u) = \infty$, $u$ is clearly not activated by rumor. Otherwise, let $b_u=\argmin_{v \in S_p}\dis_g(v,u)$. Similar to the proof of claim \ref{claim: 1}, we can prove that the nodes on the shortest path from $b_u$ to $u$ will activated by the positive cascade by induction. The only difference is that positive cascade always reaches those nodes earlier that rumor does by at least one time step.
\end{proof}

\subsection{Proof of Lemma \ref{lemma:key}}
\label{appendix: proof_lemma:key}
To prove Lemma \ref{lemma:key}, we introduce the following definitions.

\begin{definition} 
\label{def: x(S,T_v)}
We use $T_v=(T_v(V), T_v(E_t), T_v(E_f))$ to denote a concrete R-tuple of $v$. Let $\Re_v=\{T_v^1, T_v^2,...\}$ be the set of all possible $T_v$ and $\mathrm{Pr}[T_v]$ be the probability that $T_v$ can be generated by Alg. \ref{alg:r_tuple_v}. Given a node set $S \subseteq V$, let $x(S,T_v)$ be a variable over 0 and 1, where

$$x(S,T_v) =
  \begin{cases}
  1 &  \hspace{0mm} \hspace{-0.5mm} \text{if $S \cap T_v(V^*) \neq \emptyset $ or $T_v(B)=0$} \\
  0 & \hspace{0mm} \hspace{-0.5mm} \text{else } 
  \end{cases}$$
Different from $x(S,\T)$ defined in Def. \ref{def: x(S,T_v)}, $x(S,T_v)$ is not a random variable.
\end{definition}

\begin{definition}
\label{def:compatible}
A pair of ordered edge-sets ($E_1$, $E_2$) is \textit{valid}, if $E_1 \subseteq E$, $E_2 \subseteq E$ and $E_1 \cap E_2=\emptyset$. Note that there is a bijection between the realizations and all the valid pairs ($E_1$, $E_2$) such that $E_1 \cup E_2 = E$. We say $g$ is \textit{compatible} to ($E_1$, $E_2$) if $E_1 \subseteq E(g)$ and $E_2 \cap E(g) =\emptyset$. Let $C(E_1,E_2)$ be the set of the realizations compatible to $(E_1,E_2)$. For a valid pair ($E_1$, $E_2$) and a realization $g$ compatible to $(E_1,E_2)$, define that
\begin{equation*}
\mathrm{Pr}[(E_1,E_2)]=\prod_{e \in E_1}p_e\prod_{e \in E_2}(1-p_e),
\end{equation*}
and 
\begin{equation*}
\mathrm{Pr}[g|(E_1,E_2)]=\prod_{e \in E(g)\setminus E_1}p_e\prod_{e \notin E(g)\cup E_2}(1-p_e).
\end{equation*}
\end{definition}
One can easily check that
\begin{equation}
\label{eq:sum=1}
\sum_{g \in C(E_1, E_2)} \mathrm{Pr}[g|(E_1,E_2)]=1,
\end{equation}
and
\begin{equation}
\label{eq:condition}
\mathrm{Pr}[g]=\mathrm{Pr}[(E_1,E_2)] \cdot \mathrm{Pr}[g|(E_1,E_2)]. 
\end{equation}

Intuitively, if a realization $g$ is compatible to a valid pair $(E_1, E_2)$, $(E_1, E_2)$ can be taken as an intermediate state while generating $g$. For a R-tuple $T_v^i \in \Re_v$ of $v$, it follows that
\begin{eqnarray}
\label{eq:pr_T_v}
\Pr[T_v^i]&=&\prod_{e \in T_v^i(E_t)} p_e\prod_{e \in T_v^i(E_f)}(1-p_e)\nonumber\\
&=&\mathrm{Pr}[(T_v^i(E_t),T_v^i(E_f))]
\end{eqnarray}
Note that $((T_v^i(E_t),T_v^i(E_f))$ is always valid for each $T_v^i \in \Re_v$. Now lets consider the realizations compatible to $(T_v^i(E_t),T_v^i(E_f))$ for different R-tuples $T_v^i$ of $v$.

\begin{lemma}
\label{lemma:partition}
For each node $v$, the sets $C(T_v^i(E_t), T_v^i(E_f))$ for $T_v^i \in \Re_v$ form a partition of $\mathcal{G}$.
\end{lemma}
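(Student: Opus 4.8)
The plan is to show two things: (i) the sets $C(T_v^i(E_t), T_v^i(E_f))$ for distinct $T_v^i \in \Re_v$ are pairwise disjoint, and (ii) every realization $g \in \mathcal{G}$ lies in exactly one of them. For disjointness, suppose a realization $g$ were compatible to both $(T_v^i(E_t), T_v^i(E_f))$ and $(T_v^j(E_t), T_v^j(E_f))$ for $i \neq j$. I would argue that running Algorithm \ref{alg:r_tuple_v} is \emph{deterministic once the realization is fixed}: the only randomness is in the coin flips $rand$ on line 17, and if we condition on $g$, the outcome of the test on line 18 for edge $(u_1,u_2)$ is determined — the edge goes into $E_t$ iff $(u_2,u_1) \in E(g)$ and into $E_f$ iff $(u_2,u_1) \notin E(g)$. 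Hence the BFS exploration order, the set $V^*$, and the terminating condition are all functions of $g$ alone, so $g$ determines a unique R-tuple; this forces $T_v^i = T_v^j$, a contradiction. Here I would invoke the bijection from Definition \ref{def:compatible} between realizations and valid pairs $(E_1,E_2)$ with $E_1 \cup E_2 = E$: being compatible to $(T_v^i(E_t),T_v^i(E_f))$ just says $E(g)$ agrees with the partial specification recorded by $T_v^i$ on the edges the algorithm actually tested.

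For the covering part, I would take an arbitrary realization $g \in \mathcal{G}$ and run Algorithm \ref{alg:r_tuple_v} on the node $v$ using $g$ to resolve every coin flip as above; this produces \emph{some} R-tuple $T_v^i \in \Re_v$ (the algorithm always terminates, since $V^*$ strictly grows each iteration and is bounded by $V$, and it returns either when $V_1 = \emptyset$ or when a rumor seed is hit). By construction $E_t \subseteq E(g)$ and $E_f \cap E(g) = \emptyset$, so $g$ is compatible to $(T_v^i(E_t), T_v^i(E_f))$, i.e. $g \in C(T_v^i(E_t), T_v^i(E_f))$. Combined with disjointness, this shows each $g$ belongs to exactly one member of the family, which is precisely the definition of a partition of $\mathcal{G}$.

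The main obstacle I anticipate is making the "the algorithm is a deterministic function of $g$" claim airtight — specifically, verifying that the BFS only ever tests an edge $(u_1,u_2)$ once (so there is no ambiguity about which coin flip governs it) and that the recorded sets $E_t, E_f$ are disjoint and consist exactly of the tested edges, so that compatibility to $(E_t, E_f)$ pins down $E(g)$ on exactly that tested set and nowhere else. Once that bookkeeping is done, disjointness and covering both follow from the bijection of Definition \ref{def:compatible} essentially by inspection. I would also note in passing that this lemma is exactly what is needed to combine Eq. (\ref{eq:sum=1}) and Eq. (\ref{eq:condition}) with Eq. (\ref{eq:pr_T_v}) to get $\sum_{T_v^i \in \Re_v} \Pr[T_v^i] = 1$ and, more importantly, to rewrite the sum over realizations in Eq. (\ref{eq:f(S_p)}) as a sum over R-tuples, which is the route to Lemma \ref{lemma:key}.
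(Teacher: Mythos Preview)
Your proposal is correct and follows essentially the same approach as the paper: both arguments rest on the observation that the BFS in Algorithm~\ref{alg:r_tuple_v} is deterministic once the edge outcomes (equivalently, the realization $g$) are fixed, so each realization is compatible with exactly one R-tuple. The paper's proof is terser---it handles covering in one sentence as ``obvious'' and proves disjointness by exhibiting a witness edge $e$ that lies in $T_v^i(E_t)$ but in $T_v^j(E_f)$ whenever $T_v^i\neq T_v^j$---whereas you make the deterministic-map viewpoint explicit, which is a cleaner way to see both parts at once.
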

\begin{proof}
First, it is obvious that for each $g \in \mathcal{G}$ there exists a $T_v^i$ such that $g$ is compatible to $(T_v^i(E_t), T_v^i(E_f))$. Thus, it suffices to show that $$C(T_v^i(E_t), T_v^i(E_f)) \cap C(T_v^j(E_t), T_v^j(E_f)) = \emptyset$$ for $i \neq j$. Since $T_v^i$ and $T_v^j$  are different, there must be an edge $e$ such that $e \in T_v^i(E_t)$, $e \notin T_v^i(E_f)$, $e \notin T_v^j(E_t)$ and $e \in T_v^j(E_f)$. By Def. \ref{def:compatible}, a realization $g$ cannot be compatible to both $T_v^i$ and $T_v^j$. Lemma \ref{lemma:partition} thus proved.
\end{proof}


The following lemma establishes the relationship between realization and R-tuple.
\begin{lemma}
\label{lemma: T_v_g}
For a R-tuple $T_v$ of $v$, a realization $g$ compatible to $(T_v(E_t),T_v(E_f))$ and a node set $S$ of positive cascade, $v$ is not activated by rumor in $g$ if and only if $x(S, T_v)=1$.
\end{lemma}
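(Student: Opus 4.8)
The plan is to unwind the two definitions in parallel and check the logical equivalence directly. Recall the R-tuple $T_v$ of $v$ is built by a reverse BFS from $v$: at each stage we have the frontier $V_1$, the explored set $V^*$, and we probe every edge $(u_1,u_2)\in E$ with $u_1\in V^*$, $u_2\notin V^*$, recording it in $T_v(E_t)$ if the coin $rand\le p_{(u_2,u_1)}$ succeeds and in $T_v(E_f)$ otherwise. The process stops either when the frontier becomes empty (then $B=0$) or when the frontier meets $S_r$ (then $B=1$), and in the latter case $T_v(V^*)$ is exactly the set of nodes $w$ with $\dis_g(w,v)<\dis_g(S_r,v)$ for the realization $g$ being implicitly sampled, while $\dis_g(S_r,v)$ equals the BFS level at which the rumor seed was first hit. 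By $x(S,T_v)=1$ we mean $S\cap T_v(V^*)\neq\emptyset$ or $T_v(B)=0$.

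First I would handle the easy direction, $T_v(B)=0$. In this case the reverse BFS from $v$ exhausted all nodes reachable to $v$ in $g$ without ever touching $S_r$, which means no node of $S_r$ can reach $v$ in $g$, i.e. $\dis_g(S_r,v)=+\infty$. By Lemma~\ref{lemma: condition} (the ``only if'' part requires $\dis_g(S_r,v)\neq+\infty$), $v$ is not activated by rumor, regardless of $S$; and indeed $x(S,T_v)=1$ by definition. So the equivalence holds trivially here.

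Next I would treat the main case $T_v(B)=1$. The key observation, which I would state as a short sub-claim, is that because $g$ is compatible with $(T_v(E_t),T_v(E_f))$ — every edge the process declared ``true'' is in $E(g)$ and every edge it declared ``false'' is absent from $E(g)$ — the reverse BFS from $v$ faithfully computes shortest-path distances in $g$ up to the level at which it halts. Concretely, $\dis_g(S_r,v)$ equals the BFS level $\ell$ at which $S_r$ was first encountered, and $T_v(V^*)$ is precisely $\{w\in V : \dis_g(w,v) < \ell\} = \{w : \dis_g(w,v) < \dis_g(S_r,v)\}$. The point is that the process only reveals the status of edges needed to certify distances strictly below $\ell$ and the edges out of the last frontier; compatibility of $g$ ensures those revealed values match $g$, and any node at distance $\ge \ell$ from $v$ is irrelevant. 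Granting this, $S\cap T_v(V^*)\neq\emptyset$ is equivalent to ``some node of $S$ reaches $v$ in $g$ strictly faster than $S_r$ does,'' i.e. $\dis_g(S_p,v) < \dis_g(S_r,v)$ (using $S=S_p$), with the understanding that $\dis_g(S_r,v)=\ell<+\infty$ here. By Lemma~\ref{lemma: condition}, $v$ is activated by rumor iff $\dis_g(S_r,v)\le\dis_g(S_p,v)$ and $\dis_g(S_r,v)\neq+\infty$; its negation (in this case where $\dis_g(S_r,v)\neq+\infty$) is exactly $\dis_g(S_p,v)<\dis_g(S_r,v)$, which is exactly $x(S,T_v)=1$. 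Stringing these equivalences together gives the lemma.

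The main obstacle, and the step I would spend the most care on, is the sub-claim that the reverse BFS in Algorithm~\ref{alg:r_tuple_v} correctly computes $g$-distances up to level $\ell$ and that $T_v(V^*)$ is the stated distance-ball — in particular justifying that the edges left unexamined by the process genuinely cannot matter (they lead only to nodes at distance $\ge \ell$ from $v$, whose arrival time cannot beat $S_r$). I would prove this by induction on the BFS level: the invariant is that after processing level $j$, $V^*$ contains exactly the nodes $w$ with $\dis_g(w,v)\le j$, and $V_1$ is the set with $\dis_g(w,v)=j+1$ discovered so far via examined true-edges; compatibility of $g$ with $(T_v(E_t),T_v(E_f))$ is what keeps the induction honest. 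Everything else is bookkeeping with Lemma~\ref{lemma: condition}.
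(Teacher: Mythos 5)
Your proposal is correct and follows essentially the same route as the paper's proof: reduce to Lemma~\ref{lemma: condition}, observe that $T_v(B)=1$ iff $\dis_g(S_r,v)\neq\infty$, and identify $T_v(V^*)$ with $\{w:\dis_g(w,v)<\dis_g(S_r,v)\}$ so that $S\cap T_v(V^*)\neq\emptyset$ iff $\dis_g(S,v)<\dis_g(S_r,v)$. The only difference is that you propose an explicit level-by-level induction to justify the BFS/distance-ball sub-claim, which the paper simply asserts from the breadth-first structure of Algorithm~\ref{alg:r_tuple_v}.
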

\begin{proof}
This lemma is intuitive and it follows directly from Lemma \ref{lemma: condition}. According to Lemma \ref{lemma: condition} and Def. \ref{def: x(S,T_v)}, it suffices to show that $S \cap T_v(V^*) = \emptyset $ and $T_v(B)=1$ if and only if $\dis_g(S_r,v) \leq \dis_g(S,v)$ and $\dis_g(S_r,v)\neq \infty$. 

First, it is clear that $T_v(B)=1$ if and only if $\dis_g(S_r,v)\neq \infty$. This is because $g$ is compatible to $(T_v(E_t),T_v(E_f))$ and, by Alg. \ref{alg:r_tuple_v}, $v$ can reach some rumor seed node in $g$ if and only if $T_v(B)=1$.

Now suppose that $T_v(B)=1$ and let $a$ be one of the nodes in $V_1 \cap S_r$ tested in line 18 of Alg. \ref{alg:r_tuple_v}. Because Alg. \ref{alg:r_tuple_v} goes with a breadth-first search, $a=\dis_g(S_r,v)$ and $T_v(V^*)$ exactly consists of the node(s) $u$ such that $\dis_g(u,v)<\dis_g(S_r,v)$. Therefore,  $S \cap T_v(V^*) = \emptyset $ if and only if $\dis_g(S_r,v) \leq \dis_g(S,v)$.
\end{proof}
Now we are ready to prove Lemma \ref{lemma:key}.
 
\begin{proof}
By Defs. \ref{def: x(S,T)} and \ref{def: x(S,T_v)},
$$E[x(S,\T)] 
= \frac{\sum_{v} \sum_{T_v \in \Re_v}\mathrm{Pr}[T_v] \cdot x(S,T_v)}{n}.$$ 
According to Eq. (\ref{eq:f(S_p)}), to prove Lemma \ref{lemma:key}, it suffices to prove that
$$\sum_{T_v \in \Re_v} \mathrm{Pr}[T_v] \cdot x(S,T_v)=\sum_{g \in \mathcal{G}} \mathrm{Pr}[g] \cdot  f_g(S,v).$$
Since $(T_v(E_t), T_v(E_f))$ is a valid pair, by Eq. (\ref{eq:sum=1}),
\begin{eqnarray*}
&&\sum_{T_v \in \Re_v} \mathrm{Pr}[T_v] \cdot x(S,T_v)\\
&=& \sum_{T_v \in \Re_v} \mathrm{Pr}[T_v] \cdot \\
&&\sum_{g \in C(T_v(E_t), T_v(E_f))} \mathrm{Pr}[g|(T_v(E_t), T_v(E_f))] \cdot  x(S,T_v)
\end{eqnarray*}
Substituting $\Pr[T_v]$ with Eq. (\ref{eq:pr_T_v}) yields that

\begin{eqnarray*}
&&\sum_{T_v \in \Re_v} \mathrm{Pr}[T_v] \cdot x(S,T_v)\\
&=& \sum_{T_v \in \Re_v} \sum_{g \in C(T_v(E_t), T_v(E_f))} \\
&&\Pr[(T_v(E_t),T_v(E_f))] \cdot  \mathrm{Pr}[g|(T_v(E_t), T_v(E_f))] \cdot  x(S,T_v)
\end{eqnarray*}
According to Eq. (\ref{eq:condition}), 
$$\Pr[(T_v(E_t),T_v(E_f))] \cdot  \mathrm{Pr}[g|(T_v(E_t), T_v(E_f))]=\Pr[g],$$
and therefore,

\begin{eqnarray*}
&&\sum_{T_v \in \Re_v} \mathrm{Pr}[T_v] \cdot x(S,T_v)\\
&=& \sum_{T_v \in \Re_v} \sum_{g \in C(T_v(E_t), T_v(E_f))}\Pr[g] \cdot  x(S,T_v)\\
&&\{\text{By Lemma \ref{lemma: T_v_g}}\} \\
&=& \sum_{T_v \in \Re_v} \sum_{g \in C(T_v(E_t), T_v(E_f))}\Pr[g] \cdot  f_g(S,v)
\end{eqnarray*}

By Lemma \ref{lemma:partition}, $C(T_v(E_t), T_v(E_f))$ for $T_v \in \Re_v$ forms a partition of $\mathcal{G}$, and as a result, 
 
\begin{eqnarray*}
&& \sum_{T_v \in \Re_v} \sum_{g \in C(T_v(E_t), T_v(E_f))}\Pr[g] \cdot  f_g(S,v)\\
&=& \sum_{g \in \mathcal{G}} \mathrm{Pr}[g] \cdot  f_g(S,v).
\end{eqnarray*}
Thus, proved.

\end{proof}

\subsection{Proof of Lemma \ref{lemma: opt_k_1}}
\label{appendix: proof_lemma: opt_k_1}
To prove Lemma \ref{lemma: opt_k_1}, we first prove the following two claims. Suppose that Alg. \ref{alg:OPT_k} terminates at the $i^*$-th iteration. 

\begin{claim}
\label{claim: opt_k_1}
$OPT_k < x_{i^*}$ holds with probability at least $1-1/N$ where $x_{i^*}$ is given in line 5 of Alg. \ref{alg:OPT_k}.

\end{claim}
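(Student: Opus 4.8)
\textbf{Proof proposal for Claim \ref{claim: opt_k_1}.}

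The plan is to bound the probability that the algorithm terminates ``too early'', i.e., at an iteration $i^*$ for which $x_{i^*} \leq OPT_k$. Fix any iteration index $i$ with $x_i \leq OPT_k$; I will show that the probability of terminating at this particular $i$ (conditioned on reaching it) is small, and then take a union bound over all such $i$. Since there are at most $\log n$ iterations, this will give the desired $1-1/N$ guarantee if each bad event has probability at most roughly $1/(N\log n)$, which is exactly what the choice of $\lambda_3$ in line 3 is designed to deliver.

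First I would recall that termination at iteration $i$ requires $\frac{n \cdot F(S',R)}{l_i} \geq (1+\delta) x_i$, where $R$ is the pool of $l_i$ R-tuples and $S'$ is the output of Node-Selection on $R$. The key point is that $F(S',R) \leq F(S_k', R)$ is \emph{not} what we want — rather, I need an upper bound on $F(S',R)$ that holds uniformly. Since $S'$ has at most $k$ nodes and $F(\cdot,R)$ is the coverage count, it suffices to control $F(S,R)$ simultaneously for every $k$-subset $S$. For a fixed $S$ with $x_i \leq OPT_k$, note $f(S) \leq OPT_k$, so by Lemma \ref{lemma:key} each $x(S,\T_j)$ is a Bernoulli variable with mean $f(S)/n \leq OPT_k/n$. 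Writing $\mu = f(S)/n$ and applying the Chernoff bound of Eq. (\ref{eq:chernoff_1}) with a deviation parameter $\delta$, I would bound
$$\Pr\!\Big[\frac{n}{l_i} F(S,R) \geq (1+\delta) x_i \Big] \leq \Pr\!\Big[F(S,R) - l_i \mu \geq \delta \cdot l_i \cdot \frac{x_i}{n}\Big],$$
and here the main technical manipulation is to reconcile the ``$\delta \cdot l \cdot \mu$'' form of the Chernoff bound with the threshold $x_i$ rather than $f(S)$; since $x_i \geq \mu n$ is false in general but $x_i \leq OPT_k$ and we want an upper-tail statement, I would instead argue: if $f(S) \geq x_i$ then $S$ being a $k$-set contradicts $x_i \leq OPT_k$ only mildly, so split into the case $f(S) < (1+\delta)x_i/(1+\delta)$ — more cleanly, use that $\frac{n}{l_i}F(S,R)\ge (1+\delta)x_i$ together with $f(S)\le OPT_k$; the honest route is to observe $x_i \le OPT_k$ is not enough and one actually needs $x_{i}\le OPT_k$ combined with the fact that the \emph{last} surviving iteration has $x_{i^*-1} > $ threshold failed, giving $x_{i^*} = x_{i^*-1}/2$. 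I would therefore bound, for each fixed $k$-set $S$, the probability that $\frac{n}{l_i}F(S,R) \geq (1+\delta)x_i$ while $f(S) \leq x_i$ by $\exp\!\big(-\frac{l_i x_i \delta^2}{n(2+\delta)}\big) = \exp\!\big(-\frac{\lambda_3 \delta^2}{n(2+\delta)}\big) = \frac{1}{N \binom{n}{k}\log n}$, using $l_i = \lambda_3/x_i$ and the definition of $\lambda_3$.

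Then I would take a union bound over all $\binom{n}{k}$ possible $k$-subsets $S$ and over all $\log n$ iterations $i$ with $x_i \le OPT_k$, obtaining total failure probability at most $\binom{n}{k}\cdot \log n \cdot \frac{1}{N\binom{n}{k}\log n} = 1/N$. On the complementary event, for every such $i$ and every $k$-set $S$ we have $\frac{n}{l_i}F(S,R) < (1+\delta)x_i$ whenever $f(S)\le x_i$; in particular the output $S'$ at such an iteration fails the termination test, so the algorithm does not stop at any iteration with $x_i \leq OPT_k$, which means $x_{i^*} > OPT_k$. The main obstacle I anticipate is the bookkeeping around which threshold ($x_i$ versus $f(S)$ versus $OPT_k$) enters the Chernoff bound: the cleanest fix is to phrase the bad event as ``$\exists$ $k$-set $S$ with $f(S) \le x_i$ but $\frac{n}{l_i}F(S,R) \geq (1+\delta)x_i$'', handle it per-set via Eq. (\ref{eq:chernoff_1}) (the upper tail is maximized at $\mu = x_i/n$, which justifies plugging in $x_i$), and only afterwards invoke $OPT_k \ge x_i \Rightarrow$ every $k$-set $S$ including $S_k$ satisfies $f(S)\le OPT_k$, so the coverage of the greedily chosen $S'$ cannot exceed the threshold — hence no early termination.
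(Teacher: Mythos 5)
Your proposal attacks the statement in the wrong direction, and the resulting argument cannot be repaired as written. Note first that the termination test forces $OPT_k^{*}=\frac{n\cdot F(S^{'},R)}{l_{i^*}\cdot(1+\delta)}\geq x_{i^*}$, so if $x_{i^*}>OPT_k$ held we would get $OPT_k^{*}>OPT_k$ --- the opposite of what Lemma \ref{lemma: opt_k_1} needs and of what Claim \ref{claim: opt_k_2} takes as its hypothesis. The event that must be shown to be rare is therefore termination at an iteration with $x_i>OPT_k$ (the inequality in the claim statement is evidently reversed by a typo), and the conclusion to be drawn on the good event is $x_{i^*}\leq OPT_k$, not $x_{i^*}>OPT_k$. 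Your plan --- show the algorithm does not stop at any iteration with $x_i\leq OPT_k$ --- is trying to prove something that is false by design: once $x_i$ drops below roughly $\frac{(1-1/e)\cdot OPT_k}{(1+\delta)^2}$ the test is \emph{supposed} to succeed with high probability, which is exactly the content of Claim \ref{claim: 3} used to prove Lemma \ref{lemma: opt_k_2}.

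The failure also surfaces concretely inside your Chernoff step. To get a deviation parameter of at least $\delta$ in Eq. (\ref{eq:chernoff_1}) you need $f(S)\leq x_i$, and for the bound to control the greedy output $S^{'}$ you need this for \emph{every} $k$-set $S$, i.e., $OPT_k\leq x_i$. At the iterations you consider (those with $x_i\leq OPT_k$) this is unavailable: $S^{'}$ may satisfy $f(S^{'})>x_i$, its expected scaled coverage can already exceed $(1+\delta)\cdot x_i$, and no upper-tail bound of the required strength exists. Your closing inference that ``every $k$-set satisfies $f(S)\leq OPT_k$, so the coverage of $S^{'}$ cannot exceed the threshold'' is a non sequitur, since $OPT_k\geq x_i$ yields no bound of the form $f(S)\leq x_i$. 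The repair is to run your (otherwise correct) per-set bound and the $\binom{n}{k}\cdot\log n$ union bound over the iterations with $x_i>OPT_k$, where $f(S)\leq OPT_k<x_i$ holds uniformly over all $k$-sets; on the complementary event the algorithm survives all of those iterations and hence terminates with $x_{i^*}\leq OPT_k$. That is precisely the paper's proof.
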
  
\begin{proof}
It suffices to show that for the $i$-th iteration from line 6 to 11 in Algorithm \ref{alg:OPT_k}, the terminate condition holds with at most $\frac{1}{N\log
n}$ probability if $OPT_k < x_i$.
For a node-set $S$ with $|S|=k$,
\begin{eqnarray*}
&&　\mathrm{Pr}[n \cdot F(S,R)/l_i \geq (1+\delta) \cdot x_i] \\
&=& \mathrm{Pr}[F(S,R)-\frac{l_i \cdot f(S)}{n} \geq \frac{l_i \cdot f(S)}{n} \cdot (\frac{(1+\delta) \cdot x_i}{f(S)}-1)] \\
&&\{\text{By Eq. (\ref{eq:chernoff_1})}\}\\
&\leq& \exp(-\frac{-l_i \cdot \frac{f(S)}{n} \cdot (\frac{(1+\delta) \cdot x_i}{f(S)}-1)^2}{2+(\frac{(1+\delta) \cdot x_i}{f(S)}-1)}) \\
&\leq& \exp(-\frac{-l_i \frac{f(S)}{n} \cdot  (\frac{(1+\delta) \cdot x_i}{f(S)}-1)}{\frac{2}{(\frac{(1+\delta) \cdot x_i}{f(S)}-1)}+1})\\
&&\{\text{Since~} f(S) \leq OPT < x_i\}\\
&\leq& \exp(-\frac{-l_i \frac{f(S)}{n} \cdot \frac{\delta \cdot x_i}{f(S)}}{\frac{2}{\delta}+1})\\
&=& \exp(-\frac{-l_i \cdot \delta^2 \cdot x_i}{n \cdot (2+\delta)})= \frac{1}{N \cdot \binom{n}{k}\log n}.\\
\end{eqnarray*}
By the union bound, $n \cdot F(S,T)/l_i \geq (1+\delta_3) \cdot x_i$ holds for the $S^{‘}$ produced in line 8 Algorithm \ref{alg:OPT_k} with a probability less than $\frac{1}{N \log n}$. Therefore, the probability that Algorithm \ref{alg:OPT_k} terminates is at most $\frac{1}{N\log
n}$ when $i \leq \log i/OPT_k$. By the union bound again, the probability that Algorithm \ref{alg:OPT_k} terminates before $i^*=\log n/OPT_k$ is less than $\frac{i^*}{N \log n}\leq \frac{1}{N}$. Thus, proved.
\end{proof}

\begin{claim}
\label{claim: opt_k_2}
If $OPT_k \geq x_{i^*}$, then $OPT_k^* \leq OPT_k$ holds with a probability at least $1-1/N$.
\end{claim}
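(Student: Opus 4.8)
The plan is to mirror the argument of Claim \ref{claim: opt_k_1} but in the regime $OPT_k \geq x_{i^*}$, where now the relevant deviation to control is the \emph{lower} tail: we want $\frac{n\cdot F(S',R)}{l_{i^*}}$ not to overshoot $OPT_k$ after being divided by $(1+\delta)$, i.e. we want $OPT_k^* = \frac{n\cdot F(S',R)}{l_{i^*}(1+\delta)} \leq OPT_k$, which is equivalent to $\frac{n}{l_{i^*}}F(S',R) \leq (1+\delta)\,OPT_k$. First I would observe that $F(S',R) \leq F(S_k,R)$ is false in general (greedy on the samples need not be dominated by $S_k$ on the samples), so instead I would bound $F(S',R)$ by $F(S_V,R)$ where $S_V$ is taken with respect to the relaxed ground set, or more directly use that for \emph{every} fixed $S$ with $|S|\le k$, Lemma \ref{lemma:key} gives $E[x(S,\T)] = f(S)/n \le OPT_k/n$. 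The subtlety is that $S'$ is random and depends on $R$, so a union bound over all $\binom{n}{k}$ candidate sets of size at most $k$ is needed — exactly the reason the factor $\binom{n}{k}$ appears in $\lambda_3$ and hence in $l_{i^*}$.

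The key steps, in order, would be: (1) Fix a node set $S$ with $|S| = k$ and apply the Chernoff upper-tail bound \eqref{eq:chernoff_1} to the i.i.d.\ variables $x(S,\T_1),\dots,x(S,\T_{l_{i^*}})$ with mean $\mu = f(S)/n \le OPT_k/n$, to show that $\mathrm{Pr}\big[\frac{n}{l_{i^*}}F(S,R) > (1+\delta)\,OPT_k\big]$ is at most $\exp\!\big(-\frac{l_{i^*}\cdot (OPT_k/n)\cdot \delta^2}{2+\delta}\big)$; here I would use monotonicity of the tail bound in $\mu$ to replace $f(S)/n$ by $OPT_k/n$, and I would use $l_{i^*} = \lambda_3/x_{i^*} \ge \lambda_3/OPT_k$ together with the definition of $\lambda_3$ to see this probability is at most $\frac{1}{N\binom{n}{k}\log n}$. (2) Union-bound over all node sets $S$ with $|S| \le k$ (at most $\binom{n}{k}$ of them for the relevant comparison, up to the $\log n$ slack already built into $\lambda_3$), so that simultaneously $\frac{n}{l_{i^*}}F(S,R) \le (1+\delta)\,OPT_k$ for every such $S$, with probability at least $1 - \frac{1}{N\log n} \ge 1 - \frac1N$. (3) Since $S'$ produced by Node-Selection has $|S'| \le k$, it is one of these sets, hence $\frac{n}{l_{i^*}}F(S',R) \le (1+\delta)\,OPT_k$, which rearranges to $OPT_k^* \le OPT_k$. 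One only needs to be slightly careful that the statistical test and the union bound refer to the \emph{same} iteration index $i^*$ at which the algorithm actually terminates, so I would phrase the union bound over all $\log n$ iterations (contributing another $\log n$ factor already absorbed into $\lambda_3$) to make the conditioning on $i^*$ harmless.

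The main obstacle I anticipate is handling the dependence of $S'$ on the sample set $R$ cleanly: one cannot directly apply Chernoff to $F(S',R)$ because $S'$ is not fixed, so the union bound over all $\binom{n}{k}$ size-$\le k$ sets is essential, and one must check that the $\binom{n}{k}\log n$ factor in $\lambda_3$ is exactly what makes each individual failure probability small enough for the union bound to close at $1/N$. A secondary technical point is the monotone substitution $f(S)/n \le OPT_k/n$ inside the exponent of \eqref{eq:chernoff_1}: since the exponent $-\frac{l\mu\delta'^2}{2+\delta'}$ with $\delta' = \frac{(1+\delta)OPT_k}{f(S)} - 1$ depends on $f(S)$ in two places, I would first rewrite the event as a deviation of size $\delta''\mu$ for an appropriate $\delta'' \le \delta$ (using $f(S)\le OPT_k$) and then invoke the bound, exactly as done in the proof of Claim \ref{claim: opt_k_1}. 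Once these two points are in place, the conclusion $OPT_k^* \le OPT_k$ with probability at least $1-1/N$ follows by the rearrangement in step (3).
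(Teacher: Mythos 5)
Your proposal matches the paper's proof essentially step for step: fix a size-$k$ set $S$, rewrite the overshoot event as a relative deviation of $F(S,R)$ from its mean $l_{i^*}f(S)/n$, apply the upper-tail bound \eqref{eq:chernoff_1}, use $f(S)\le OPT_k$ together with $x_{i^*}\le OPT_k$ and the definition of $\lambda_3$ to drive the per-set failure probability down to $1/(N\binom{n}{k}\log n)$, and union-bound over the $\binom{n}{k}$ candidate sets so that the data-dependent $S'$ is covered. One small correction: the inflated relative deviation is $\delta''=\delta\cdot OPT_k/f(S)\ge\delta$, not $\le\delta$ --- it is precisely because the deviation is \emph{at least} $\delta$ times the mean $f(S)/n$ that the exponent $\tfrac{l_{i^*}(\delta''OPT_k)^2/n}{2f(S)+\delta'' OPT_k}$ can be lower-bounded by $\tfrac{l_{i^*}OPT_k\delta^2}{n(2+\delta)}$, which is what yields the claimed tail bound.
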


\begin{proof}
For any node-set $S$ with $|S|=k$,
\begin{eqnarray*}
&&\Pr[\frac{n \cdot F(S,R)}{l_{i^*} \cdot (1+\delta)} \geq OPT_k]\\
&=&\mathrm{Pr}[n \cdot F(S,R) \geq l_{i^*} \cdot (1+\delta) \cdot OPT_k].
\end{eqnarray*}
Since $OPT_k \geq f(S)$,
\begin{eqnarray*}
&&\mathrm{Pr}[n \cdot F(S,R) \geq l_{i^*} \cdot (1+\delta) \cdot OPT_k]\\
&\leq&\mathrm{Pr}[n\cdot F(S,R) \geq l_{i^*} \cdot f(S)+l_{i^*} \cdot \delta \cdot OPT_k]\\
&=&\mathrm{Pr}[F(S,R)-\frac{l_{i^*} \cdot f(S)}{n} \geq \frac{l_{i^*} \cdot f(S)}{n} \cdot \frac{OPT_k \cdot \delta}{f(S)}].
\end{eqnarray*}
Applying the Chernoff bound Eq. (\ref{eq:chernoff_1}), we have

\begin{eqnarray*}
&&\mathrm{Pr}[F(S,R)-\frac{l_{i^*} \cdot f(S)}{n} \geq \frac{l_{i^*} \cdot f(S)}{n} \cdot \frac{OPT_k \cdot \delta}{f(S)}]\\
&\leq& \exp(-\frac{l_{i^*} \cdot \frac{f(S)}{n} \cdot (\frac{OPT_k \cdot \delta}{f(S)})^2}{2+\frac{OPT_k \cdot \delta}{f(S)}})\\
&=& \exp(-\frac{l_{i^*} \cdot (OPT_k \cdot \delta)^2}{n \cdot (2 \cdot f(S)+OPT_k \cdot \delta)}).
\end{eqnarray*}
Again, since $OPT_k \geq f(S)$, 

\begin{eqnarray*}
&=& \exp(-\frac{l_{i^*} \cdot (OPT_k \cdot \delta)^2}{n \cdot (2 \cdot f(S)+OPT_k \cdot \delta)})\\
&\leq& \exp(-\frac{l_{i^*} \cdot (OPT_k \cdot \delta)^2}{n \cdot (2 \cdot OPT_k+OPT_k \cdot \delta)})\\
&=& \exp(-\frac{l_{i^*} \cdot OPT_k \cdot \delta^2}{n \cdot (2+\delta)}).
\end{eqnarray*}
Substituting $l_{i^*}$ with the value given in line 5 of Alg. \ref{alg:OPT_k}, one has

\begin{eqnarray*}
&&\exp(-\frac{l_{i^*} \cdot OPT_k \cdot \delta^2}{n \cdot (2+\delta)})\\
&=& \exp(-\frac{OPT_k}{x_{i^*}} \cdot \ln(\log n \cdot \binom{n}{k} \cdot N)).
\end{eqnarray*}
Because $x_{i^*} \leq OPT_k$, the above probability is no larger than $1/(\log n \cdot \binom{n}{k} \cdot N)$. By the union bound, the probability that $OPT_k^*$ (i.e. $\frac{n \cdot F(S^{'},R)}{l_{i^*} \cdot (1+\delta)}$) is larger than or equal to $OPT_k$ is no larger than $1/N$.


\end{proof}
Lemma \ref{lemma: opt_k_1} follows from the above two claims immediately.

\subsection{Proof of Lemma \ref{lemma: opt_k_2}}
\label{appendix: proof_lemma: opt_k_2}
The following claim helps prove Lemma \ref{lemma: opt_k_2}.
\begin{claim}
\label{claim: 3}
Let $S^{'}$ and $R$ be the sets used in the i-th iteration of Alg. \ref{alg:OPT_k}. If  $OPT_k \geq \frac{(1+\delta)^2}{1-1/e}x_i$, then $n \cdot F(S^{'},T)/l_i \leq (1+\delta) \cdot x$ holds with at most $1/N$ probability.
\end{claim}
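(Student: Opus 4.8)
The plan is to certify that, under the hypothesis $OPT_k\ge\frac{(1+\delta)^2}{1-1/e}x_i$, the termination test of Alg.~\ref{alg:OPT_k} fires at iteration $i$ with probability at least $1-1/N$; this is precisely the complement of the event in the claim. The two ingredients are the greedy set-cover guarantee Eq.~(\ref{eq:greedy}) and a Chernoff lower-tail bound on $F(S_k,R)$, where $S_k=\argmax_{|S|\le k}f(S)$ is the optimal seed set with $f(S_k)=OPT_k$; note $S_k$ is non-empty since $k\ge1$, and $S'$ returned by Alg.~\ref{alg:set_cover} is non-empty, so Eq.~(\ref{eq:greedy}) applies to the pair $(S',S_k)$.

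First I would reduce the claim to a statement about $F(S_k,R)$. By Eq.~(\ref{eq:greedy}), $F(S',R)\ge(1-1/e)F(S_k,R)$, and on the event $\frac{n}{l_i}F(S_k,R)>\frac{OPT_k}{1+\delta}$ we get $\frac{n}{l_i}F(S',R)\ge(1-1/e)\frac{n}{l_i}F(S_k,R)>\frac{(1-1/e)OPT_k}{1+\delta}\ge(1+\delta)x_i$, the last step using $OPT_k\ge\frac{(1+\delta)^2}{1-1/e}x_i$. Hence it suffices to show $\Pr[\frac{n}{l_i}F(S_k,R)\le\frac{OPT_k}{1+\delta}]\le\frac1N$. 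Now $F(S_k,R)=\sum_{j=1}^{l_i}x(S_k,\T_j)$ is a sum of $l_i$ i.i.d.\ $\{0,1\}$ variables with mean $f(S_k)/n=OPT_k/n$ by Lemma~\ref{lemma:key}, so applying the Chernoff bound Eq.~(\ref{eq:chernoff_2}) with deviation parameter $\delta'=\delta/(1+\delta)\in(0,1)$ (for which $1-\delta'=1/(1+\delta)$) yields
\[
\Pr\!\Big[\tfrac{n}{l_i}F(S_k,R)\le\tfrac{OPT_k}{1+\delta}\Big]\le\exp\!\Big(-\tfrac{l_i\,OPT_k\,(\delta')^2}{2n}\Big).
\]

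It then remains to bound this exponential. Substituting $l_i=\lambda_3/x_i$ with $\lambda_3=\frac{n(2+\delta)\ln(N\binom{n}{k}\log n)}{\delta^2}$ and $(\delta')^2=\delta^2/(1+\delta)^2$, the exponent equals $\frac{(2+\delta)\ln(N\binom{n}{k}\log n)}{2(1+\delta)^2}\cdot\frac{OPT_k}{x_i}$; invoking the hypothesis $OPT_k/x_i\ge(1+\delta)^2/(1-1/e)$ this is at least $\frac{2+\delta}{2(1-1/e)}\ln(N\binom{n}{k}\log n)$. Since $2(1-1/e)<2<2+\delta$, the coefficient $\frac{2+\delta}{2(1-1/e)}$ exceeds $1$, so the probability is at most $(N\binom{n}{k}\log n)^{-1}\le 1/N$. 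Combining with the reduction of the previous paragraph, $\Pr[\frac{n}{l_i}F(S',R)\le(1+\delta)x_i]\le\Pr[\frac{n}{l_i}F(S_k,R)\le\frac{OPT_k}{1+\delta}]\le1/N$, which is the claim.

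The main thing to watch is the bookkeeping of the constants: the value $\delta'=\delta/(1+\delta)$ must be small enough that $(1-1/e)(1-\delta')OPT_k\ge(1+\delta)x_i$ and, at the same time, large enough that the Chernoff exponent — after the hypothesis $OPT_k\ge\frac{(1+\delta)^2}{1-1/e}x_i$ is fed in — stays above $\ln N$; both hold precisely because $2(1-1/e)<2+\delta$. No union bound over seed sets is needed here since $S_k$ is a single fixed set, so the extra $\binom{n}{k}\log n$ factor in $\lambda_3$ is pure slack, and strictness of the final inequality is inherited from the strict complement of the Chernoff event, so no boundary case arises.
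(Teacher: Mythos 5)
Your proof is correct and follows essentially the same route as the paper: use Eq.~(\ref{eq:greedy}) to reduce the event to a lower-tail deviation of $F(S_k,R)$, apply the Chernoff bound Eq.~(\ref{eq:chernoff_2}) with deviation $\delta/(1+\delta)$, and substitute $l_i=\lambda_3/x_i$ together with the hypothesis $OPT_k\ge\frac{(1+\delta)^2}{1-1/e}x_i$. Your observation that no union bound over seed sets is needed (since $S_k$ is a single fixed set) is also right; the paper's closing union-bound sentence is superfluous, and the $\binom{n}{k}\log n$ factor in $\lambda_3$ is indeed slack here.
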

\begin{proof}
For any seed set $S$,
\begin{eqnarray*}
&&\mathrm{Pr}[n \cdot F(S,R)/l_i \leq (1+\delta) \cdot x_i]\\
&\leq& \mathrm{Pr}[n(1-1/e)F(S_k,R)/l_i \leq (1+\delta)x]\\
&\leq& \mathrm{Pr}[n \cdot F(S_k,R)/l_i \leq \frac{OPT_k}{1+\delta}]\\
&=& \mathrm{Pr}[F(S_k,R)-\frac{l_i \cdot f(S_k)}{n} \leq \frac{l_i \cdot f(S_k)}{n}  \cdot \frac{-\delta}{1+\delta}]\\
&&\{\text{By Eq. (\ref{eq:chernoff_2})}\}\\
&\leq& \exp(-\frac{l_i \cdot f(S_k) \cdot (\frac{-\delta}{1+\delta})^2}{2})\\
&\leq& \exp(-\frac{l_i \cdot \frac{(1+\delta)^2}{1-1/e} \cdot x_i \cdot (\frac{-\delta}{1+\delta})^2}{2})\\
&\leq& 1/(N \cdot \binom{n}{k}).
\end{eqnarray*}
By the union bound, the above holds for the $S^{'}$ produced in line 8 Algorithm \ref{alg:OPT_k} with a probability less than $1/N$.
\end{proof}

Now we are ready to prove Lemma \ref{lemma: opt_k_2}. Suppose $n/2^{i+1} \leq \frac{OPT_k \cdot (1-1/e)}{(1+\delta)^2} \leq n/2^i$ for some $i$. If Algorithm \ref{alg:OPT_k} terminates before the $(i+1)$-th iteration, then, by line 9 in Algorithm \ref{alg:OPT_k}, $OPT_k^* \geq n/2^i \geq \frac{OPT_k \cdot (1-1/e)}{(1+\delta)^2}$. Now suppose Algorithm \ref{alg:OPT_k} enters the $(i+1)$-th iteration. By Claim \ref{claim: 3}, it will terminate with probability at least $1-1/N$, which means $OPT_k^* \geq n/2^{i+1} \geq \frac{OPT_k \cdot (1-1/e)}{2 \cdot (1+\delta)^2}$.

\subsection{Proof of Lemma \ref{lemma:1}}
\label{appendix: proof_lemma:1}
\begin{proof}
Recall that $E[x(S_k,\T)]=OPT_k/n$.

\begin{eqnarray*}
&&\mathrm{Pr}[\frac{n}{l}\cdot F(S_k,R_l) \leq (1-\delta_1) \cdot OPT_k] \\
&=&\mathrm{Pr}[F(S_k,R_l) \leq \frac{l}{n} (1-\delta_1) \cdot OPT_k] \\
&=&\mathrm{Pr}[F(S_k,R_l)- \frac{l}{n} \cdot OPT_k \leq -\frac{l}{n} \cdot \delta_1 \cdot OPT_k] \\
&&\{\text{By Eq. (\ref{eq:chernoff_2})\}} \\
&\leq& \exp(-\frac{l \cdot \frac{OPT_k}{n} \cdot \delta_1^2}{2}) 
\end{eqnarray*}
Because $l \geq l_1$ and Eq. (\ref{eq: opt_k^*}) holds, the above probability is no larger than $1/N$. As a result, $\frac{n}{l}\cdot F(S_k,R_l) \geq (1-\delta_1) \cdot OPT_k$
holds with probability at least $1-1/N$.

\end{proof}
\subsection{Proof of Lemma \ref{lemma:2}}
\label{appendix: proof_lemma:2}
\begin{proof}
Let $\delta_*=\delta_2-(1-1/e) \cdot \delta_1$. For any fixed $S \subseteq V$ where $|S|=k$, it follows that
\begin{eqnarray}
&&\mathrm{Pr}[F(S,R_l)-\frac{l}{n} \cdot f(S) \geq (\delta_2-(1-1/e) \cdot \delta_1) \cdot \frac{l}{n} \cdot OPT_k] \nonumber\\
&=&\mathrm{Pr}[F(S,R_l)-\frac{l}{n} \cdot f(S) \geq \delta_* \cdot \frac{lf(S)}{n} \cdot \frac{OPT_k}{f(S)}]\nonumber\\
&&\{\text{By Eq. (\ref{eq:chernoff_1})\}} \nonumber\\
&\leq& \exp(-\frac{l \cdot \frac{f(S)}{n} \cdot (\delta_* \cdot \frac{OPT_k}{f(S)})^2}{2+(\delta_* \cdot \frac{OPT_k}{f(S)})})\nonumber\\
&=& \exp(-\frac{l \cdot (\delta_* \cdot OPT_k)^2}{n\cdot(2 \cdot f(S)+\delta_* \cdot OPT_k)})\nonumber\\
&&\{\text{Since $f(S)\leq OPT_k$}\} \nonumber\\
&\leq& \exp(-\frac{l \cdot \delta_*^{2} \cdot OPT_k}{n \cdot (2+\delta_*)})\nonumber \\
&&\{\text{By Eqs. (\ref{eq:l_2}) and (\ref{eq: opt_k^*})\}} \nonumber\\
&\leq& 1/(N_2 \cdot \binom{n}{k})\nonumber
\end{eqnarray}
Since there are at most $\binom{n}{k}$ subsets of $V$ with $k$ elements, by the union bound, Eq. (\ref{eq: accuracy_S^'}) with probability at least $1-1/N$.
\end{proof}

\subsection{Proof of Lemma \ref{lemma:TIME}}
\label{appendix: proof_lemma:TIME}
\begin{proof}
For a R-tuple $T_v$ of node $v$, let $TIME(T_v)$ be the time consumed in generating $T_v$. Thus,
{\small
\begin{eqnarray*}
TIME=\frac{\sum_{v}\sum_{T_v \in \Re_v}\mathrm{Pr}[T_v] \cdot TIME(T_v)}{n}
\end{eqnarray*}
}
Note that  $TIME(T_v)$ is equal to the number of edges tested during the generation of $T_v$. Thus,
{
\begin{eqnarray*}
&&\frac{\sum_{v}\sum_{T_v \in \Re_v}\mathrm{Pr}[T_v] \cdot TIME(T_v)}{n}\\
&=&\frac{\sum_{v}\sum_{T_v \in \Re_v}\mathrm{Pr}[T_v] \cdot \sum_{(u^*,v^*)\in E}x(\{u^*\},T_v)}{n}\\
&=&\frac{\sum_{(u^*,v^*)\in E}\sum_{v}\sum_{T_v \in \Re_v}\mathrm{Pr}[T_v] \cdot x(\{u^*\},T_v)}{n}\\
&=&\frac{\sum_{(u^*,v^*)\in E}f(\{u^{*}\})}{n}\\
&\leq&\frac{m \cdot OPT_1}{n}
\end{eqnarray*} 
}
\end{proof}
\ifCLASSOPTIONcompsoc
  \section*{Acknowledgments}
  
\else
  \section*{Acknowledgment}
\fi

This work was supported by National Natural Science Foundation of China under 61472272.

\ifCLASSOPTIONcaptionsoff
  \newpage
\fi



%
\bibliographystyle{IEEEtran}
\bibliography{sigproc}

%

\begin{IEEEbiography}[{\includegraphics[width=1in,clip,keepaspectratio]{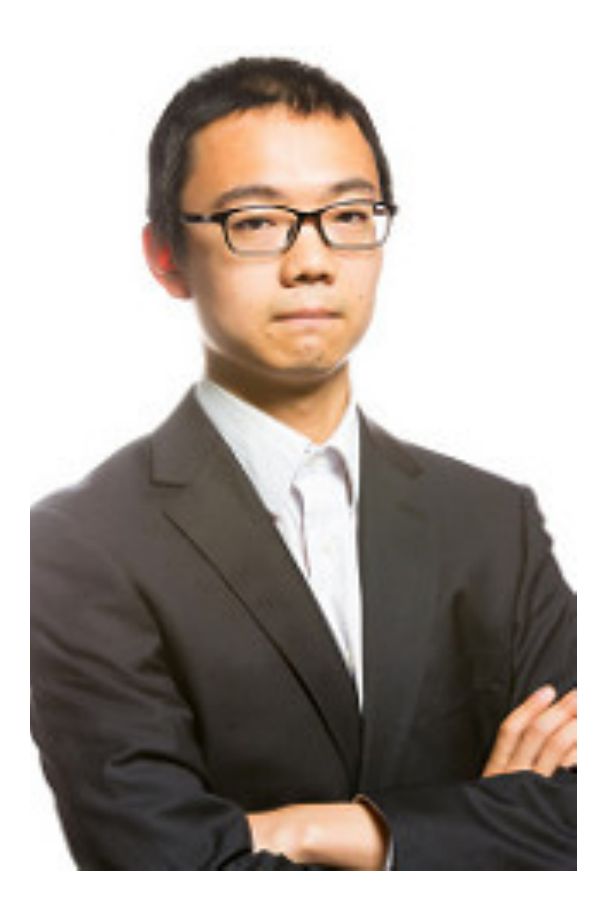}}]{Guang-mo Tong} is a Ph.D candidate in the Department of Computer Science at the University of Texas at Dallas. He received his BS degree in Mathematics and Applied Mathematics from Beijing Institute of Technology in July 2013. His research interests include computational social systems, data communication and real-time systems. He is a student member of the IEEE.
\end{IEEEbiography}

\begin{IEEEbiography}[{\includegraphics[width=1in,clip,keepaspectratio]{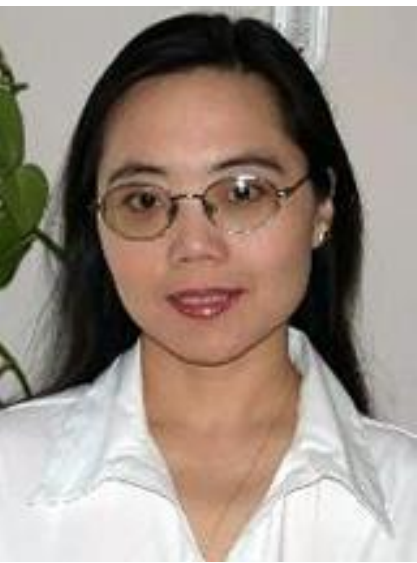}}]{Weili Wu}(M’00) is currently a Full Professor with the Department of Computer Science, University of Texas at Dallas, Dallas, TX, USA. She received the Ph.D. and M.S. degrees from the Department of Computer Science, University of Minnesota, Minneapolis, MN, USA, in 2002 and 1998, respectively. Her research mainly deals in the general research area of data communication and data management. Her research focuses on the design and analysis of algorithms for optimization problems that occur in wireless networking environments and various
database systems.
\end{IEEEbiography}

\begin{IEEEbiographynophoto}{Ling Guo} is a lecturer of the School of Information Science and Technology, Northwest University, China. He received his B.S. degree in Computer Science from Shaanxi Normal University, Xi'an, China and his PH.D degree in Renmin University of China, Beijing. From September 2015 to August 2016, he was a union Ph.D student under the supervision of Prof. Dingzhu, Du in the University of Texas at Dallas. His research interests include wireless networks and approximate algorithms. 
\end{IEEEbiographynophoto}

\begin{IEEEbiography}[{\includegraphics[width=1in,clip,keepaspectratio]{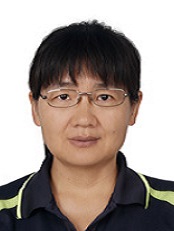}}]{Deying Li} is a professor of Renmin University of China. She received the B.S. degree and M.S. degree in Mathematics from Huazhong Normal University, China, in 1985 and 1988 respectively. She obtained the PhD degree in Computer Science from City University of Hong Kong in 2004. Her research interests include wireless networks, ad hoc and sensor networks mobile computing, distributed network system, Social Networks, and Algorithm Design etc.
\end{IEEEbiography}

\begin{IEEEbiography}[{\includegraphics[width=1in,clip,keepaspectratio]{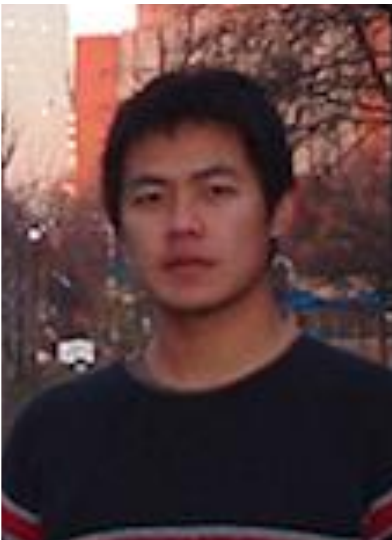}}]{Cong Liu} received the PhD degree in computer science from the University of North Carolina at Chapel Hill, in Jul. 2013. He is an assistant professor in the Department of Computer Science, University of Texas at Dallas. His research interests include real-time systems and GPGPU. He has published more than 30 papers in premier conferences and journals. He received the Best Paper Award at the 30th IEEE RTSS and the 17th RTCSA. He is a member of the IEEE.
\end{IEEEbiography}

\begin{IEEEbiography}[{\includegraphics[width=1in,clip,keepaspectratio]{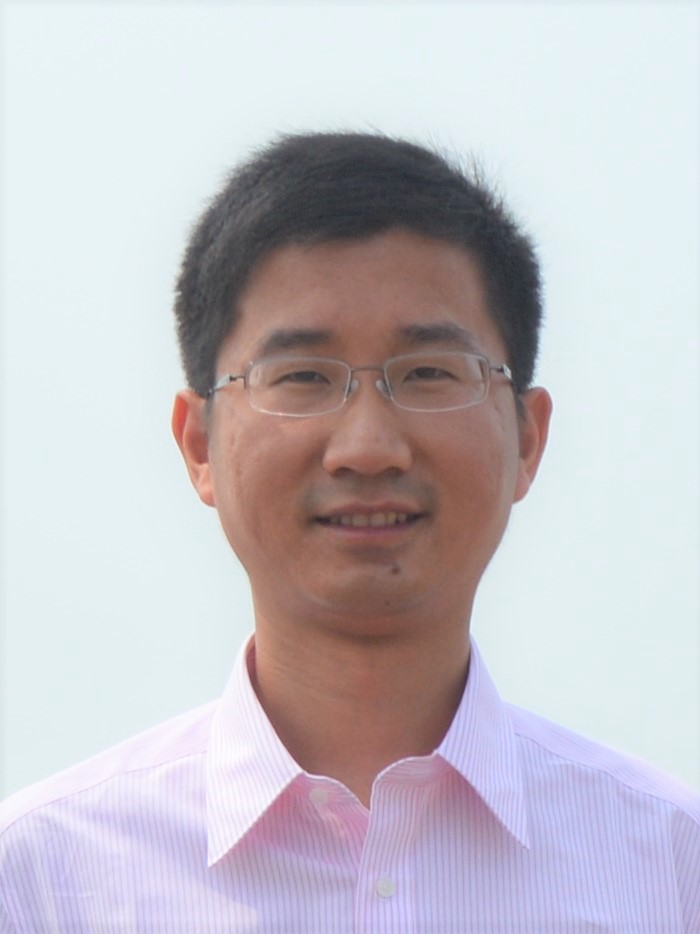}}]{Bin Liu} received the Ph.D. degree in operations research from Shandong University, China, in 2010. He joined the School of Mathematical Sciences, Ocean University of China, in 2010. His research interests include graph theory, networks, combinatorial optimization, approximation algorithm, etc.
\end{IEEEbiography}

\begin{IEEEbiography}[{\includegraphics[width=1in,clip,keepaspectratio]{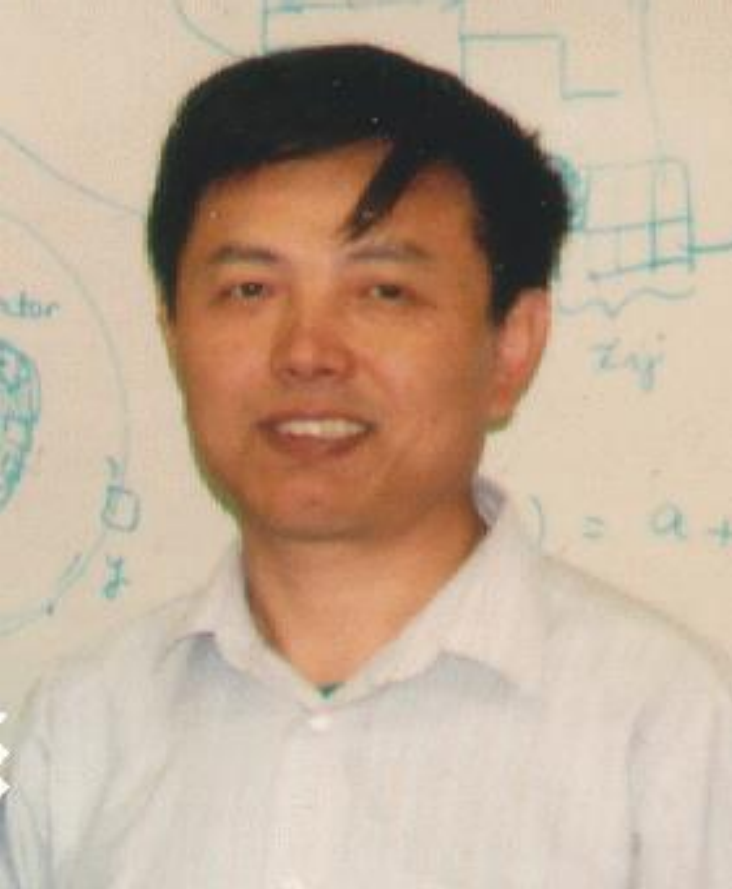}}]{Ding-Zhu Du} received the M.S. degree from the Chinese Academy of Sciences in 1982 and the Ph.D. degree from the University of California at Santa Barbara in 1985, under the supervision of Professor Ronald V. Book. Before settling at the University of Texas at Dallas, he worked as a professor in the Department of Computer Science and Engineering, University of Minnesota. He also worked at the Mathematical Sciences Research Institute, Berkeley, for one year, in the Department of Mathematics, Massachusetts Institute of Technology, for one year, and in the Department of Computer Science, Princeton University, for one and a half years. He is the editor-in-chief of the Journal of Combinatorial Optimization and is also on the editorial boards for several other journals. Forty Ph.D. students have graduated under his supervision. He is a member of the IEEE
\end{IEEEbiography}


\vfill


\end{document}